\def\R {\ensuremath{\mathbb{R}}}
\def\cont{\mathop{\rm cont}\nolimits}
\def\prim{\mathop{\rm prim}\nolimits}
\def\disc{\mathop{\rm disc}\nolimits}
\def\ldcf{\mathop{\rm ldcf}\nolimits}
\def\coeff{\mathop{\rm coeff}\nolimits}
\def\res{\mathop{\rm res}\nolimits}
\journal{Journal of Symbolic Computation}
\begin{document}

\begin{frontmatter}

\title{Truth Table Invariant \\ Cylindrical Algebraic Decomposition}

\thanks{This work was supported by EPSRC grant EP/J003247/1.}

\author[Bath]{Russell~Bradford},
\ead{R.J.Bradford@bath.ac.uk}
\author[Bath]{James~H.~Davenport},
\ead{J.H.Davenport@bath.ac.uk}
\author[Coventry]{Matthew~England},
\ead{Matthew.England@coventry.ac.uk}
\author[Macquarie]{Scott~McCallum} and
\ead{Scott.McCallum@mq.edu.au}
\author[Bath]{David~Wilson}
\ead{David.John.Wilson@me.com}

\address[Bath]{Department of Computer Science, University of Bath, Bath, BA2 7AY, UK}
\address[Coventry]{School of Computing, Electronics and Maths, Faculty of Engineering, Environment and Computing, Coventry University, Coventry, CV1 5FB, UK}
\address[Macquarie]{Department of Computing, Macquarie University, NSW 2109, Australia}

\begin{abstract}

When using cylindrical algebraic decomposition (CAD) to solve a problem with respect to a set of polynomials, it is likely not the signs of those polynomials that are of paramount importance but rather the truth values of certain quantifier free formulae involving them. This observation motivates our article and definition of a Truth Table Invariant CAD (TTICAD).

In ISSAC 2013 the current authors presented an algorithm that can efficiently and directly construct a TTICAD for a list of formulae in which each has an equational constraint.  This was achieved by generalising McCallum's theory of reduced projection operators.  In this paper we present an extended version of our theory which can be applied to an arbitrary list of formulae, achieving savings if at least one has an equational constraint.  We also explain how the theory of reduced projection operators can allow for further improvements to the lifting phase of CAD algorithms, even in the context of a single equational constraint.  

The algorithm is implemented fully in {\sc Maple} and we present both promising results from experimentation and a complexity analysis showing the benefits of our contributions.  
\end{abstract}

\begin{keyword}
cylindrical algebraic decomposition \sep equational constraint
\MSC[2010] 68W30 \sep 03C10
\end{keyword}

\end{frontmatter}

\newpage

\section{Introduction}
\label{sec:Intro}

A \emph{cylindrical algebraic decomposition} (CAD) is a decomposition of $\R^n$ into cells arranged cylindrically (meaning the projections of any pair of cells are either equal or disjoint) each of which is (semi-)algebraic (describable using polynomial relations).  
CAD is a key tool in real algebraic geometry, offering a method for quantifier elimination in real closed fields.  Applications include
the derivation of optimal numerical schemes \citep{EH14},
parametric optimisation \citep{FPM05}, 
robot motion planning \citep{SS83II}, 
epidemic modelling \citep{BENW06}, 
theorem proving \citep{Paulson2012} 
and programming with complex functions \citep{DBEW12}.

Traditionally CADs are produced \emph{sign-invariant} to a given set of polynomials, (the signs of the polynomials do not vary within each cell).  However, this gives far more information than required for most applications.  Usually a more appropriate object is a \emph{truth-invariant} CAD (the truth of a logical formula does not vary within cells).

In this paper we generalise to define \emph{truth table invariant} CADs (the truth values of a list of quantifier-free formulae do not vary within cells) and give an algorithm to compute these directly.  This can be a tool to efficiently produce a truth-invariant CAD for a parent formula (built from the input list), or indeed the required object for solving a problem involving the input list.  Examples of both such uses are provided following the formal definition in Section \ref{subsec:TTICAD}.
We continue the introduction with some background on CAD, before defining our object of study and introducing some examples to demonstrate our ideas which we will return to throughout the paper.  We then conclude the introduction by clarifying the contributions and plan of this paper.

\subsection{Background on CAD}
\label{subsec:Background}

A \emph{Tarski formula} $F(x_1,\ldots,x_n)$ is a Bool\-ean combination ($\land,\lor,\neg,\rightarrow$) of statements about the signs, ($=0,>0,<0$, but therefore $\ne0,\ge0,\le0$ as well), of certain polynomials $f_i(x_1,\ldots,x_n)$ with integer coefficients.  Such statements may involve the universal or existential quantifiers ($\forall,  \exists$).  We denote by QFF a \emph{quantifier-free Tarski formula}.

Given a quantified Tarski formula
\begin{equation}
\label{eqn:QT}
Q_{k+1}x_{k+1}\ldots Q_nx_n F(x_1,\ldots,x_n)
\end{equation}
(where $Q_i\in\{\forall,\exists\}$ and $F$ is a QFF) the \emph{quantifier elimination problem} is to produce $\psi(x_1,\ldots,x_k)$, an equivalent QFF to \eqref{eqn:QT}.

Collins developed CAD as a tool for quantifier elimination over the reals.  He proposed to decompose $\R^n$ cylindrically such that each cell was sign-invariant for all polynomials $f_i$ used to define $F$. Then $\psi$ would be the disjunction of the defining formulae of those cells $c_i$ in $\R^k$ such that (\ref{eqn:QT}) was true over the whole of $c_i$, which due to sign-invariance is the same as saying that (\ref{eqn:QT}) is true at any one \emph{sample point} of $c_i$.

A complete description of Collins' original algorithm is given by \cite{ACM84I}.
The first phase, \emph{projection}, applies a projection operator repeatedly to a set of polynomials, each time producing another set in one fewer variables.  Together these sets contain the \emph{projection polynomials}.
These are used in the second phase, \emph{lifting}, to build the CAD incrementally.  
First $\R$ is decomposed into cells which are points and intervals corresponding to the real roots of the univariate polynomials.  Then $\R^2$ is decomposed by repeating the process over each cell in $\R$ using the bivariate polynomials at a sample point.  Over each cell there are {\em sections} (where a polynomial vanishes) and {\em sectors} (the regions between) which together form a \emph{stack}.  Taking the union of these stacks gives the CAD of $\R^2$.  This is repeated until a CAD of $\R^n$ is produced.  At each stage the cells are represented by (at least) a sample point and an index: a list of integers corresponding to the ordered roots of the projection polynomials which locates the cell in the CAD.  

To conclude that a CAD produced in this way is sign-invariant we need delineability.  A polynomial is \emph{delineable} in a cell if the portion of its zero set in the cell consists of disjoint sections.  A set of polynomials are \emph{delineable} in a cell if each is delineable and the sections of different polynomials in the cell are either identical or disjoint.  The projection operator used must be defined so that over each cell of a sign-invariant CAD for projection polynomials in $r$ variables (the word \emph{over} meaning we are now talking about an $(r+1)$-dim space) the polynomials in $r+1$ variables are delineable.

The output of this and subsequent CAD algorithms (including the one presented in this paper) depends heavily on the variable ordering.  We usually work with polynomials in $\mathbb{Z}[{\bf x}]=\mathbb{Z}[x_1,\ldots,x_n]$ with the variables, ${\bf x}$, in ascending order (so we first project with respect to $x_n$ and continue to reach  univariate polynomials in $x_1$).  The \emph{main variable} of a polynomial (${\rm mvar}$) is the greatest variable present with respect to the ordering.

CAD has doubly exponential complexity in the number of variables \citep{BD07,DH88}.  There now exist algorithms with better complexity for some CAD applications (see for example \cite{BPR96}) but CAD implementations often remain the best general purpose approach. There have been many developments to the theory since Collin's treatment, including the following:
\begin{itemize}
\item Improvements to the projection operator \citep{Hong1990, McCallum1988, McCallum1998, Brown2001a, HDX14}, reducing the number of projection polynomials computed.
\item Algorithms to identify the adjacency of cells in a CAD \citep{ACM84II, ACM88} and following from this the idea of clustering \citep{Arnon1988} to minimise the lifting.
\item Partial CAD, introduced by \cite{CH91}, where the structure of $F$ is used to lift less of the decomposition of $\R^k$ to $\R^n$, if it is sufficient to deduce $\psi$.
\item The theory of equational constraints, \citep{McCallum1999,McCallum2001,BM05}, also aiming to deduce $\psi$ itself, this time using more efficient projections.
\item The use of certified numerics in the lifting phase to minimise the amount of symbolic computation required \citep{Strzebonski2006, IYAY09}.
\item New approaches which break with the normal projection and lifting model: local projection \citep{Strzebonski2014a}, the building of single CAD cells \citep{Brown2013, JdM12} and CAD via Triangular Decomposition \citep{CMXY09}.  The latter is now used for the CAD command built into \textsc{Maple}, and works by first creating a cylindrical decomposition of complex space.  
\end{itemize}

\subsection{TTICAD}
\label{subsec:TTICAD}

\cite{Brown1998} defined a \emph{truth-invariant CAD} as one for which a formula had invariant truth value on each cell.  Given a QFF, a sign-invariant CAD for the defining polynomials is trivially truth-invariant.  Brown considered the refinement of sign-invariant CADs whilst maintaining truth-invariance, while some of the developments listed above can be viewed as methods to produce truth-invariant CADs directly.
We define a new but related type of CAD, the topic of this paper.

\begin{defn}
Let $\{ \phi_i\}_{i=1}^t$ refer to a list of QFFs.
We say a cylindrical algebraic decomposition $\mathcal{D}$ is a {\em Truth Table Invariant} CAD for the QFFs (TTICAD) if the Boolean value of each $\phi_i$ is constant (either true or false) on each cell of $\mathcal{D}$.
\end{defn}

A sign-invariant CAD for all polynomials occurring in a list of formulae would clearly be a TTICAD for the list.  However, we aim to produce smaller TTICADs for many such lists.  We will achieve this by utilising the presence of equational constraints, a technique first suggested by \cite{Collins1998} with key theory developed by \cite{McCallum1999}.

\begin{defn}
Suppose some quantified formula is given:
\begin{equation*}
\phi^* = (Q_{k+1} x_{k+1})\cdots(Q_n x_n) \phi({\bf x})
\end{equation*}
where the $Q_i$ are quantifiers and $\phi$ is quantifier free.
An equation $f=0$ is an {\em equational constraint} (EC) of $\phi^*$ if $f=0$ is logically implied by $\phi$ (the quantifier-free part of $\phi^*$).
Such a constraint may be either explicit (an atom of the formula) or otherwise implicit.
\end{defn}

In Sections \ref{sec:TTIProj} and \ref{sec:Algorithm} we will describe how TTICADs can be produced efficiently when there are ECs present in the list of formulae.
There are two reasons to use this theory.
\begin{enumerate}

\item \emph{As a tool to build a truth-invariant CAD efficiently:}  If a parent formula $\phi^{*}$ is built from the formulae $\{\phi_i\}$ then any TTICAD for $\{\phi_i\}$ is also truth-invariant for $\phi^{*}$.

We note that for such a formula a TTICAD may need to contain more cells than a truth-invariant CAD.  For example, consider a cell in a truth-invariant CAD for
$\phi^{*} = \phi_1 \lor \phi_2$ within which $\phi_1$ is always true.  If $\phi_2$ changed truth value in such a cell then it would need to be split in order to achieve a TTICAD, but this is unnecessary for a truth-invariant CAD of $\phi^*$.

Nevertheless, we find that our TTICAD theory is often able to produce smaller truth-invariant CADs than any other available approach.  We demonstrate the savings offered via worked examples introduced in the next subsection.

\item \emph{When given a problem for which truth table invariance is required:}  That is, a problem for which the list of formulae are not derived from a larger parent formula and thus a truth-invariant CAD for their disjunction may not suffice.

For example, decomposing complex space according to a set of branch cuts for the purpose of algebraic simplification \citep{BD02, PBD10}.  Here the idea is to represent each branch cut as a semi-algebraic set to give input admissible to CAD, (recent progress on this has been described by \cite{EBDW13}).  Then a TTICAD for the list of formulae these sets define provides the necessary decomposition.  Example \ref{ex:Kahan} is from this class.

\end{enumerate}

\subsection{Worked examples}
\label{subsec:WE1}

To demonstrate our ideas we will provide details for two worked examples.  Assume we have the variable ordering $x \prec y$ (meaning 1-dimensional CADs are with respect to $x$) and consider the following polynomials, graphed in Figure \ref{fig:WE1}.
\begin{align*}
f_1 := x^2+y^2-1 \qquad \qquad \qquad & g_1 := xy - \tfrac{1}{4} \\
f_2 := (x-4)^2+(y-1)^2-1  \quad & g_2 := (x-4)(y-1) - \tfrac{1}{4}
\end{align*}
Suppose we wish to find the regions of $\R{}^2$ where the following formula is true:
\begin{equation}
\label{eqn:ExPhi}
\Phi:= \left(f_1 = 0 \land g_1 < 0 \right)\lor \left( f_2 = 0 \land g_2 < 0   \right).
\end{equation}
Both \textsc{Qepcad} \citep{Brown2003a} and \textsc{Maple} 16 \citep{CMXY09} produce a sign-invariant CAD for the polynomials with 317 cells.  Then by testing the sample point from each region we can systematically identify where the formula is true.

\begin{figure}[t]
\caption{The polynomials from the worked examples of Section~\ref{subsec:WE1}.  The solid curves are $f_1$ and $g_1$ while the dashed curves are $f_2$ and $g_2$.}
\label{fig:WE1}
\begin{center}
\includegraphics[scale=0.4]{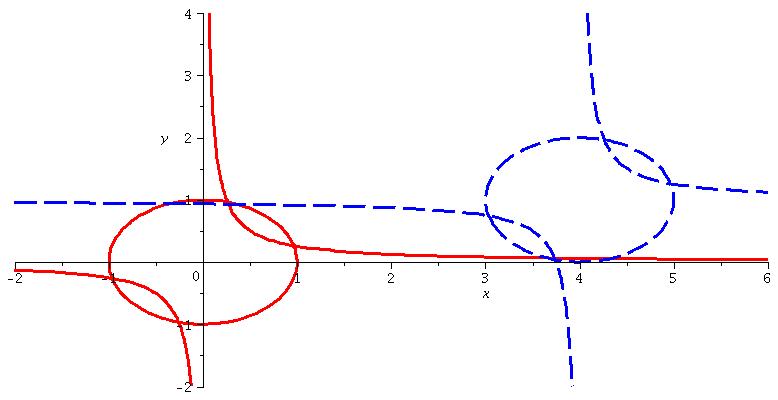}
\end{center}
\end{figure}

At first glance it seems that the theory of ECs is not applicable to $\Phi$ as neither $f_1 = 0$ nor $f_2 = 0$ is logically implied by $\Phi$.  However, while there is no explicit EC we can observe that $f_1f_2 = 0$ is an {\em implicit} constraint of $\Phi$.  Using \textsc{Qepcad} with this declared (an implementation of \citep{McCallum1999}) gives a CAD with 249 cells.  Later, in Section \ref{subsec:WE2} we demonstrate how a TTICAD with 105 cells can be produced.

We also consider the related problem of identifying where
\begin{equation}
\label{eqn:ExPsi}
\Psi:= \left(f_1 = 0 \land g_1 < 0 \right) \lor \left( f_2 > 0 \land g_2 < 0   \right)
\end{equation}
is true.  As above, we could use a sign-invariant CAD with 317 cells, but this time there is no implicit EC.  In Section \ref{subsec:WE2} we produce a TTICAD with 183 cells.

\subsection{Contributions and plan of the paper}
\label{subsec:Plan}

We review the projection operators of \cite{McCallum1998, McCallum1999} in Section \ref{sec:ExistingProj}.  The former produces sign-invariant CADs\footnote{Actually order-invariant CADs (see Definition \ref{def:OI}).}and the latter CADs truth-invariant for a formula with an EC.  The review is necessary since we use some of this theory to verify our new algorithm.  It also allows us to compare our new contribution to these existing approaches.  For this purpose we provide new complexity analyses of these existing theories in Section \ref{subsec:CA1}.

Sections \ref{sec:TTIProj} and \ref{sec:Algorithm} present our new TTICAD projection operator and verified algorithm.  They follow Sections 2 and 3 of our ISSAC 2013 paper \citep{BDEMW13}, but instead of requiring all QFFs to have an EC the theory here is applicable to all QFFs (producing savings so long as one has an EC).  The strengthening of the theory means that a TTICAD can now be produced for $\Psi$ in Section \ref{subsec:WE1} as well as $\Phi$. 
This extension is important since it means TTICAD theory now applied to cases where there can be no overall implicit EC for a parent formula.  In these cases the existing theory of ECs is not applicable and so the comparative benefits offered by TTICAD are even higher.

In Section \ref{sec:ImprovedLifting} we discuss how the theory of reduced projection operators also allows for improvements in the lifting phase.  This is true for the existing theory also but the discovery was only made during the development of TTICAD.  
In Section \ref{sec:CA} we present a complexity analysis of our new contributions from Sections \ref{sec:TTIProj} $-$ \ref{sec:ImprovedLifting}, demonstrating their benefit over the existing theory from Section \ref{sec:ExistingProj}.
We have implemented the new ideas in a \textsc{Maple} package, discussed in Section \ref{sec:Implementation}.   In particular, Section \ref{sec:Formulation} summarises \citep{BDEW13} on the choices required when using TTICAD and heuristics to help.
Experimental results for our implementation (extending those in our ISSAC 2013 paper) are given in Section \ref{sec:Experiment}, before we finish in Section \ref{sec:Conclusion} with conclusions and future work.

\textbf{Data access statement:} Data directly supporing this paper (code, \textsc{Maple} and \textsc{Qepcad} input) is openly available from \texttt{http://dx.doi.org/10.15125/BATH-00076}.

\section{Existing CAD projection operators}
\label{sec:ExistingProj}

\subsection{Review: Sign-invariant CAD}
\label{subsec:SI}

Throughout the paper we let $\cont, \prim, \disc, \coeff$ and $\ldcf$ denote the content, primitive part, discriminant, coefficients and leading coefficient of polynomials respectively (in each case taken with respect to a given main variable).  Similarly, we let $\res$ denote the resultant of a pair of polynomials.  When applied to a set of polynomials we interpret these as producing sets of polynomials, so for example
\[
\res(A)=\left\{\res(f_i,f_j) \, | \, f_i \in A, f_j \in A, f_j \neq f_i \right\}.
\]

The first improvements to Collins original projection operator were given by \cite{McCallum1988} and \cite{Hong1990}.  They were both subsets of Collins operator, meaning fewer projection polynomials, fewer cells in the CADs produced and quicker computation time.  McCallum's is actually a strict subset of Hong's, however, it cannot be guaranteed correct (incorrectness is detected in the lifting process) for a certain class of (statistically rare) input polynomials, where Hong's can.  

Additional improvements have been suggested by \cite{Brown2001a} and \cite{Lazard1994}.  The former required changes to the lifting phase while the latter had a flawed proof of validity (with current unpublished work suggesting it can still be safely used in many cases).  
In this paper we will focus on McCallum's operators, noting that the alternatives could likely be extended to TTICAD theories too if desired.
McCallum's theory is based around the following condition, which implies sign-invariance.
\begin{defn}
\label{def:OI}
A CAD is \emph{order-invariant} with respect to a set of polynomials if each polynomial has constant order of vanishing within each cell.
\end{defn}

Recall that a set $A \subset \mathbb{Z}[{\bf x}]$ is an \emph{irreducible basis} if the elements of $A$ are of positive degree in the main variable, irreducible and pairwise relatively prime.  Let $A$ be a set of polynomials and $B$ an irreducible basis of the primitive part of $A$. Then
\begin{equation}
\label{eq:P}
P(A):=\cont(A) \cup \coeff(B) \cup \disc(B) \cup \res(B)
\end{equation}
defines the operator of \cite{McCallum1988}.  We can assume some trivial simplifications such as the removal of constants and exclusion of entries identical to a previous one (up to constant multiple).
The main theorem underlying the use of $P$ follows.
\begin{thm}[\cite{McCallum1998}]
\label{thm:McC1}
Let $A$ be an irreducible basis in $\mathbb{Z}[{\bf x}]$ and let $S$ be a connected submanifold of $\mathbb{R}^{n-1}$. Suppose each element of $P(A)$  is order-invariant in $S$.

Then each element of $A$ either vanishes identically on $S$ or is analytic delineable on $S$, (a slight variant on traditional delineability, see \citep{McCallum1998}). Further, the sections of $A$ not identically vanishing are pairwise disjoint, and each element of $A$ not identically vanishing is order-invariant in such sections.
\end{thm}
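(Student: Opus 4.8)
The statement is essentially the main projection theorem of McCallum (1998). Since the excerpt permits me to assume earlier results, I will proceed as McCallum does, reducing everything to the behaviour of a single polynomial and the key structural facts that the resultants and discriminants in $P(A)$ guarantee. First I would fix an element $f \in A$ and dispose of the trivial case: if $f$ vanishes identically on $S$ there is nothing to prove for that element, so assume $f$ does not vanish identically on $S$. The heart of the argument is then to show $f$ is analytic delineable on $S$ and order-invariant in its sections, and finally that sections of distinct non-vanishing elements are disjoint.

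\textbf{Step 1: order-invariance of the leading coefficient and discriminant forces a constant-degree, squarefree picture.} Since $\coeff(B) \subseteq P(A)$ and each coefficient of $f$ (with respect to the main variable $x_n$) is order-invariant on $S$, in particular each is either identically zero on $S$ or nowhere zero on $S$; hence $f$ has constant degree $m$ in $x_n$ throughout $S$, and after discarding the identically-zero top coefficients we may assume the genuine leading coefficient is nonvanishing on $S$. Likewise $\disc(f) \in \disc(B) \subseteq P(A)$ is order-invariant on $S$; because $f$ does not vanish identically, $\disc(f)$ is not identically zero on $S$ (here one uses that $f$ is an element of an irreducible basis, so it is primitive and squarefree, and a nonzero squarefree polynomial has nonzero discriminant), hence $\disc(f)$ is nowhere zero on $S$. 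So over every point of $S$ the univariate polynomial $f$ has exactly $m$ distinct complex roots, none of which is a multiple root.

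\textbf{Step 2: delineability via continuity of roots and the constancy of the real-root count.} With constant degree and no multiple roots over $S$, the complex roots of $f$ vary analytically (by the implicit function theorem applied to $f$, whose $x_n$-partial derivative is nonzero at each root since there are no multiple roots) as $S$ is connected; the number of real roots is locally constant, hence constant on $S$, and the real root functions $\theta_1(s) < \cdots < \theta_\ell(s)$ are continuous, indeed analytic, and never collide. This is precisely analytic delineability of $f$ on $S$; I would cite the variant definition from McCallum (1998) as flagged in the statement. Order-invariance of $f$ in each section then follows because along a section $x_n = \theta_j(s)$ the order of vanishing of $f$ is governed by the multiplicity of $\theta_j(s)$ as a root, which is identically $1$ by Step 1; a short argument (again from McCallum, using that the section is an analytic submanifold and $f$ restricted near it is controlled by $\partial f/\partial x_n \neq 0$) upgrades this to genuine order-invariance on the section as a submanifold of $\mathbb{R}^n$.

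\textbf{Step 3: disjointness of sections of different elements, via the resultants.} Let $f, g$ be distinct elements of $A$, neither vanishing identically on $S$. Since $B$ is an irreducible basis its elements are pairwise relatively prime, so $\res(f,g) \in \res(B) \subseteq P(A)$ is a nonzero polynomial, order-invariant on $S$, and not identically zero on $S$ (it would have to be, were two of its sections to meet, but one must check it is genuinely nonzero as an element of the ring — this is where irreducibility and coprimality of the basis are used). Hence $\res(f,g)$ is nowhere zero on $S$, so $f$ and $g$ share no common root over any point of $S$, i.e.\ their sections are pairwise disjoint. Combining Steps 1--3 over all elements of $A$ gives the theorem.

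\textbf{Anticipated main obstacle.} The routine parts are the coefficient/discriminant/resultant bookkeeping. The genuinely delicate step — and the one that makes this McCallum's theorem rather than a folklore fact — is upgrading ``$f$ has constant multiplicity one as a univariate root along each section'' to ``$f$ is \emph{order}-invariant as a multivariate polynomial on that section viewed as a submanifold of $\mathbb{R}^n$'': the section is a lower-dimensional analytic manifold and one must control all partial derivatives of $f$ transverse to it, not merely $\partial f/\partial x_n$. This is exactly the point at which the ``analytic'' refinement of delineability and the order-invariance (rather than mere sign-invariance) of the hypotheses on $P(A)$ are indispensable, and I would lean on the corresponding lemmas of McCallum (1998) rather than reprove them.
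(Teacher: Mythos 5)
The paper offers no proof of this statement at all --- it is quoted directly from McCallum (1998) --- so the only question is whether your sketch would reconstruct McCallum's argument, and it would not: the central inference is invalid. In Step 1 you argue that since $f$ does not vanish identically on $S$, $\disc(f)$ is not identically zero \emph{on $S$}, and hence by order-invariance is nowhere zero on $S$. The first implication is false: $\disc(f)$ is indeed a nonzero polynomial in $x_1,\ldots,x_{n-1}$ (as $f$ is irreducible), but the connected submanifold $S$ may lie entirely inside its zero set, in which case order-invariance only says that its order of vanishing is a constant positive integer along $S$. For example, with $n=3$, $f=x_3^2-x_1$ and $S=\{x_1=0\}\subset\R^2$, $f$ does not vanish identically over $S$, yet $\disc(f)=4x_1$ vanishes at every point of $S$ (with constant order $1$, so the hypotheses of the theorem hold). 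The same flawed inference is applied to $\res(f,g)$ in Step 3. This is not a removable edge case: the entire point of McCallum's operator, and the reason the hypothesis is \emph{order}-invariance of $P(A)$ rather than sign-invariance, is to handle lower-dimensional cells $S$ contained in the varieties of the discriminants, resultants and coefficients. Your Steps 2 and 3 (constant number of simple roots, implicit function theorem, strict disjointness) therefore only treat the classical case where those polynomials do not vanish on $S$, which misses the substance of the theorem. Moreover, the strict disjointness you assert in Step 3 can genuinely fail under the stated hypotheses: for $f=x_3^2-x_1$, $g=x_3^2-2x_1$, $S=\{x_1=0\}$, every element of $P(\{f,g\})$ is order-invariant in $S$ (the resultant is $x_1^2$, of constant order $2$), yet $f$ and $g$ have the \emph{same} section $x_3=0$ over $S$; the conclusion must be read as ``distinct sections are disjoint'', i.e.\ sections of different elements are identical or disjoint, exactly as in the paper's definition of delineability for a set of polynomials, and no argument based on $\res(f,g)$ being nonvanishing on $S$ can be correct.

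McCallum's proof has a different architecture, which your sketch cannot reach from the nowhere-vanishing reductions above. The core is a single-polynomial theorem: if $f$ is degree-invariant and not identically zero on $S$, and $\disc(f)$ is order-invariant in $S$ (it is allowed to vanish identically there), then $f$ is analytic delineable on $S$ and order-invariant in its sections. Its proof requires genuinely analytic input (Weierstrass-preparation-style arguments and a theorem of Zariski on multiplicity), not merely the implicit function theorem; this is the deep step you defer to in your final paragraph, but your earlier reductions bypass precisely the situations it exists to handle. The full statement for the set $A$ is then obtained by applying that theorem to the \emph{product} of the elements of $A$ not vanishing identically on $S$: since the order of a product of analytic functions is the sum of the orders, order-invariance of the coefficients, discriminants and pairwise resultants collected in $P(A)$ delivers degree-invariance of the product and order-invariance of its discriminant $\disc(\prod_i f_i)=\prod_i\disc(f_i)\prod_{i<j}\res(f_i,f_j)^2$, and analytic delineability of the product yields the common section structure, the disjointness statement, and the order-invariance of each factor in the sections. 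If you intend to lean on McCallum's lemmas, these are the ones to invoke, and your Steps 1--3 would need to be rebuilt around them rather than around nonvanishing of the discriminant and resultant on $S$.
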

Theorem \ref{thm:McC1} means that we can use $P$ in place of Collins' projection operator to produce sign-invariant CADs so long as none of the projection polynomials with main variable $x_k$ vanishes on a cell of the CAD of $\R^{k-1}$; a condition that can be checked when lifting.  Input with this property is known as \emph{well-oriented}.  Note that although McCallum's operator produces order-invariant CADs, a stronger property than sign-invariance, it is actually more efficient that the pre-existing sign-invariant operators. We examine the complexity of CAD using this operator in Section \ref{subsec:CA1}.

\subsection{Review: CAD invariant with respect to an equational constraint}
\label{subsec:EC}

The main result underlying CAD simplification in the presence of an EC follows.
\begin{thm}[\cite{McCallum1999}]
\label{thm:McC2}
Let $f({\bf x}), g({\bf x})$ be integral polynomials with positive degree in $x_n$,
let $r(x_1,\ldots,x_{n-1})$ be their resultant, and suppose $r \neq 0$.
Let $S$ be a connected subset of $\mathbb{R}^{n-1}$
such that $f$ is delineable on $S$ and $r$ is order-invariant in $S$.

Then $g$ is {\em sign-invariant} in every section of $f$ over $S$.
\end{thm}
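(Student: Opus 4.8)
The plan is to fix one section $\sigma$ of $f$ over $S$ and prove that $g$ has constant sign on it. Since $f$ is (analytically) delineable on $S$, the section $\sigma$ is the graph of an analytic function $\theta\colon S\to\mathbb{R}$ with $f(s,\theta(s))=0$ for every $s\in S$. Set $h\colon S\to\mathbb{R}$, $h(s)=g(s,\theta(s))$; this is analytic and $S$ is connected, so it suffices to establish the \emph{local} statement: if $h(s_0)=0$ for some $s_0\in S$, then $h\equiv 0$ on a neighbourhood of $s_0$ in $S$. Granting this, the set $\{h=0\}$ is both open and closed in $S$, hence empty or all of $S$; if it is empty then $h$ has constant sign by the intermediate value theorem, and if it is all of $S$ then $g$ vanishes identically on $\sigma$. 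In every case $g$ is sign-invariant on $\sigma$.

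So suppose $h(s_0)=0$. Then $\theta(s_0)$ is a common root of the univariate polynomials $f(s_0,x_n)$ and $g(s_0,x_n)$. Using the fact that the resultant lies in the ideal generated by $f$ and $g$ — there are $u,v\in\mathbb{Z}[\mathbf{x}]$ with $uf+vg=r$ — substituting $s=s_0$ and $x_n=\theta(s_0)$ gives $r(s_0)=0$. Now I invoke order-invariance of $r$ on the connected set $S$: since $r$ vanishes to some positive order at $s_0$, it vanishes to that same order at every point of $S$, so $r\equiv0$ throughout $S$ (note that since $r\neq0$ as a polynomial this forces $S$ to be lower-dimensional, which I expect to be used below).

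It remains to deduce that $h$ vanishes near $s_0$, and this is where I expect the real work. The plan is to reduce to the case $f$ irreducible: $\sigma$ lies in the zero set of a single irreducible factor $f_0$ of $f$, which is itself delineable on $S$, and multiplicativity of the resultant together with order-invariance of $r$ lets one transfer the vanishing to $\res(f_0,g)$ on $S$. Delineability of $f_0$ on $S$ guarantees that $\ldcf(f_0)$ is non-vanishing on $S$ and that near $s_0$ the roots of $f_0(s,x_n)$ organise into analytic functions $\alpha_1(s),\dots,\alpha_d(s)$ — among them $\theta$ — so that $\res(f_0,g)(s)=\ldcf(f_0)(s)^{\deg_{x_n}g}\prod_{j=1}^{d}g(s,\alpha_j(s))$. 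Since the product is identically zero near $s_0$ and the leading-coefficient factor is a unit there, and analytic functions on a connected set have no zero-divisors, some factor $g(\,\cdot\,,\alpha_{j_0}(\,\cdot\,))$ vanishes identically near $s_0$. The crux is then to show one may take $\alpha_{j_0}=\theta$: here the disjointness of the sections of $f_0$ (delineability again), the constancy — not merely nonvanishing — of the order of $r$ on $S$, and irreducibility of $f_0$ (which gives either $f_0\mid g$, forcing $g\equiv0$ on all of $\sigma$ directly, or $\gcd(f_0,g)=1$ over $\mathbb{Q}(x_1,\dots,x_{n-1})$ and hence control on the locus where $\res(f_0,g)$ can vanish) combine to pin the vanishing onto the section $\sigma$. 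Everything else — the reduction to a single section, the Bézout-type identity $uf+vg=r$, and the propagation $r(s_0)=0\Rightarrow r\equiv0$ on $S$ — is routine.
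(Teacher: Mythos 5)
The paper cites this theorem from McCallum (1999) without giving a proof, so there is no in-paper argument to compare against; assessing your proposal directly: the reduction to showing that $\{h=0\}$ is open in $S$, the B\'ezout step giving $r(s_0)=0$, and the use of order-invariance to conclude $r\equiv 0$ throughout $S$ are all fine. The gap is exactly where you flag it, and the bridge you sketch does not hold. The ``no zero-divisors'' step can only tell you that \emph{some} factor $g(\cdot,\alpha_{j_0}(\cdot))$ of the local factorisation of the resultant vanishes identically on $S$ near $s_0$; it does not single out the factor belonging to $\theta$, and that identification is the entire content of the theorem (it is easy for $r$ to vanish on $S$ because $g$ meets a \emph{different} section of $f$ over $S$). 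Your suggestion that irreducibility of $f_0$, via $\gcd(f_0,g)=1$ over $\mathbb{Q}(x_1,\dots,x_{n-1})$, constrains the zero locus of $\res(f_0,g)$ cannot close this: $\gcd=1$ is equivalent to $\res(f_0,g)\neq 0$ as a polynomial, which is already in force yet perfectly compatible with $\res(f_0,g)$ vanishing on all of $S$. Note also that ``analytic functions on a connected set have no zero-divisors'' is false for an arbitrary connected subset of $\mathbb{R}^{n-1}$ (take a V-shaped set), and the theorem only assumes $S$ connected, not a manifold.

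The missing idea is to use order-invariance of $r$ \emph{quantitatively}, not merely to derive $r\equiv 0$ on $S$. Near $s_0$ one has $r = u\cdot\prod_j g(\cdot,\alpha_j(\cdot))$ with $u=\ldcf(f)^{\deg g}$ a unit and each factor real-analytic on a full neighbourhood of $s_0$ in $\mathbb{R}^{n-1}$ (group complex-conjugate roots so that every factor is real). Hence for $s\in S$ near $s_0$ the order of $r$ at $s$ is the sum of the orders of the factors at $s$. Each factor's order is an integer-valued upper-semicontinuous function of $s$, and by hypothesis the sum is constant on $S$; therefore each summand is both upper- and lower-semicontinuous, hence continuous, hence locally constant, hence (by connectedness of $S$) constant near $s_0$. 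Applied to the factor $h=g(\cdot,\theta(\cdot))$, which has order at least $1$ at $s_0$, this gives $h\equiv 0$ on $S$ near $s_0$, completing the open/closed argument. You list ``the constancy --- not merely nonvanishing --- of the order of $r$'' among your ingredients, so you have identified the right hypothesis; the proposal just never extracts this conclusion from it, and the zero-divisor route you take instead cannot.
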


\begin{figure}[ht]
\caption{Graphical representation of Theorem \ref{thm:McC2}.}
\label{fig:Theorem}
\begin{center}
\includegraphics[scale=0.5]{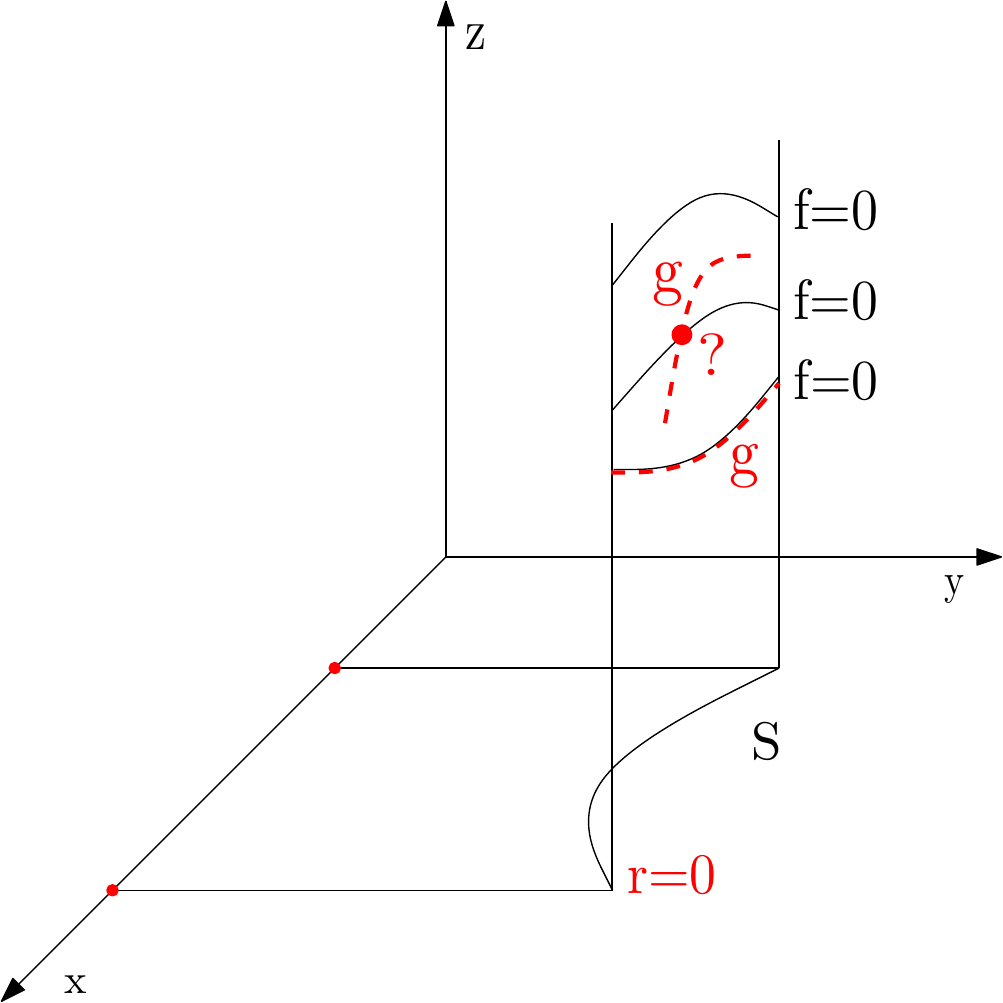}
\end{center}
\end{figure}

Figure \ref{fig:Theorem} gives a graphical representation of the question answered by Theorem \ref{thm:McC2}.
Here we consider polynomials $f(x,y,z)$ and $g(x,y,z)$ of positive degree in $z$ whose resultant $r$
is non-zero, and a connected subset $S \subset \mathbb{R}^2$ in which $r$ is order-invariant.  We further suppose that $f$ is delineable on $S$ (noting that Theorem \ref{thm:McC1} with $n=3$ and $A = \{f\}$ provides sufficient conditions for this).
We ask whether $g$ is sign-invariant in the sections of $f$ over $S$.  Theorem \ref{thm:McC2} answers this question affirmatively:  the real variety of $g$ either aligns with a given section of $f$ exactly (as for the bottom section of $f$ in Figure \ref{fig:Theorem}), or has no intersection with such a section (as for the top).
The situation at the middle section of $f$ cannot happen.

Theorem \ref{thm:McC2} thus suggests a reduction of the projection operator $P$ relative to an EC $f = 0$: take only $P(f)$ together with the resultants of $f$ with the non-ECs.  
Let $A$ be a set of polynomials, $E \subset A$ contain only the polynomial defining the EC, $F$ be a square free basis of $A$, and $B$ be the subset of $F$ which is a square-free basis for $E$.  The operator
\begin{equation}
\label{eq:ECProj}
P_{E}(A) := \cont(A) \cup P(F) \cup \{ {\rm res}_{x_n}(f,g) \mid f \in F, g \in B \setminus F \}
\end{equation}
was presented by \cite{McCallum1999} along with an algorithm to produce a CAD truth-invariant for the EC and sign-invariant for the other polynomials when the EC was satisfied.  It worked by applying first $P_E(A)$ and then building an order-invariant CAD of $\mathbb{R}^{n-1}$ using $P$.  We call such CADs  \emph{invariant with respect to an equational constraint}.  Note that as with \cite{McCallum1999} the algorithm only works for input satisfying a well-orientedness condition.  Full details of the verification are given by \cite{McCallum1999} and a complexity analysis is given in the next subsection.

\subsection{New complexity analyses}
\label{subsec:CA1}

We provide complexity analyses of the algorithms from \cite{McCallum1998, McCallum1999} for comparison with our new contributions later.  An analysis for the latter has not been published before, while the analysis for the former differs substantially from the one in \citep{McCallum1985}:  instead of focusing on computation time, we examine the number of cells in the CAD of $\mathbb{R}^n$ produced: the \emph{cell count}.  We compare the dominant terms in a cell count bound for each algorithm studied.  This focus avoids calculations with less relevant parameters, identical for all the algorithms.  We note that all CAD experimentation shows a strong correlation between the number of cells produced and the computation time.  

Our key parameters are the number of variables $n$, the number of polynomials $m$ and their maximum degree $d$ (in any one variable).  Note that these are all restricted to positive integer values.  We make much use of the following concepts.

\begin{defn}
\label{def:cd}
Consider a set of polynomials $p_j$.  The \textbf{combined degree} of the set is the maximum degree (taken with respect to each variable) of the product of all the polynomials in the set:
$
\textstyle \max_{i} \left( \deg_{x_i}\left( \prod_j p_j \right)\right).
$
\end{defn}
So for example, the set $\{x^2+1, x^2+y^3\}$ has combined degree $4$ (since the product has degree $4$ in $x$ and degree $3$ in $y$).

\begin{defn}[\cite{McCallum1985}]
\label{def:md}
A set of polynomials has the $\bm{(m,d)}$\textbf{-property} if it can be partitioned into $m$ sets, such that each set has maximum combined degree $d$.
\end{defn}
So for example, the set of polynomials 
$
\{xy^3-x, x^4-xy, x^4-y^4+1\}
$
has combined degree $9$ and thus the $(1,9)$-property.   However, by partitioning it into three sets of one polynomial each, it also has the $(3,4)$-property.  Partitioning into 2 sets will show it to have the $(2, 5)$, $(2,7)$ and $(2,8)$-properties also.  

The following result follows simply from the definitions.  
\begin{prop}
\label{Prop:MD1}
If $A$ has the $(m,d)$-property then so does any squarefree basis of $A$.
\end{prop}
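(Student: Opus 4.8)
The plan is to unwind the two definitions and reduce the claim to an elementary statement about degrees of products. Suppose $A$ has the $(m,d)$-property, so there is a partition $A = A_1 \cup \cdots \cup A_m$ with each $A_k$ of combined degree at most $d$; that is, $\deg_{x_i}(\prod_{p \in A_k} p) \le d$ for every variable $x_i$ and every $k$. Let $A'$ be a squarefree basis of $A$. By definition of a squarefree basis, each element of $A'$ divides some element of $A$, and more usefully, every element of $A$ factors as a product of (powers of) elements of $A'$. I would first fix, for each $k$, the sub-basis $A'_k \subseteq A'$ consisting of those basis elements that occur in the factorisation of at least one polynomial in $A_k$; these sets need not be disjoint, but we may refine them to a genuine partition of $A'$ by assigning each basis element to exactly one of the $A'_k$ containing it (discarding it from the others). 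This yields a partition $A' = A'_1 \cup \cdots \cup A'_m$ into at most $m$ parts.

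The key step is then to bound the combined degree of each $A'_k$. Since every element of $A'_k$ divides $\prod_{p \in A_k} p$ (each basis element appears, possibly to a higher power, in the factorisation of some member of $A_k$, hence divides the full product), and since distinct irreducible basis elements are pairwise coprime, the product $\prod_{q \in A'_k} q$ also divides $\prod_{p \in A_k} p$ in $\mathbb{Z}[{\bf x}]$. For a divisor relation $g \mid h$ of polynomials one has $\deg_{x_i}(g) \le \deg_{x_i}(h)$ for every variable $x_i$ (the degree in any one variable is monotone under divisibility, as can be seen by specialising the other variables generically or by a leading-term argument in a suitable monomial order). Applying this with $g = \prod_{q \in A'_k} q$ and $h = \prod_{p \in A_k} p$ gives $\deg_{x_i}(\prod_{q \in A'_k} q) \le \deg_{x_i}(\prod_{p \in A_k} p) \le d$, so $A'_k$ has combined degree at most $d$. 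Hence $A'$ is partitioned into $m$ sets each of combined degree at most $d$, which is exactly the $(m,d)$-property.

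The main obstacle, such as it is, lies in handling the bookkeeping of the partition cleanly: the same irreducible factor may be shared between polynomials lying in different blocks $A_k$, so one must be slightly careful that after reassigning each basis element to a unique block, the divisibility $\prod_{q \in A'_k} q \mid \prod_{p \in A_k} p$ still holds — it does, because we only ever remove elements from a block, which can only decrease the left-hand product's degree. One should also note the edge cases absorbed silently by Definition~\ref{def:md}: if some block $A'_k$ is empty the bound is trivial, and constant basis elements (units) contribute nothing to any degree and may be ignored. Modulo these routine points the argument is as above; this is why the excerpt records it merely as following "simply from the definitions."
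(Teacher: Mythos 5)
Your proof is correct, and since the paper offers no written proof (it records the proposition as following ``simply from the definitions''), there is no alternative argument to compare against; what you wrote is the natural unwinding. The key points — that each squarefree basis element $q$ divides some $p\in A_k$, that pairwise coprimality of basis elements then gives $\prod_{q\in A'_k}q \mid \prod_{p\in A_k}p$, and that $\deg_{x_i}$ is monotone under divisibility because $\mathbb{Z}[x_1,\dots,\hat{x_i},\dots,x_n]$ is an integral domain — are all accurate and handled cleanly, including the bookkeeping of refining the possibly-overlapping $A'_k$ into a genuine partition. One tiny slip: you invoke ``distinct \emph{irreducible} basis elements are pairwise coprime,'' but the proposition concerns a \emph{squarefree} basis; this costs nothing because pairwise coprimality is part of the definition of a squarefree basis as well, which is what your argument actually uses.
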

This contrasts with the facts that taking a square-free basis may not reduce the combined degree, but may cause exponential blow-up in the number of polynomials.
\begin{prop}
\label{Prop:MD2}
Suppose a set has the $(m,d)$-property.  Then, by taking the union of groups of $\ell$ sets from the partition, it also has the $\left( \left\lceil \tfrac{m}{\ell} \right\rceil, \ell d \right)$-property.
\end{prop}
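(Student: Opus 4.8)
The plan is to argue directly from the definitions of the $(m,d)$-property (Definition~\ref{def:md}) and the combined degree (Definition~\ref{def:cd}), using only the fact that degree in a fixed variable is additive over products. Suppose the set has the $(m,d)$-property, so there is a partition into subsets $S_1,\ldots,S_m$, each with combined degree at most $d$; that is, for each $j$ and each variable $x_i$ we have $\deg_{x_i}\bigl(\prod_{p\in S_j} p\bigr)\le d$. Group these $m$ subsets into $\lceil m/\ell\rceil$ blocks, each block being the union of at most $\ell$ of the $S_j$ (the last block may contain fewer than $\ell$ if $\ell\nmid m$). This is a new partition of the original set into $\lceil m/\ell\rceil$ pieces, so it remains to bound the combined degree of each piece by $\ell d$.

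The key step: fix one block, say $T = S_{j_1}\cup\cdots\cup S_{j_s}$ with $s\le\ell$. For any variable $x_i$, the product of all polynomials in $T$ factors as $\prod_{p\in T} p = \prod_{r=1}^{s}\bigl(\prod_{p\in S_{j_r}} p\bigr)$, and since $\deg_{x_i}$ is additive over products,
\[
\deg_{x_i}\Bigl(\prod_{p\in T} p\Bigr) \;=\; \sum_{r=1}^{s}\deg_{x_i}\Bigl(\prod_{p\in S_{j_r}} p\Bigr) \;\le\; \sum_{r=1}^{s} d \;\le\; \ell d.
\]
Taking the maximum over $x_i$ shows the combined degree of $T$ is at most $\ell d$. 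Since this holds for every block, the regrouped partition witnesses the $\bigl(\lceil m/\ell\rceil,\ell d\bigr)$-property.

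There is essentially no obstacle here: the only mild subtlety is the off-by-one bookkeeping when $\ell$ does not divide $m$, which is handled by allowing the final block to be smaller (its combined degree is then bounded by $(m \bmod \ell)\cdot d \le \ell d$, so the uniform bound $\ell d$ still applies). One should also note the edge cases where some $S_j$ is empty, in which case its product is the empty product $1$ of degree $0$ and contributes nothing, so the bound is unaffected.
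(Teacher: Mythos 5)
Your argument is correct and is exactly the construction the paper has in mind: the proposition's statement already names the grouping of $\ell$ sets, and the paper treats the bound as immediate from the definitions, which is what your additivity-of-degree computation verifies. No gap; the off-by-one handling for the final smaller block is the right (and only) point of care.
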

Note that in the case $\ell=2$ we have 
$\left\lceil \tfrac{m}{2} \right\rceil = \left\lfloor \tfrac{m+1}{2} \right\rfloor$.


\begin{exmp}
\label{ex:md}
Let $S = \{ x^2y^4-x^3, x^2y^4+x^3 \}$ be a set of polynomials.  Then $S$ has the $(2,4)$ and $(1,8)$-properties.  A squarefree basis of $S$ is given by $S' = \{x^2, y^4-x, y^4+x\}$ which has the $(3,4)$ and $(1,8)$-properties.  

Proposition \ref{Prop:MD2} states that $S'$ must also have the $(2,8)$-property, which can be checked by partitioning $S'$ so that $x^2$ is in a set of its own.  However, from Proposition \ref{Prop:MD1} we also know that $S'$ must have the $(2,4)$-property, which is obtained from either of the other partitions into two sets.  

$S'$ demonstrates the strength of the $(m,d)$-property.  The trivial partition into sets of one polynomial is equivalent to the simple approach of just tracking the number of polynomials and maximum degree.  In this example such an approach would lead us to 3 polynomials of degree 4, contributing a possible 12 real roots.  However, by using more sophisticated partitions we replace this by 2 sets, for each of which the product of polynomial entries has degree 4, and so at most 8 real roots contributed.   
\end{exmp}

Though not used in this paper, we note an advantage of the $(m,d)$-property over the $(1,md)$-property is a better bound on root separation: any two roots require $O(2d)$ subdivisions to isolate, rather than the $O(md)$ implied by considering the product of all polynomials. 

We also recall the following classic identities for polynomials $f,g,h$:
\begin{align}
\res(fg,h) &= \res(f,h)\res(g,h); \label{eqn:ResProp1} \\
\disc(fg)  &= \disc(f)\disc(g)\res(f,g)^2; \label{eqn:ResProp2} \\
\disc(f)   &=(-1)^{\frac{1}{2}d(d-1)}\tfrac{1}{a_d}\res(f,f') \label{eqn:ResProp3} 
\end{align}
where $d$ is the degree of $f$, $f'$ its derivative and $a_d$ its leading coefficient (all taken with respect to the given main variable).

\begin{lem}
\label{L:SINew}
Suppose $A$ is a set of polynomials in $n$ variables with the $(m,d)$ property.  Then $P(A)$ has the $(M,2d^2)$ property with 
\begin{equation}
\label{eq:SI-M}
M = \left \lfloor \frac{(m+1)^2}2 \right\rfloor.
\end{equation}
\end{lem}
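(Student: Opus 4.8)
The plan is to bound the $(m,d)$-property of each of the four constituent sets in the definition \eqref{eq:P} of $P(A)$, namely $\cont(A)$, $\coeff(B)$, $\disc(B)$ and $\res(B)$, and then combine these bounds. By Proposition \ref{Prop:MD1} we may replace $A$ by an irreducible (hence squarefree) basis $B$ without worsening the $(m,d)$-property, so throughout I work with $B$ partitioned into $m$ groups $B_1,\dots,B_m$, each of combined degree at most $d$. The content $\cont(A)$ consists of polynomials dividing elements of $A$, so it inherits the $(m,d)$-property trivially (and contributes nothing in the main variable anyway). For the coefficients $\coeff(B)$: each coefficient of a polynomial in $B_j$ has degree at most $d$ in every remaining variable, and collecting all coefficients of the polynomials in one group $B_j$ still gives a set whose product has degree bounded in terms of $d$ — I need to be slightly careful here, but the standard move (as in \cite{McCallum1985}) is that $\coeff(B)$ has the $(m,d)$-property, which is dominated by the $2d^2$ bound we are aiming for. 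For the discriminants: using identity \eqref{eqn:ResProp2}, $\disc\bigl(\prod_{f\in B_j} f\bigr)$ differs from $\prod_{f\in B_j}\disc(f)$ by resultant factors, and $\disc$ of a polynomial of degree $d$ has degree at most $2d-1$ (from \eqref{eqn:ResProp3}, $\res(f,f')$ where $f$ has degree $d$ and $f'$ degree $d-1$, giving a Sylvester matrix contribution of degree $\le d + (d-1)$ in each other variable, so $\le 2d$). Hence each group $B_j$ contributes a discriminant-set of combined degree at most $2d^2$ (bounding $2d-1$ by $2d$ and accounting for how degrees multiply through the resultant/Sylvester construction), giving $\disc(B)$ the $(m,2d^2)$-property.

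The resultant set $\res(B)$ is where the count $M = \lfloor (m+1)^2/2\rfloor$ comes from, and this is the step I expect to be the main obstacle — specifically, counting the groups correctly rather than just bounding degrees. The idea is: for each unordered pair of distinct groups $\{B_i, B_j\}$, collect all resultants $\res(f,g)$ with $f\in B_i$, $g\in B_j$; by the multiplicativity \eqref{eqn:ResProp1}, the product of all these resultants equals $\res\bigl(\prod_{f\in B_i}f,\ \prod_{g\in B_j}g\bigr)$, a single resultant of two polynomials each of degree $\le d$ in the main variable, whose Sylvester determinant therefore has degree $\le 2d\cdot d = 2d^2$ in each of the other variables — so each such cross-pair set has combined degree $\le 2d^2$. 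Similarly, for resultants of two distinct polynomials within the same group $B_i$, the product over all such pairs is controlled via \eqref{eqn:ResProp2}: $\disc\bigl(\prod_{f\in B_i}f\bigr) = \prod\disc(f)\cdot\prod_{f\neq g}\res(f,g)^2$, so the within-group resultant product divides (a square root of a factor of) the discriminant of the group product, again of combined degree $\le 2d^2$. Thus $\res(B)$ partitions into $\binom{m}{2}$ cross-group sets plus $m$ within-group sets, i.e. $\binom{m}{2}+m = \binom{m+1}{2} = \frac{m(m+1)}{2}$ sets, each of combined degree at most $2d^2$ — so $\res(B)$ has the $\bigl(\frac{m(m+1)}{2}, 2d^2\bigr)$-property.

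Finally I assemble the pieces. The sets $\cont(A)$ and $\coeff(B)$ have the $(m,d)$-property, hence a fortiori the $(m, 2d^2)$-property; $\disc(B)$ has the $(m,2d^2)$-property; $\res(B)$ has the $\bigl(\tfrac{m(m+1)}{2}, 2d^2\bigr)$-property. Their union $P(A)$ therefore has the $(M', 2d^2)$-property with $M' = m + m + m + \tfrac{m(m+1)}{2}$ — but this naive addition overcounts, and the actual argument must be smarter. The correct bookkeeping is to note that $\coeff(B_j)$, $\disc$ over $B_j$, and the within-group resultants of $B_j$ can all be absorbed together with the diagonal pair $(j,j)$ in a symmetric $m\times m$ array of "pair-indexed" groups, while the off-diagonal entries $(i,j)$ with $i\neq j$ carry the cross resultants; the content folds in without cost. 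The number of entries in the upper triangle (including diagonal) of an $m\times m$ array is $\tfrac{m(m+1)}{2} = \tfrac{m^2+m}{2}$, and a short estimate shows this is at most $\lfloor (m+1)^2/2\rfloor = \lfloor (m^2+2m+1)/2 \rfloor$ for all positive integers $m$ (indeed $(m+1)^2/2 - (m^2+m)/2 = (m+1)/2 \ge 1$). Hence $P(A)$ has the $(M, 2d^2)$-property with $M = \lfloor (m+1)^2/2\rfloor$ as claimed. The step requiring the most care is the precise partition of $\res(B) \cup \disc(B) \cup \coeff(B)$ into exactly $M$ groups of combined degree $\le 2d^2$: getting a clean partition (rather than a loose sum of separate bounds) is what makes the stated constant $M$ tight, and this is where I would spend the bulk of the write-up.
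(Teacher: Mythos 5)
Your strategy of passing to the squarefree basis, grouping resultants pairwise via the multiplicativity identities, and bounding each resulting group by $2d^2$ is the right route and is essentially what the paper does. But the final assembly — where you claim that all of $\coeff(B_j)$, $\disc$ over $B_j$, the within-block resultants and the contents fit into a single diagonal group of combined degree at most $2d^2$ — does not hold, and this is precisely where the quantity $M$ comes from.

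The paper's proof establishes that $\cont(S_j)\cup\ldcf(T_j)\cup\disc(T_j)\cup\res(T_j)$ has combined degree at most $2d^2$, the point being that the product of all of these is (up to a power of the content) equal to $\res(F,F')$ for $F=c\prod T_j$, a $(2d-1)\times(2d-1)$ determinant with degree-$\le d$ entries, hence of degree at most $(2d-1)d$. Crucially, only the \emph{leading} coefficients show up in this expansion. The \emph{non-leading} coefficients of $T_j$ form an additional set: there are at most $d$ of them, each of degree at most $d$, so that set alone has combined degree up to $d^2$. Folding them into the diagonal group would put the product at degree up to $(2d-1)d + d^2 = 3d^2 - d$, which exceeds $2d^2$ for all $d\geq 2$. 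So your $\frac{m(m+1)}{2}$ blocks cannot all respect the $2d^2$ bound, and the extra slack you identified ($(m+1)/2$) is not spare room but exactly the room the proof needs.

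The repair is the paper's step~3: carry the non-leading coefficients separately as an $(m,d^2)$-set and pair them up via Proposition~\ref{Prop:MD2} to obtain $\lceil m/2\rceil$ further groups of combined degree $\le 2d^2$. That yields
\[
m + \binom{m}{2} + \left\lceil \tfrac{m}{2}\right\rceil
= \tfrac{m(m+1)}{2} + \left\lfloor \tfrac{m+1}{2}\right\rfloor
= \left\lfloor \tfrac{(m+1)^2}{2}\right\rfloor,
\]
which is the stated $M$, strictly larger than your $\frac{m(m+1)}{2}$. Relatedly, your intermediate assertion that $\coeff(B)$ has the $(m,d)$-property is too strong; it has the $(m,d^2)$-property, and this is what forces the pairing and hence the extra $\lceil m/2\rceil$ term.
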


\begin{proof}
Partition $A$ as $S_1\cup\cdots\cup S_m$ according to its $(m,d)$-property.  Let $B$ be a square-free basis for $\prim(A)$, $T_1$ the set of elements of $B$ which divide some element of $S_1$, and $T_i$ be those elements of $B$ which divide some element of $S_i$ but which have not already occurred in some $T_j: j<i$. 
\begin{enumerate}
\item We first claim that each set
\begin{equation}
\label{eq:T}
\cont(S_i)\cup\ldcf(T_i)\cup\disc(T_i)\cup\res(T_i)
\end{equation}
for $i = 1, \dots m$ has the $(1,2d^2)$ property. Let $c$ be the product of the elements of $\cont(S_i)$, $T_i=\{F_1,\ldots,F_{\mathfrak{t}}\}$ for some $\mathfrak{t}$ and $F:=cF_1,\ldots F_{\mathfrak{t}}$.  Then $F$ divides the product of the elements of $S_i$ and so has degree at most $d$. Thus $\res(F,F')$ must have degree at most $2d^2$ because it is the determinant of a $(2d-1 \times 2d-1)$ matrix in which each element has degree at most $d$. Then by (\ref{eqn:ResProp3}) and repeated application of (\ref{eqn:ResProp1}) and (\ref{eqn:ResProp2}) we see $\res(F,F')$ is a (non-trivial) power of $c$ multiplied by
\[
\textstyle \prod_{j=1}^{t} \ldcf(F_j) \prod_{j=1}^{t} \disc(F_j) \prod_{j<k}^{t} \res(F_j,F_k)^2.
\]
Since this includes all the elements of (\ref{eq:T}) the claim is proved.
\item We are still missing from $P(A)$ the $\res(f,g)$ where $f \in T_i, g \in T_j$ and $i \ne j$. For fixed $i,j$ consider $\res\left(\prod_{f\in T_i}f,\prod_{g\in T_j}g\right)$, which by (\ref{eqn:ResProp1}) is the product of the missing resultants.  This is the resultant of two polynomials of degree at most $d$ and hence will have degree at most $2d^2$.  Thus for fixed $i,j$ the set of missing resultants has the $(1,2d^2)$-property, and so the union of all such sets the $\left( \tfrac{1}{2}m(m-1), 2d^2 \right)$-property.
\item We are now missing from $P(A)$ only the non-leading coefficients of $B$.
The polynomials in the set $T_i$ have degree at most $d$ when multiplied together, and so, separately or together, have at most $d$ \emph{non-leading} coefficients, each of which has degree at most $d$. Hence this set of \emph{non-leading} coefficients has the $(1,d^2)$ property.  This is the case for $i$ from $1$ to $m$ and thus together the non-leading coefficients of $B$ have the $(m,d^2)$-property.  We can then pair up these sets to get a partition with the  $(\lceil m/2\rceil, 2d^2)$-property (Proposition \ref{Prop:MD2}).
\end{enumerate}
Hence $P(A)$ can be partitioned into 
\[
m + \frac{m(m-1)}2 + \left\lceil \frac{m}{2} \right\rceil 
= \frac{m(m+1)}2 + \left\lfloor \frac{m+1}{2} \right\rfloor
= \left \lfloor \frac{(m+1)^2}2 \right\rfloor
\]
sets (where the final equality follows from $m(m+1)$ always being even) each with combined degree $2d^2$.
\end{proof}

This concerns a single projection, and we must apply it recursively to consider the full set of projection polynomials.  Weakening the bound as in the following allows for a closed form solution.

\begin{cor}
\label{cor:SI2}
If $A$ is a set of polynomials with the $(m,d)$ property where $m>1$, then $P(A)$ has the $(m^2,2d^2)$-property.
\end{cor}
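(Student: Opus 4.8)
The goal is to deduce Corollary~\ref{cor:SI2} from Lemma~\ref{L:SINew} by a simple arithmetic comparison, so the plan is essentially to verify two inequalities. Lemma~\ref{L:SINew} already gives that $P(A)$ has the $(M, 2d^2)$-property with $M = \lfloor (m+1)^2/2 \rfloor$. Since the combined-degree bound $2d^2$ matches exactly what the corollary claims, the only thing to check is that the number of partition sets can be relaxed from $M$ to $m^2$; that is, I would show $\lfloor (m+1)^2/2 \rfloor \le m^2$ whenever $m > 1$, and then invoke the trivial observation that an $(M,d')$-property set is also an $(M',d')$-property set for any $M' \ge M$ (just pad the partition with empty sets, or merge none of them).

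First I would record that monotonicity lemma explicitly: if a set has the $(m,d)$-property and $m' \ge m$, then it has the $(m',d)$-property, since the witnessing partition into $m$ parts is also a partition into $m'$ parts after adjoining $m' - m$ empty sets (each vacuously of combined degree $\le d$). Then the substance is the inequality $\lfloor (m+1)^2/2 \rfloor \le m^2$ for integers $m \ge 2$. Dropping the floor, it suffices to prove $(m+1)^2 \le 2m^2$, i.e. $m^2 - 2m - 1 \ge 0$, i.e. $(m-1)^2 \ge 2$, which holds precisely when $m \ge 1 + \sqrt{2}$, hence for all integers $m \ge 3$; the case $m = 2$ needs to be checked separately, where $\lfloor 9/2 \rfloor = 4 = 2^2$, so equality holds and the bound is still valid. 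That disposes of every $m > 1$.

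I expect no real obstacle here — the only mild subtlety is remembering to handle $m = 2$ by hand because the clean inequality $(m-1)^2 \ge 2$ fails there, yet the floor function rescues the statement (indeed $\lfloor (2+1)^2/2\rfloor = 4 = 2^2$). So the proof is two lines: state the padding/monotonicity remark, then note $\lfloor (m+1)^2/2\rfloor \le m^2$ for $m \ge 2$ with the $m=2$ case by direct computation and $m \ge 3$ from $(m+1)^2 \le 2m^2$. The corollary then follows immediately by applying monotonicity to the conclusion of Lemma~\ref{L:SINew}.

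\begin{proof}
By Lemma~\ref{L:SINew}, $P(A)$ has the $(M,2d^2)$-property with $M = \lfloor (m+1)^2/2 \rfloor$. We claim $M \le m^2$ for all integers $m > 1$. For $m = 2$ we have $M = \lfloor 9/2 \rfloor = 4 = m^2$. For $m \ge 3$ we have $(m-1)^2 \ge 4 > 2$, hence $m^2 - 2m - 1 \ge 0$, so $(m+1)^2 \le 2m^2$ and therefore $M \le (m+1)^2/2 \le m^2$. Finally, any set with the $(M,2d^2)$-property also has the $(m^2, 2d^2)$-property: take the partition witnessing the former and adjoin $m^2 - M \ge 0$ empty sets, each of which trivially has combined degree at most $2d^2$. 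Hence $P(A)$ has the $(m^2,2d^2)$-property.
\end{proof}
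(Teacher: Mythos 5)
Your proof is correct and is the natural way to deduce the corollary; the paper omits a proof entirely, treating it as an immediate weakening of Lemma~\ref{L:SINew}, and your argument (verify $\lfloor (m+1)^2/2\rfloor \le m^2$ for $m\ge 2$, with the boundary case $m=2$ giving equality, plus the routine monotonicity of the $(m,d)$-property in $m$) is exactly the reasoning the authors leave implicit.
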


\begin{rem}
\label{rem:SIComplexity}
\begin{enumerate}
\item Note that if $A$ has the $(1,d)$-property then $P(A)$ has the $(2,2d^2)$ property and hence the need for $m>1$ to apply Corollary \ref{cor:SI2}.  
As our paper continues we present new theory that applies to the first projection only.  Hence for a fair and accurate complexity comparison we will use Lemma \ref{L:SINew} for the first projection and then Corollary \ref{cor:SI2} for subsequent ones, (applicable since even if we start with $m=1$ polynomial for the first projection, we can assume $m \geq 2$ thereafter). 
\item The analysis so far resembles Section 6.1 of \cite{McCallum1985}.  However, that thesis leads us to the $(m^2d,2d^2)$-property in place of Corollary \ref{cor:SI2}.  The extra dependency on $d$ was avoided by an improved analysis in the proof of Lemma \ref{L:SINew} part (3). 
\end{enumerate}
\end{rem}

We consider the growth in projection polynomials and their degree when using the operator $P$ in Table \ref{tab:GeneralProjection}.  Here the column headings refer not to the number of polynomials and their degree, but to the number of sets and their combined degree when applying Definition \ref{def:md}.  We start with $m$ polynomials of degree $d$ and after one projection have a set with the $(M, 2d^2)$ property, using $M$ from Lemma \ref{L:SINew}.  We then use Corollary \ref{cor:SI2} to model the growth in subsequent projections, and a simple induction to fill in the table.

\begin{table}[t]
\caption{Expression growth for CAD projection where: after the first projection we have polynomials with the ($M, 2d^2$)-property and thereafter we measure growth using Corollary \ref{cor:SI2}.  The value of $M$ could be (\ref{eq:SI-M}), (\ref{eq:EC-M}), (\ref{eq:TTI-M}) , (\ref{eq:ECImplicit-M}) or (\ref{eq:TTIGeneral-M}) depending on which projection scheme we are analysing.}  \label{tab:GeneralProjection}
\begin{center}
\begin{tabular}{cccc}
Variables & Number  & Degree    & Product            \\
\hline
$n$       & $m$     & $d$       & $md$               \\
$n-1$     & $M$     & $2d^2$    & $2Md^2$            \\
$n-2$     & $M^2$   & $8d^4$    & $2^3M^2d^4$        \\
$n-3$     & $M^4$   & $128d^8$  & $2^{7}M^4d^8$      \\
\vdots    & \vdots  & \vdots    & \vdots             \\
$n-r$     & $M^{2^{r-1}}$ & $2^{2^r-1}d^{2^r}$ & $2^{2^r-1}d^{2^r}M^{2^{r-1}}$ 
\\
\vdots & \vdots & \vdots & \vdots 
\\
1      & $M^{2^{n-2}}$ & $2^{2^{n-1}-1}d^{2^{n-1}}$ & $2^{2^{n-1}-1}d^{2^{n-1}}M^{2^{n-2}}$
\\
\hline
Product & $M^{2^{n-1}-1}m$ & $2^{2^{n}-1-n}d^{2^{n}-1}$ & $2^{2^{n}-n-1}d^{2^n-1}M^{2^{n-1}-1}m$
\end{tabular}
\end{center}
\end{table}

The size of the CAD produced depends on the number of real roots of the projection polynomials.  We can hence bound the number of real roots in a set of polynomials with the $(m,d)$-property with $md$ (in practice many of them will be strictly complex).
We can therefore bound the number of real roots of the univariate projection polynomials  by the product of the two entries in the row of Table \ref{tab:GeneralProjection} for 1 variable.  The number of cells in the CAD of $\R^1$ is bounded by twice this plus 1. Similarly, the total number of cells in the CAD of $\R^n$ is bounded by the product of $2K+1$ where $K$ varies through the Product column of Table \ref{tab:GeneralProjection}, i.e. by
\[
(2Md + 1) \prod_{r=1}^{n-1} \left[ 2 \left( 2^{2^r-1}d^{2^r}M^{2^{r-1}} \right) + 1 \right].
\]
Omitting the $+1$ will leave us with the dominant term of the bound, which can be calculated explicitly as
\begin{align}
&\qquad 2^{2^{n}-1}d^{2^{n}-1}M^{2^{n-1}-1}m, \label{bound:All} \\
&\leq 2^{2^{n}-1}d^{2^{n}-1}\left( \tfrac{1}{2}(m+1)^2 \right)^{2^{n-1}-1}m   
= 2^{2^{n-1}}d^{2^{n}-1}(m+1)^{2^n-2}m.    \label{bound:SINew}
\end{align}
where the inequality was introduced by omitting the floor function in (\ref{eq:SI-M}).  This may be compared with the bound in Theorem 6.1.5 of \cite{McCallum1985}, with the main differences explained by Remark \ref{rem:SIComplexity}(2).

We now turn our focus to CAD invariant with respect to an EC.  Recall that we use operator $P_E(A)$ for the first projection only and $P(A)$ thereafter.  Hence we use Corollary \ref{cor:SI2} for the bulk of the analysis, and the next lemma when considering the first projection.
 
\begin{lem}
\label{L:EC}
Suppose $A$ is a set of $m$ polynomials in $n$ variables each with maximum degree d, and that $E \subseteq A$ contains a single polynomial.  
Then the reduced projection $P_E(A)$ has the $(M,2d^2)$-property with 
\begin{equation}
\label{eq:EC-M}
M = \left\lfloor \tfrac{1}{2}(3m+1) \right\rfloor.
\end{equation}
\end{lem}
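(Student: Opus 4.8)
The plan is to exhibit an explicit partition of $P_E(A)$ into $\lfloor\tfrac12(3m+1)\rfloor$ sets, each of combined degree at most $2d^2$, following the bookkeeping in the proof of Lemma~\ref{L:SINew} but exploiting the fact that the reduced operator \eqref{eq:ECProj} is much smaller. Write $A=\{a_1,\dots,a_m\}$ with $a_1$ the polynomial of the equational constraint, so $E=\{a_1\}$; let $F$ be the square-free basis of $A$ and $B\subseteq F$ the part of it that is a square-free basis of $E$, both as in \eqref{eq:ECProj}. For $i=2,\dots,m$ let $T_i$ be the set of elements of $F\setminus B$ that divide $a_i$ but divide no $a_j$ with $2\le j<i$; then $F\setminus B=T_2\sqcup\cdots\sqcup T_m$, the members of $T_i$ are pairwise coprime divisors of $a_i$, so $\prod_{g\in T_i}g$ divides $a_i$ and therefore has degree at most $d$ in each variable. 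With this notation $P_E(A)$ is a union of three groups: (a) the contents $\cont(a_1),\dots,\cont(a_m)$; (b) $\ldcf(B)\cup\disc(B)\cup\res(B)$ together with the non-leading coefficients of the elements of $B$ (the ``self'' part of \eqref{eq:ECProj} coming from the constraint; replacing the irreducible basis of $B$ by $B$ itself only refines the partition and does not raise combined degrees, cf.\ Proposition~\ref{Prop:MD1}, so this is harmless); and (c) the cross-resultants $\bigcup_{i=2}^{m}\{\res_{x_n}(b,g)\mid b\in B,\ g\in T_i\}$.

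First I would build two sets that absorb group (b) together with the contents of $a_1$ and $a_2$. Put $\{\cont(a_1)\}\cup\ldcf(B)\cup\disc(B)\cup\res(B)$ into one set: by exactly the argument of part~(1) of the proof of Lemma~\ref{L:SINew} (with $S_1=\{a_1\}$, $T_1=B$), writing $c=\cont(a_1)$ and $G=\prod_{b\in B}b$, we have $cG\mid a_1$ of degree at most $d$ in each variable, so $\res_{x_n}(cG,(cG)')$ is a determinant of order at most $2d-1$ with entries of degree at most $d$, hence of degree at most $2d^2$, and by \eqref{eqn:ResProp3} and repeated use of \eqref{eqn:ResProp1}--\eqref{eqn:ResProp2} the product of the members of the set divides it; so the set has combined degree at most $2d^2$. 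For the second set take the non-leading coefficients of the elements of $B$ together with $\cont(a_2)$: since $\sum_{b\in B}\deg_{x_n}b\le\deg_{x_n}a_1\le d$ there are at most $d$ such coefficients and each has degree at most $d$ in every variable, so their product has degree at most $d^2$; adjoining $\cont(a_2)$ (degree at most $d$) keeps the combined degree at most $d^2+d\le 2d^2$, using $d\ge1$. (For $m=1$ there is no $\cont(a_2)$ and this set is just the non-leading coefficients.)

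Next I would handle group (c) and the remaining contents. For each $i=2,\dots,m$ the set $\{\res_{x_n}(b,g)\mid b\in B,\ g\in T_i\}$ has, by \eqref{eqn:ResProp1}, product equal up to sign to $\res_{x_n}\!\bigl(\prod_{b\in B}b,\ \prod_{g\in T_i}g\bigr)$; since $\prod b\mid a_1$ and $\prod g\mid a_i$ both have degree at most $d$ in each variable, this resultant is a Sylvester determinant of order at most $2d$ with entries of degree at most $d$, hence of degree at most $2d^2$, which gives $m-1$ sets of combined degree at most $2d^2$. The still-unused contents $\cont(a_3),\dots,\cont(a_m)$ are $m-2$ polynomials of degree at most $d$, which pair up (Proposition~\ref{Prop:MD2} with $\ell=2$) into $\lceil\tfrac{m-2}{2}\rceil$ sets of combined degree at most $2d\le 2d^2$. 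Altogether $P_E(A)$ is covered by
\[
1+1+(m-1)+\Bigl\lceil\tfrac{m-2}{2}\Bigr\rceil
= m+1+\Bigl\lceil\tfrac{m-2}{2}\Bigr\rceil
= \Bigl\lfloor\tfrac12(3m+1)\Bigr\rfloor
\]
sets of combined degree at most $2d^2$, which is \eqref{eq:EC-M}; the last equality is checked by separating the parities of $m$, and for $m\le2$ some of the groups are empty and the count is verified directly.

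I expect the main obstacle to be getting this accounting exactly right rather than any single degree estimate. In particular the bound $\lfloor\tfrac12(3m+1)\rfloor\approx\tfrac32 m$, as opposed to the naive $\approx 2m$, hinges on the observation that the bundle of non-leading coefficients of $B$ has combined degree only $d^2$ and can therefore swallow one extra content for free; one must also check that every element of $P_E(A)$ is accounted for exactly once (using that $F\setminus B$ is the disjoint union of the $T_i$), verify the closed-form simplification $m+1+\lceil(m-2)/2\rceil=\lfloor(3m+1)/2\rfloor$, and treat the small cases $m=1,2$ where parts of groups (b)--(c) degenerate.
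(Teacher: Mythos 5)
Your proof is correct and follows essentially the same route as the paper's: the EC-basis block (content, leading coefficients, discriminants, mutual resultants) is bounded by $2d^2$ via the same $\res(cG,(cG)')$ determinant argument, the non-leading coefficients give a $d^2$ block, the cross-resultants are grouped per non-EC polynomial into $m-1$ sets, and the leftover contents are paired via Proposition~\ref{Prop:MD2}. The only difference is trivial bookkeeping (you absorb one content into the coefficient block and pair the remaining $m-2$ contents among themselves, where the paper merges all $m-1$ spare contents with the coefficients into an $(m,d^2)$ family before pairing), and both tallies give $\lfloor\tfrac12(3m+1)\rfloor$.
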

\begin{proof} 
Since $E$ contains a single polynomial its squarefree basis $F$ has the $(1,d)$-property.  
\begin{enumerate}
\item The contents, leading coefficients and discriminants from $F$ form a set $R_1$ with combined degree $2d^2$ (see proof of Lemma \ref{L:SINew} step 1) and the other coefficients a set $R_2$ with combined degree $d^2$ (see proof of Lemma \ref{L:SINew} step 3).  

\item The set of remaining contents $R_3 = \cont(A) \setminus \cont(E)$ has the $(m-1,d)$-property and thus trivially, the $(m-1,d^2)$-property.  Then $R_2 \cup R_3$ has the $(m,d^2)$-property and thus also the $\left( \lceil \tfrac{m}{2} \rceil, 2d^2 \right)$-property (Proposition \ref{Prop:MD2}).  

\item It remains to consider the final set of resultants in (\ref{eq:ECProj}).  Following the approach from the proof of Lemma \ref{L:SINew} step 2, we conclude that for each of $m-1$ polynomials in $A \setminus E$ there contributes a set with the $(1,2d^2)$-property.  So together they form a set $R_4$ with the $(m-1, 2d^2)$-property.
\end{enumerate}
Hence $P_E(A)$ is contained in $R_1 \cup ( R_2 \cup R_3 ) \cup R_4$ which may be partitioned into
\[
1 + \left\lceil \tfrac{m}{2} \right\rceil + (m-1) 
= \left\lfloor \tfrac{1}{2}(m+1) \right\rfloor + m
= \left\lfloor \tfrac{1}{2}(3m+1) \right\rfloor
\]
sets of combined degree $2d^2$.
\end{proof}

%
%

We can use Table \ref{tab:GeneralProjection} to model the growth in projection polynomials for the algorithm in \citep{McCallum1999} as well, since the only difference will be the number of polynomials produced by the first projection, and thus the value of $M$. 
Hence the dominant term in the bound on the total number of cells is given again by (\ref{bound:All}), which in this case becomes (upon omitting the floor)
\begin{align}
&\qquad 2^{2^{n}-1}d^{2^{n}-1}( \tfrac{1}{2}(3m+1) )^{2^{n-1}-1}m 
= 2^{2^{n-1}}d^{2^{n}-1}(3m+1)^{2^{n-1}-1}m.  \label{bound:EC}
\end{align}

Since $P_E(A)$ is a subset of $P(A)$ a CAD invariant with respect to an EC should certainly be simpler than a sign-invariant CAD for the polynomials involved.  Indeed, comparing the different values of $M$ we see that 
\[
\tfrac{1}{2}(m+1)^2 > \tfrac{1}{2}(3m+1) \qquad \mbox{(strictly so for } m>1 \mbox{).}
\]
Comparing the dominant terms in the cell count bounds, (\ref{bound:EC}) and (\ref{bound:SINew}), 
we see the main effect is a decrease in one of the double exponents by $1$.
 
\section{A projection operator for TTICAD}
\label{sec:TTIProj}

\subsection{New projection operator}
\label{subsec:TTIProjOp}

In \citep{McCallum1999} the central concept is the reduced projection of a set of polynomials $A$ relative to a subset $E$ (defining the EC).  The full projection operator is applied to $E$ and then supplemented by the resultants of polynomials in $E$ with those in $E \setminus A$, since the latter group only effect the truth of the formula when they share a root with the former. 
We extend this idea to define a projection for a list of sets of polynomials (derived from a list of formulae), some of which may have subsets (derived from ECs).  

For simplicity in \citep{McCallum1999} the concept is first defined for the case when $A$ is an irreducible basis.  We emulate this approach, generalising for other cases by considering contents and irreducible factors of positive degree when verifying the algorithm in Section \ref{sec:Algorithm}.
So let $\mathcal{A} = \{ A_i\}_{i=1}^t$ be a list of irreducible bases $A_i$ and let $\mathcal{E} = \{ E_i \}_{i=1}^t$ be a list of 
subsets $E_i \subseteq A_i$.
Put $A = \bigcup_{i=1}^t A_i$ and $E = \bigcup_{i=1}^t E_i$.  Note that we use the convention of uppercase Roman letters for sets of polynomials and calligraphic letters for lists of these.

\begin{defn}
\label{def:TTIProj}
With the notation above the {\em reduced projection of $\mathcal{A}$ with respect to $\mathcal{E}$}
is
\begin{equation}
\label{eqn:TTIProj-S}
P_{\mathcal{E}}(\mathcal{A}) := \textstyle{\bigcup_{i=1}^t} P_{E_i}(A_i)
\cup {\rm RES}^{\times} (\mathcal{E})
\end{equation}
where ${\rm RES}^{\times} (\mathcal{E}) $ is the cross resultant set
\begin{align}
{\rm RES}^{\times} (\mathcal{E})
&= \{ {\rm res}_{x_n}(f,\hat{f}) \mid \exists \, i,j \,\, \mbox{such that }
f \in E_i, \hat{f} \in E_j, i<j, f \neq \hat{f}  \} 
\label{eqn:RESX}
\end{align}
and 
\begin{align*}
P_{E}(A) = P(E) \cup \left\{ {\rm res}_{x_n}(f,g) \mid f\in E, g \in A, g \notin E \right\}, 
\\
P(A) = \{ {\rm coeffs}(f), {\rm disc}(f), {\rm res}_{x_n}(f,g) \, | \, f,g \in A, f \neq g \}. 
\end{align*}
\end{defn}


\begin{thm}
\label{thm:Main}
Let $S$ be a connected submanifold of $\mathbb{R}^{n-1}$. Suppose each element of $P_{\mathcal{E}}(\mathcal{A})$ is order invariant in $S$. Then each $f \in E$ either vanishes identically on $S$ or is analytically delineable on $S$; the sections over $S$ of the $f \in E$ which do not vanish identically are pairwise disjoint; and each element $f \in E$ which does not vanish identically is order-invariant in such sections.

\emph{Moreover}, for each $i$, in $1 \leq i \leq t$ every $g \in A_i \setminus E_i$ is sign-invariant in each section over $S$ of every $f \in E_i$ which does not vanish identically.
\end{thm}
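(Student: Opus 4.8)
The statement splits into two parts: a "delineability/order-invariance" part about the elements of $E$, and a "sign-invariance" part about the $g \in A_i \setminus E_i$. The first part is essentially Theorem~\ref{thm:McC1} applied to the right set, so the strategy is to reduce to it; the second part is a local application of Theorem~\ref{thm:McC2}, carried out section by section. I would first fix notation and record the containments $P(E) \subseteq P_{\mathcal{E}}(\mathcal{A})$ and, for each $i$, $P_{E_i}(A_i) \subseteq P_{\mathcal{E}}(\mathcal{A})$, together with ${\rm RES}^{\times}(\mathcal{E}) \subseteq P_{\mathcal{E}}(\mathcal{A})$; these are immediate from Definition~\ref{def:TTIProj}.

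\medskip
\textbf{Step 1 (the $E$-part).} Since $E = \bigcup_i E_i$ is a union of subsets of irreducible bases, $E$ is itself a set of irreducible polynomials of positive degree in $x_n$, but possibly with repetitions up to associates across different $E_i$; after discarding duplicates it is an irreducible basis. I claim $P(E) \subseteq P_{\mathcal{E}}(\mathcal{A})$. Indeed $\coeff$ and $\disc$ of each $f \in E$ lie in $P(E_i) \subseteq P_{E_i}(A_i)$ for whichever $i$ has $f \in E_i$; and $\res_{x_n}(f,\hat f)$ for $f,\hat f \in E$ distinct is either a resultant within a single $E_i$ (hence in $P(E_i)$) or a cross resultant with $f \in E_i$, $\hat f \in E_j$, $i \neq j$ (hence in ${\rm RES}^{\times}(\mathcal{E})$, after possibly swapping $i,j$). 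So every element of $P(E)$ is order-invariant on $S$ by hypothesis, and Theorem~\ref{thm:McC1} applied to the irreducible basis $E$ gives exactly the three asserted conclusions: each $f \in E$ vanishes identically on $S$ or is analytically delineable on $S$, the non-vanishing sections are pairwise disjoint, and each non-vanishing $f$ is order-invariant in its sections.

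\medskip
\textbf{Step 2 (the $A_i \setminus E_i$-part).} Fix $i$ and fix $f \in E_i$ that does not vanish identically on $S$; by Step~1 it is analytically delineable on $S$, so it has well-defined sections over $S$. Fix $g \in A_i \setminus E_i$. If $g$ and $f$ are associates there is nothing to prove; otherwise $r := \res_{x_n}(f,g)$ is a nonzero polynomial (both have positive degree in $x_n$ and are relatively prime, $A_i$ being an irreducible basis) and $r \in P_{E_i}(A_i) \subseteq P_{\mathcal{E}}(\mathcal{A})$, hence $r$ is order-invariant on $S$. Now apply Theorem~\ref{thm:McC2} with the roles $f,g,r,S$: its hypotheses — $f$ delineable on $S$, $r$ order-invariant on $S$, $r \neq 0$ — are all met, so $g$ is sign-invariant in every section of $f$ over $S$. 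Since $i$, the non-vanishing $f \in E_i$, and $g \in A_i \setminus E_i$ were arbitrary, this is the "moreover" clause.

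\medskip
\textbf{Expected main obstacle.} The routine parts are the set-membership bookkeeping; the only genuinely delicate point is making sure the reduction in Step~1 is legitimate — namely that $E$, after removing associates, really is an \emph{irreducible} basis (pairwise relatively prime, irreducible, positive degree) so that Theorem~\ref{thm:McC1} applies verbatim, and that forming $P(E)$ from this de-duplicated set does not require any resultant or coefficient not already present in $P_{\mathcal{E}}(\mathcal{A})$. One must also be slightly careful that "analytically delineable" in Theorem~\ref{thm:McC1} is the same notion needed for the hypothesis of Theorem~\ref{thm:McC2} ("$f$ delineable on $S$"); this is the standard compatibility noted in \citep{McCallum1998, McCallum1999} and I would simply cite it. No deeper difficulty is anticipated: the whole proof is a matter of verifying that $P_{\mathcal{E}}(\mathcal{A})$ contains, for each sub-claim, exactly the projection set that the corresponding earlier theorem requires.
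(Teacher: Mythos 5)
Your proposal is correct and follows essentially the same route as the paper's proof: establish $P(E) \subseteq P_{\mathcal{E}}(\mathcal{A})$ (via $P(E) = \bigcup_i P(E_i) \cup {\rm RES}^{\times}(\mathcal{E})$) and invoke Theorem~\ref{thm:McC1} for the first three conclusions, then apply Theorem~\ref{thm:McC2} to each pair $f \in E_i$, $g \in A_i \setminus E_i$ using the order-invariance of $\res_{x_n}(f,g) \in P_{E_i}(A_i) \subseteq P_{\mathcal{E}}(\mathcal{A})$. Your extra care about $E$ being an irreducible basis after de-duplication and about the non-vanishing of the resultant is sound bookkeeping that the paper leaves implicit.
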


\begin{proof}
The crucial observation for the first part is that
$P(E) \subseteq P_{\mathcal{E}}(\mathcal{A})$.
To see this, recall equation \eqref{eqn:TTIProj-S} and note that we can write
\begin{equation*}
P(E) = {\textstyle \bigcup_i }P(E_i) \cup {\rm RES}^{\times}(\mathcal{E}).
\end{equation*}
We can therefore apply Theorem \ref{thm:McC1} to the set $E$ and obtain the first three conclusions immediately, leaving only the final conclusion to prove. 

Let $i$ be in the range $1 \leq i \leq t$, let $g \in A_i \setminus E_i$ and let $f \in E_i$. Suppose that $f$ does not vanish identically on $S$.
Now ${\rm res}_{x_n}(f,g) \in P_{\mathcal{E}}(\mathcal{A})$, and so is order-invariant in $S$ by hypothesis. Further, we already concluded that $f$ is delineable. Therefore by Theorem \ref{thm:McC2}, $g$ is sign-invariant in each section of $f$ over $S$.
\end{proof}

Theorem \ref{thm:Main} is the key tool for the verification of our TTICAD algorithm in Section \ref{sec:Algorithm}.  It allows us to conclude the output is correct so long as no $f \in E$
vanishes identically on the lower dimensional manifold, $S$.  A polynomial $f$ in $r$ variables that vanishes identically at a point  $\alpha \in \R{}^{r-1}$ is said to be \emph{nullified} at $\alpha$.

The theory of this subsection appears identical to the work in \citep{BDEMW13}.  The difference is in the application of the theory in Section \ref{sec:Algorithm}.  We suppose that the input is a list of QFFs, $\{\phi_i\}$, with each $A_i$ defined from the polynomials in each $\phi_i$.  In \citep{BDEMW13} there was an assumption (no longer made) that each of these formulae had a designated EC $f_i=0$ from which the subsets $E_i$ are defined.  Instead, we define $E_i$ to be a basis for $\{f_i\}$ if there is such a designated EC and define $E_i = A_i$ otherwise.  That is, we need to treat all the polynomials in QFFs with no EC with the importance usually reserved for ECs.

\subsection{Comparison with using a single implicit equational constraint}
\label{subsec:Implicit}

It is clear that in general the reduced projection $P_{\mathcal{E}}(\mathcal{A})$ will lead to fewer projection polynomials than using the full projection $P$.  However, a comparison with the existing theory of equational constraints requires a little more care.

First, we note that the TTICAD theory is applicable to a sequence of formulae while the theory of \cite{McCallum1999} is applicable only to a single formula.  Hence if the truth value of each QFF is needed then TTICAD is the only option; a truth-invariant CAD for a parent formula will not necessarily suffice.  Second we note that even if the sequence do form a parent formula then this must have an overall EC  to use \citep{McCallum1999} while the TTICAD theory is applicable even if this is not the case.

Let us consider the situation where both theories are applicable, i.e. we have a sequence of formulae (forming a parent formula) for which each has an EC and thus the parent formula  an implicit EC (their product).   In the context of Section \ref{subsec:TTICAD} this corresponds to using $\prod_i f_i$ as the EC.
The implicit EC approach would correspond to using the reduced projection $P_E(A)$ of \citep{McCallum1999}, with $E=\cup_i E_i$ and $A=\cup_i A_i$.  We make the simplifying assumption that $A$ is an irreducible basis.  In general $P_{\mathcal{E}}(\mathcal{A})$ will still contain fewer polynomials than $P_E(A)$ since $P_E(A)$ contains all resultants res$(f,g)$ where $f \in E_i, g \in A_j$ (and $g \notin E$), while $P_{\mathcal{E}}(\mathcal{A})$ contains only those with $i=j$ (and $g \notin E_i$).  Thus even in situations where the previous theory applies there is an advantage in using the new TTICAD theory.  These savings are highlighted by the worked examples in the next subsection and the complexity analysis later.

\subsection{Worked examples}
\label{subsec:WE2}

In Section \ref{sec:Algorithm} we define an algorithm for producing TTICADs. First we illustrate the savings with our worked examples from Section \ref{subsec:WE1}, which satisfy the simplifying assumptions from Section \ref{subsec:TTIProjOp}.

We start by considering $\Phi$ from equation (\ref{eqn:ExPhi}). In the notation above we have:
\begin{align*}
A_1 &:= \{f_1,g_1\}, \qquad E_1:=\{ f_1 \}; \\
A_2 &:= \{f_2,g_2\}, \qquad E_2:=\{ f_2 \}.
\end{align*}
We construct the reduced projection sets for each $\phi_i$,
\begin{align*}
P_{E_1}(A_1) &= \left\{ x^2-1, x^4 - x^2 + \tfrac{1}{16} \right\}, \\
P_{E_2}(A_2) &= \left\{ x^2 - 8x +15, x^4 -16x^3 + 95x^2-248x + \tfrac{3841}{16} \right\},
\end{align*}
and the cross-resultant set
\begin{equation*}
{\rm Res}^{\times} (\mathcal{E}) = \{{\rm res}_{y}(f_1,f_2)\} = \{ 68x^2 -272x + 285\}.
\end{equation*}
$P_{\mathcal{E}}(\mathcal{A})$ is then the union of these three sets.  In Figure \ref{fig:WE2} we plot the polynomials (solid curves) and identify the 12 real solutions of $P_{\mathcal{E}}(\mathcal{A})$ (solid vertical lines).  We can see the solutions align with the asymptotes of the $f_i$'s and the important intersections (those of $f_1$ with $g_1$ and $f_2$ with $g_2$).

\begin{figure}[p]
\caption{The polynomials from $\Phi$ in equation (\ref{eqn:ExPhi}) along with the roots of $P_{\mathcal{E}}(\mathcal{A})$ (solid lines), $P_E(A)$ (dashed lines) and $P(A)$ (dotted lines).}
\label{fig:WE2}
\begin{center}
\includegraphics[width=4.7in]{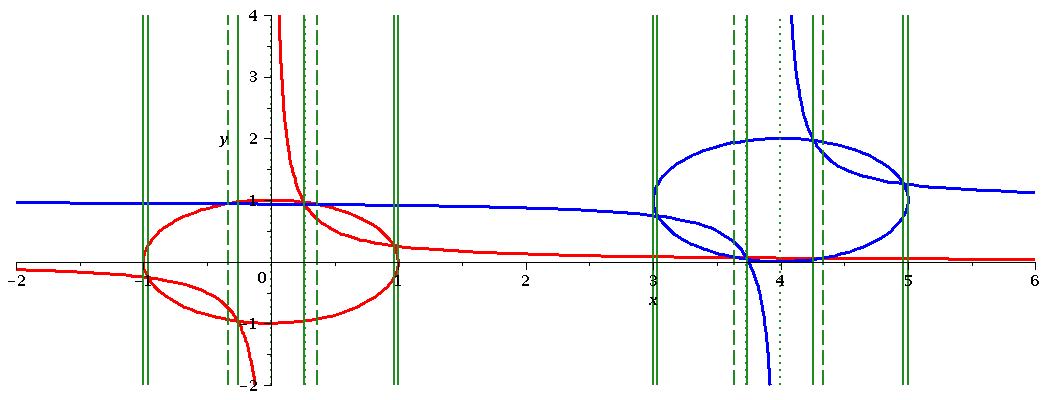}
\end{center}
\end{figure}

\begin{figure}[p]
\caption{Magnified region of Figure \ref{fig:WE2}.}
\label{fig:WE3}
\begin{center}
\includegraphics[width=2.5in]{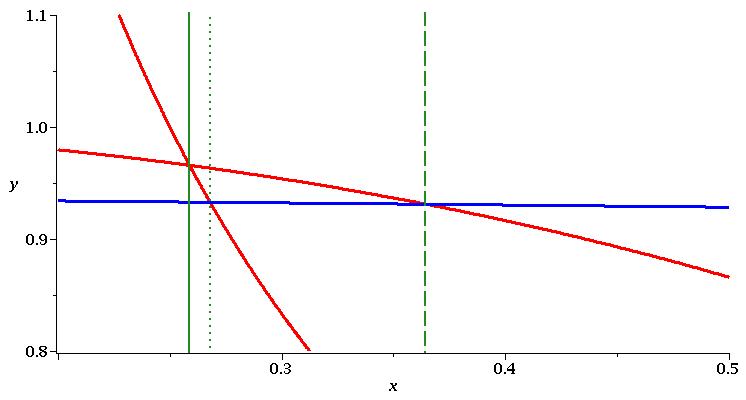}
\end{center}
\end{figure}

\begin{figure}[p]
\caption{The polynomials from $\Psi$ in equation (\ref{eqn:ExPsi}) along with the roots of $P_{\mathcal{E}}(\mathcal{A})$.}
\label{fig:WE4}
\begin{center}
\includegraphics[width=4.7in]{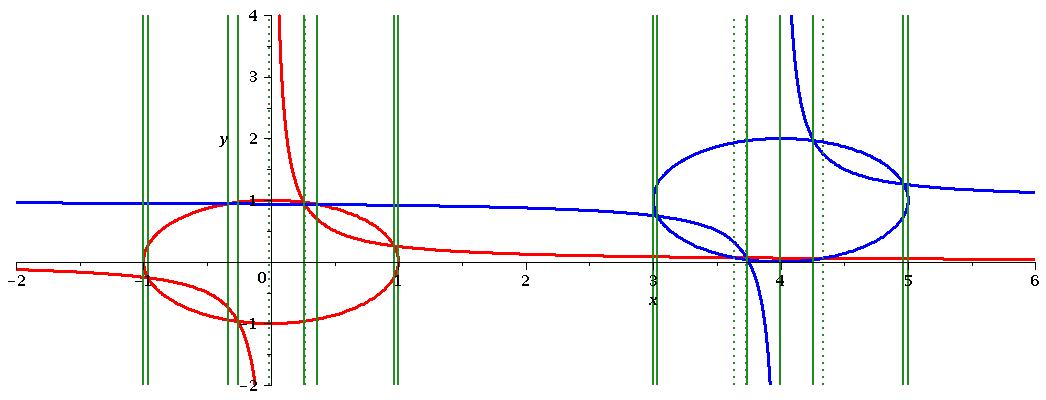}
\end{center}
\end{figure}

\begin{figure}[p]
\caption{Magnified region of Figure \ref{fig:WE4}.}
\label{fig:WE5}
\begin{center}
\includegraphics[width=2.5in]{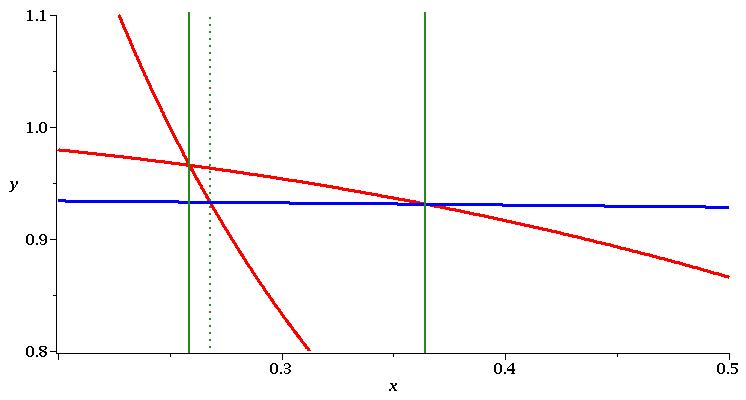}
\end{center}
\end{figure}

If we were to instead use a projection operator based on an implicit EC $f_1f_2=0$ then in the notation above we would construct $P_E(A)$ from $A=\{f_1,f_2,g_1,g_2\}$ and $E=\{f_1,f_2\}$.  This set provides an extra 4 solutions (the dashed vertical lines) which align with the intersections of $f_1$ with $g_2$ and $f_2$ with $g_1$.
Finally, if we were to consider $P(A)$ then we gain a further 4 solutions (the dotted vertical lines) which align with the intersections of $g_1$ and $g_2$ and the asymptotes of the $g_i$'s.
In Figure \ref{fig:WE3} we magnify a region
to show explicitly that the point of intersection between $f_1$ and $g_1$ is identified by $P_{\mathcal{E}}(\mathcal{A})$, while the intersections of $g_2$ with both $f_1$ and $g_1$ are ignored.

The 1-dimensional CAD produced using $P_{\mathcal{E}}(\mathcal{A})$ has 25 cells compared to 33 when using $P_E(A)$ and 41 when using $P(A)$.  However, it is important to note that this reduction is amplified after lifting (using Theorem \ref{thm:Main} and and Algorithm \ref{alg:TTICAD}).  The 2-dimensional TTICAD has 105 cells and the sign-invariant CAD has 317.  Using {\sc Qepcad} to build a CAD invariant with respect to the implicit EC gives us 249 cells.

Next we consider determining the truth of $\Psi$ from equation (\ref{eqn:ExPsi}).  This time
\begin{align*}
A_1 &:= \{f_1,g_1\}, \,\, E_1:=\{f_1\}, \\
A_2 &:= \{f_2,g_2\}, \,\, E_2:=\{f_2,g_2\},
\end{align*}
and so $P_{E_1}(A_1)$ is as above but $P_{E_2}(A_2)$ contains an extra polynomial $x-4$ (the coefficient of $y$ in $g_2$).
The cross-resultant set ${\rm RES}^{\times} (\mathcal{E})$ also contains an extra polynomial,
\[
{\rm res}_{y}(f_1,g_2) = x^4-8x^3+16x^2+\tfrac{1}{2}x-\tfrac{31}{16}.
\]
These two extra polynomials provide three extra real roots and hence the 1-dimensional CAD produced using $P_{\mathcal{E}}(\mathcal{A})$ this time has 31 cells.

In Figure \ref{fig:WE4} we again graph the four curves this time with solid vertical lines highlighting the real solutions of $P_{\mathcal{E}}(\mathcal{A})$.  By comparing with Figure \ref{fig:WE2} we see that more points in the CAD of $\R^1$ have been identified for the TTICAD of $\Psi$ than the TTICAD of $\Phi$ (15 instead of 12) but that there is still a saving over the sign-invariant CAD (which had 20, the five extra solutions indicated by dotted lines).
The lack of an EC in the second clause has meant that the asymptote of $g_2$ and its intersections with $f_1$ have been identified.
However, note that the intersections of $g_1$ with $f_2$ and $g_2$ and have not been.
Figure \ref{fig:WE5} magnifies a region of Figure \ref{fig:WE4}.  Compare with Figure \ref{fig:WE3} to see the dashed line has become solid, while the dotted line remains unidentified by the TTICAD.

Note that we are unable to use \citep{McCallum1999} to study $\Psi$ as there is no polynomial equation logically implied (either explicitly or implicitly) by this formula.  Hence there are no dashed lines and the choice is between the sign-invariant CAD with 317 cells or the TTICAD, which for this example has 183 cells.

\section{Algorithm}
\label{sec:Algorithm}

\subsection{Description and Proof}
\label{subsec:Algorithm}

We describe carefully Algorithm \ref{alg:TTICAD}.  This will create a TTICAD of $\R{}^n$
for a list of QFFs $\{ \phi_i \}_{i=1}^t$ in variables ${\bf x} = x_1 \prec x_2 \prec \cdots \prec x_n$, where each $\phi_i$ has at most one designated EC $f_i = 0$ of positive degree (there may be other non-designated ECs).

It uses a subalgorithm \texttt{CADW}, which was validated by \cite{McCallum1998}.  The input of {\tt CADW} is: $r$, a positive integer and $A$, a set of $r$-variate integral polynomials. The output is a boolean $w$ which if true is accompanied by an order-invariant CAD for $A$ (represented as a list of indices $I$ and sample points $S$).

Let $A_i$ be the set of all polynomials occurring in $\phi_i$.  If $\phi_i$ has a designated EC then put $E_i = \{f_i\}$ and if not put $E_i=A_i$.
Let $\mathcal{A}$ and $\mathcal{E}$ be the lists of the $A_i$ and $E_i$ respectively.
Our algorithm effectively defines the reduced projection of $\mathcal{A}$ with respect to $\mathcal{E}$ in terms of the special case of this definition from the previous section.
The definition amounts to
\begin{equation}
\label{eqn:TTIProj-G}
P_{\mathcal{E}}(\mathcal{A}) := C \cup P_{\mathcal{F}}(\mathcal{B}).
\end{equation}
Here $C$ is the set of contents of all the elements of all $A_i$; $\mathcal{B}$ the list $\{B_i\}_{i=1}^t$ such that $B_i$ is the finest\footnote{A decomposition into irreducibles. This avoids various technical problems.} squarefree basis for the set ${\rm prim}(A_i)$ of primitive parts of elements of $A_i$ which have positive degree; and $\mathcal{F}$ is the list $\{F_i\}_{i=1}^t$, such that $F_i$ is the finest squarefree basis for ${\rm prim}(E_i)$.
(The reader may notice that this notation and the definition of $P_\mathcal{E}(\mathcal{A})$ here is analogous to the work in Section 5 of \citep{McCallum1999}.)

\begin{algorithm}[ht]
\label{alg:TTICAD}
\SetKwInOut{Input}{Input}\SetKwInOut{Output}{Output}
\Input{A list of quantifier-free formulae $\{ \phi_i \}_{i=1}^t$ in variables $x_1,\ldots,x_n$. Each $\phi_i$ has at most one designated EC $f_i = 0$.
}
\Output{Either
$\bullet$ $\mathcal{D}:$ A CAD of $\R{}^n$ (described by lists $I$ and $S$ of cell indices and sample points) which is truth table invariant for the list of input formulae; or \quad
$\bullet$~{\bf FAIL}: If $\mathcal{A}$ is not well-oriented with respect to $\mathcal{E}$ (Def \ref{def:WO}).
}
\BlankLine
\For{$i = 1 \dots t$}{
 If there is no designated EC then set $E_i:=A_i$ and otherwise set $E_i:=\{f_i\}$\;
 Compute the finest squarefree basis $F_i$ for ${\rm prim}(E_i)$\;
}
Set $F \leftarrow \cup_{i=1}^t F_i$\;
\eIf{$n=1$}{
 Isolate the real roots of the polynomials in $F$ and thus form cell indices and sample points for a CAD of $\R$
 \label{step:base1}\;
 \Return $I$ and $S$ for $\mathcal{D}$
 \label{step:base2}\;
}
{
 \For{$i = 1 \dots t$}{
   Extract the set $A_i$ of polynomials in $\phi_i$\;
   Compute the set $C_i$ of contents of the elements of $A_i$\;
   Compute the set $B_i$, the finest squarefree basis for ${\rm prim}(A_i)$\;
 }
 Set $C := \cup_{i=1}^t C_i$,
 $\mathcal{B} := (B_i)_{i=1}^t$ and
 $\mathcal{F} := (F_i)_{i=1}^t$\;
 Construct the projection set $\mathfrak{P} := C \cup P_{\mathcal{F}}(\mathcal{B})$
 \label{step:mfP}\;
 Attempt to construct a lower-dimensional CAD:
 $w',I',S' := {\tt CADW}(n-1,\mathfrak{P})$
 \label{step:cadw}\;
 \If{$w' = false$}{
   \Return {\bf FAIL} (since $\mathfrak{P}$ is not well oriented)
   \label{step:notWO1}\;
 }
 $I \leftarrow\emptyset$; $S \leftarrow\emptyset$
 \label{step:lifting1}\;
 \For{each cell $c \in \mathcal{D}'$}{
   $L_c \leftarrow \{\}$\;
   \For{$i = 1,\ldots t$}{
     \eIf{any $f \in E_i$ is nullified on $c$ \label{step:CleverCheck}}{
       \eIf{$\dim(c)>0$}{
         \Return {\bf FAIL} (since $\{ \phi_i \}_{i=1}^t$ is not well oriented)
         \label{step:notWO2}\;
       }{
         $L_c \leftarrow L_c \cup B_i$
         \label{step:addthebi}\;
       }
     }{
       $L_c \leftarrow L_c \cup F_i$\;
     }
   }
   Generate a stack over $c$ using $L_c$: construct cell indices and sample points for the stack over $c$ of the  polynomials in $L_c$, adding them to $I$ and $S$
   \label{step:lifting2}\;
 }
 \Return $I$ and $S$ for $\mathcal{D}$\;
}
\caption{TTICAD Algorithm}
\end{algorithm}

We shall prove that, provided the input satisfies the condition of well-orientedness given in Definition \ref{def:WO}, the output of  Algorithm \ref{alg:TTICAD} is indeed a TTICAD for $\{\phi_i\}$.
We first recall the more general notion of well-orientedness from \citep{McCallum1998}.  The boolean output of \texttt{CADW} is false if the input set was not well-oriented in this sense.
\begin{defn}\label{def:WO-original}
A set $A$ of $n$-variate polynomials is said to be \emph{well oriented} if whenever $n > 1$,
every $f \in {\rm prim}(A)$ is nullified by at most a finite number of points
in $\R^{n-1}$, and (recursively) $P(A)$ is well-oriented.
\end{defn}

This condition is required for \texttt{CADW} since the validity of this algorithm relies on Theorem \ref{thm:McC1} which holds only when polynomials do not vanish identically.  The conditions allows for a finite number of these nullifications since this indicates a problem on a zero cell, that is a single point.  In such cases it is possible to replace the nullified polynomial by a so called \emph{delineating polynomial} which is not nullified and can be used in place to ensure the delineability of the other.  The use of these is part of the verified algorithm \texttt{CADW} \citep{McCallum1998} and they are studied in detail by \cite{Brown2005a}.

We now define our new notion of well-orientedness for the lists of sets $\mathcal{A}$ and $\mathcal{E}$.
\begin{defn}\label{def:WO}
We say that $\mathcal{A}$ is {\em well oriented with respect to} $\mathcal{E}$ if, whenever $n > 1$,
every polynomial $f \in E$ is nullified by at most a finite number of points in $\R^{n-1}$, and $P_{\mathcal{F}}(\mathcal{B})$ is well-oriented in the sense of Definition \ref{def:WO-original}.
\end{defn}

It is clear than Algorithm \ref{alg:TTICAD} terminates.  We now prove that it is correct using the theory developed in Section \ref{sec:TTIProj}.

\begin{thm}
The output of Algorithm \ref{alg:TTICAD} is as specified.
\end{thm}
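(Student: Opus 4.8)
The plan is to treat the base case $n=1$ and the case $n>1$ separately, using the already-validated subalgorithm \texttt{CADW} \citep{McCallum1998} as a black box and invoking Theorem~\ref{thm:Main} for the lifting. Termination having been observed, the case $n=1$ is direct: the algorithm decomposes $\R$ at the real roots of $F=\bigcup_i F_i$, so on each resulting cell every element of each $F_i$, hence every $f\in E_i$, has constant sign; if $\phi_i$ has an EC $f_i=0$ this already makes $\phi_i$ false on the open intervals and trivially truth-invariant on the root points, while if $\phi_i$ has no EC then $F_i=B_i$ and all of $A_i$ is sign-invariant, so $\phi_i$ is truth-invariant.

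For $n>1$ I would first check the two \textbf{FAIL} exits. A \textbf{FAIL} at line~\ref{step:notWO1} occurs when \texttt{CADW}$(n-1,\mathfrak{P})$ reports that $\mathfrak{P}=C\cup P_{\mathcal F}(\mathcal B)$ is not well oriented; one must argue that this forces $P_{\mathcal F}(\mathcal B)$ itself to fail Definition~\ref{def:WO-original} — the only delicate point being that the extra content set $C$ neither hides nor creates a nullification — so that $\mathcal A$ is not well oriented with respect to $\mathcal E$ in the sense of Definition~\ref{def:WO}. A \textbf{FAIL} at line~\ref{step:notWO2} occurs when some $f\in E_i$ is nullified on a cell of positive dimension, i.e.\ at infinitely many points of $\R^{n-1}$, which again contradicts Definition~\ref{def:WO}; so in both cases \textbf{FAIL} is the specified output.

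The heart of the proof is the lifting phase. Since line~\ref{step:notWO1} was not triggered, \texttt{CADW} has returned an order-invariant CAD $\mathcal D'$ of $\R^{n-1}$ for $\mathfrak P$, so every element of $P_{\mathcal F}(\mathcal B)$ is order-invariant on each cell $c\in\mathcal D'$; this is precisely the hypothesis of Theorem~\ref{thm:Main} for the lists $\mathcal B$, $\mathcal F$. Fixing $c$ and an index $i$: if no $f\in E_i$ is nullified on $c$ then no element of $F_i$ vanishes identically on $c$ (such a factor would nullify the $f\in E_i$ it divides) and $\cont(f_i)$ is nonzero on $c$ (it divides every $x_n$-coefficient of $f_i$); Theorem~\ref{thm:Main} then makes the elements of $F_i$ delineable on $c$ with pairwise equal-or-disjoint sections, and its ``moreover'' clause makes every element of $B_i\setminus F_i$ sign-invariant on each such section. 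On a sector of the stack over $c$ all of $L_c\supseteq F_i$ is nonzero, hence $f_i\ne0$ and $\phi_i$ is false (or, when there is no EC, $F_i=B_i$ and all of $A_i$ is sign-invariant); on a section cell either $f_i\ne0$ throughout, so $\phi_i$ is false, or that cell is a section over $c$ of some element of $F_i$, on which all of $B_i$ together with the constant-sign contents $C_i$ — hence all of $A_i$ — is sign-invariant, so $\phi_i$ is truth-invariant. If instead some $f\in E_i$ is nullified on $c$, then $\dim(c)=0$ (else we would have failed), the algorithm adds $B_i$ to $L_c$, and the stack over the point $c$ is obtained by splitting a line at every real root of every polynomial of $A_i$ restricted to it, so $\phi_i$ is sign-invariant, hence truth-invariant, on every cell of that stack. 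These conclusions hold for all $i$ at once, and the delineability needed to form the stacks comes from the first part of Theorem~\ref{thm:Main} applied to $F$ (or is trivial over $0$-cells), so $\mathcal D$ is a genuine CAD of $\R^n$ that is truth table invariant for $\{\phi_i\}_{i=1}^t$.

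I expect the main obstacle to be not a single hard step but the careful translation between Theorem~\ref{thm:Main}, stated for irreducible bases, and the algorithm's handling of arbitrary polynomials through contents and finest squarefree bases: one has to track that a polynomial $g\in A_i$ is sign-invariant because $\cont(g)\in C$ is order-invariant and free of $x_n$ while $\prim(g)$ is a product of $B_i$-elements governed by Theorem~\ref{thm:Main}; that ``$f\in E_i$ is nullified on $c$'' is equivalent to some element of $F_i$ (or some content) vanishing identically on $c$; and that the two well-orientedness notions (Definitions~\ref{def:WO-original} and~\ref{def:WO}) match up via the identity $P(E)=\bigcup_i P(E_i)\cup{\rm RES}^{\times}(\mathcal E)$ and the inclusion $P(F)\subseteq P_{\mathcal F}(\mathcal B)$. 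Getting the $0$-dimensional-cell bookkeeping exactly right — why adding $B_i$ there restores truth-invariance, and why a positive-dimensional nullification genuinely cannot be accommodated within this projection scheme — is the other place demanding care.
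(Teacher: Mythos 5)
Your proposal is correct and follows essentially the same route as the paper: dispatch $n=1$ directly, handle the two \textbf{FAIL} exits by unwinding the two well-orientedness notions, then use the order-invariance from \texttt{CADW} to invoke Theorem~\ref{thm:Main} and argue truth-invariance cell-by-cell over sections, sectors, and the 0-dimensional special case. The one small imprecision is your parenthetical reason that $\cont(f_i)$ is nonzero on $c$ ``because it divides every $x_n$-coefficient of $f_i$''; what actually does the work is that $\cont(f_i)\in\mathfrak{P}$ is \emph{sign-invariant} on $c$ (so vanishing at one point would force it to vanish everywhere on $c$, nullifying $f_i$ and contradicting the case assumption), but this is cosmetic and the rest of your argument supplies the needed ingredients.
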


\begin{proof}
We must show that when the input is well-oriented the output is a TTICAD, (each $\phi_i$ has constant truth value in each cell of $\mathcal{D}$), and \textbf{FAIL} otherwise.

If the input was univariate then it is trivially well-oriented.  The algorithm will construct a CAD $\mathcal{D}$ of $\R^1$ using the roots of the irreducible factors of the polynomials in $E$ (steps \ref{step:base1} to \ref{step:base2}).
At each 0-cell all the polynomials in each $\phi_i$ trivially have constant signs, and hence every $\phi_i$ has constant truth value.  In each 1-cell no EC can change sign and so every $\phi_i$ has constant truth value $false$, unless there are no ECs in any clause.  In this case the algorithm would have constructed a CAD using all the polynomials and hence on each 1-cell no polynomial changes sign and so each clause has constant truth value.

From now on suppose $n > 1$.  If $\mathfrak{P} = C \cup P_{\mathcal{F}}(\mathcal{B})$ is not well-oriented in the sense of Definition \ref{def:WO-original} then \texttt{CADW} returns $w'$ as false.  In this case the input is not well oriented in the sense of Definition \ref{def:WO} and Algorithm \ref{alg:TTICAD} correctly returns \textbf{FAIL} in step \ref{step:notWO1}.
Otherwise, 
we have $w'= true$ with $I'$ and $S'$ specifying a CAD, $\mathcal{D}'$, which is order-invariant with respect to $\mathfrak{P}$ (by the correctness of \texttt{CADW}, as proved in \citep{McCallum1998}).
Let $c$, a submanifold of $\R{}^{n-1}$, be a cell of $\mathcal{D}'$ and let $\alpha$ be its sample point.

We suppose first that the dimension of $c$ is positive.
If any polynomial $f \in E$ vanishes identically on $c$ then the input is not well oriented in the sense of Definition \ref{def:WO} and the algorithm correctly returns \textbf{FAIL} at step \ref{step:notWO2}.
Otherwise, we know that the input list was certainly well-oriented.  Since no polynomial $f \in E$ vanishes then no element of the basis $F$ vanishes identically on $c$ either.  Hence, by Theorem \ref{thm:Main}, applied with $\mathcal{A} = \mathcal{B}$ and
$\mathcal{E} = \mathcal{F}$, each element of $F$ is delineable on $c$, and
the sections over $c$ of the elements of $F$ are pairwise disjoint.
Thus the sections and sectors over $c$ of the elements of $F$ comprise
a stack $\Sigma$ over $c$.  Furthermore, the last conclusion of Theorem \ref{thm:Main} assures us that, for each $i$, every element of $B_i \setminus F_i$ is sign-invariant in each section over $c$ of every element of $F_i$.  Let $1 \le i \le t$.  We shall show that each $\phi_i$ has constant truth value in both the sections and sectors of $\Sigma$.

If $\phi_i$ has a designated EC then let $f_i$ denote the constraint polynomial; otherwise let $f_i$ denote an arbitrary element of $A_i$.

Consider first a section $\sigma$ of $\Sigma$. Now $f_i$ is a product of its content ${\rm cont}(f_i)$ and some elements of the basis $F_i$.  But ${\rm cont}(f_i)$, an element of $\mathfrak{P}$, is sign-invariant (indeed order-invariant) in the whole cylinder $c \times \R$ and hence, in particular, in $\sigma$. Moreover all of the elements of $F_i$ are sign-invariant in $\sigma$, as was noted previously. Therefore $f_i$ is sign-invariant in $\sigma$.  If $\phi_i$ has no constraint (and so $f_i$ denotes an arbitrary element of $A_i$) then this implies that $\phi_i$ has constant truth value in $\sigma$. So consider from now on the case in which $f_i = 0$ is the designated constraint polynomial of $\phi_i$.

If $f_i$ is positive or negative in $\sigma$ then $\phi_i$ has constant truth value $false$ in $\sigma$.  So suppose that $f_i = 0$ throughout $\sigma$. It follows that $\sigma$ must be a section of some element of the basis $F_i$.
Let $g \in A_i \setminus E_i$ be a non-constraint polynomial in $A_i$.
Now, by the definition of $B_i$, we see $g$ can be written as
\[
g = {\rm cont}(g) h_1^{p_1} \cdots h_k^{p_k}
\]
where $h_j \in B_i, p_j \in \mathbb{N}$.
But ${\rm cont}(g)$, in $\mathfrak{P}$, is sign-invariant (indeed order-invariant) in the whole cylinder $c \times \R$, and hence in particular in $\sigma$.  Moreover each $h_j$ is sign-invariant in $\sigma$, as was noted previously.  Hence $g$ is sign-invariant in $\sigma$.  (Note that in the case where $g$ does not have main variable $x_n$ then $g = {\rm cont}(g)$ and the conclusion still holds).  Since $g$ was an arbitrary element of $A_i \setminus E_i$, it follows that all polynomials in $A_i$ are sign-invariant in $\sigma$, hence that $\phi_i$ has constant truth value in $\sigma$.

Next consider a sector $\sigma$ of the stack $\Sigma$, and notice that at least one such sector exists.
As observed above, ${\rm cont}(f_i)$ is sign-invariant in $c$, and $f_i$ does not vanish identically on $c$. Hence ${\rm cont}(f_i)$ is non-zero throughout $c$. Moreover each element of the basis $F_i$ is delineable on $c$. Hence $f_i$ is nullified by no point of $c$.  It follows from this that the algorithm does not return \textbf{FAIL} during the lifting phase.  It follows also that $f_i \neq 0$ throughout $\sigma$. Hence $\phi_i$ has constant truth value $false$ in $\sigma$.

It remains to consider the case in which the dimension of $c$ is 0.  In this case the roots of the polynomials in the lifting set $L_c$ constructed by the algorithm determine a stack $\Sigma$ over $c$.  Each $\phi_i$ trivially has constant truth value in each section (0-cell) of this stack, and the same can routinely be shown for each sector (1-cell) of this stack.
\end{proof}

\subsection{TTICAD via the ResCAD Set}
\label{subsec:ResCAD}

When no $f \in E$ is nullified there is an alternative implementation of TTICAD which would be simple to introduce into existing CAD implementations.
Define 
\begin{equation*}
  \mathcal{R}(\{ \phi_i\}) = E \cup {\textstyle \bigcup_{i=1}^t} \left\{ {\rm res}_{x_n}(f,g) \mid f\in E_i, g \in A_i, g \notin E_i \right\}.
\end{equation*}
to be the {\em ResCAD Set} of $\{\phi_i\}$.

\begin{thm}
\label{thm:ResCAD}
   Let $\mathcal{A} = ( A_i)_{i=1}^t$ be a list of irreducible bases $A_i$
and let $\mathcal{E} = ( E_i )_{i=1}^t$ be a list of non-empty subsets
$E_i \subseteq A_i$.   Then we have
\begin{equation*}
{P}(\mathcal{R}(\{ \phi_i\})) = {P}_{\mathcal{E}}(\mathcal{A}).
\end{equation*}
\end{thm}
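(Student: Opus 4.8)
The plan is to compute $P(\mathcal{R}(\{\phi_i\}))$ explicitly and identify it, piece by piece, with $P_{\mathcal{E}}(\mathcal{A})$. I would write $\mathcal{R} := \mathcal{R}(\{\phi_i\}) = E \cup G$, where $E = \bigcup_{i=1}^t E_i$ and $G = \bigcup_{i=1}^t G_i$ with $G_i := \{ {\rm res}_{x_n}(f,g) \mid f \in E_i,\ g \in A_i \setminus E_i \}$. The first ingredient is an identity already observed in the proof of Theorem \ref{thm:Main}: because $E$ is the union of the irreducible bases $E_i$, we have $P(E) = \bigcup_{i} P(E_i) \cup {\rm RES}^{\times}(\mathcal{E})$. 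Concretely, the coefficient and discriminant entries of $P(E)$ are the union over $i$ of those of the $P(E_i)$, while the resultant entries ${\rm res}_{x_n}(f,\hat f)$ of $P(E)$ split according to whether $f$ and $\hat f$ lie in a common $E_i$ — contributing exactly the resultant entries of the $P(E_i)$ — or in distinct bases $E_i, E_j$; using the symmetry of the resultant and discarding the vanishing cases $f = \hat f$, the latter contribute precisely ${\rm RES}^{\times}(\mathcal{E})$ as defined in \eqref{eqn:RESX}.

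The crucial second step would be to verify that adjoining $G$ before applying $P$ adds nothing but $G$ itself, i.e. $P(E \cup G) = P(E) \cup G$ (up to the trivial simplifications applied throughout). The point is that every element of $G$ is a resultant taken with respect to $x_n$ and therefore lies in $\mathbb{Z}[x_1,\dots,x_{n-1}]$ — it has degree $0$ in the projection variable $x_n$. For such a polynomial $p$: its coefficient set with respect to $x_n$ is $\{p\}$ (so $G \subseteq P(E \cup G)$); it has no discriminant with respect to $x_n$; and for any $q$ of positive degree $e$ in $x_n$ one has ${\rm res}_{x_n}(p,q) = p^{e}$, a power of $p$ and hence absorbed, while ${\rm res}_{x_n}$ of two elements of $G$ is a constant. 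Thus the only new polynomials produced by pairs involving an element of $G$ are the elements of $G$ themselves, giving $P(E \cup G) = P(E) \cup G$.

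Assembling the two steps,
\begin{align*}
P(\mathcal{R}) &= P(E) \cup G = {\textstyle\bigcup_{i}} P(E_i) \cup {\rm RES}^{\times}(\mathcal{E}) \cup {\textstyle\bigcup_{i}} G_i \\
 &= {\textstyle\bigcup_{i}} \big( P(E_i) \cup G_i \big) \cup {\rm RES}^{\times}(\mathcal{E}) = {\textstyle\bigcup_{i}} P_{E_i}(A_i) \cup {\rm RES}^{\times}(\mathcal{E}) = P_{\mathcal{E}}(\mathcal{A}),
\end{align*}
where the penultimate equality is just the definition $P_{E_i}(A_i) = P(E_i) \cup \{ {\rm res}_{x_n}(f,g) \mid f \in E_i,\, g \in A_i,\, g \notin E_i \} = P(E_i) \cup G_i$, and the last is \eqref{eqn:TTIProj-S}.

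I expect the genuine work to be concentrated in the second step: one must check carefully that running the full projection operator $P$ over the heterogeneous set $E \cup G$ — in which the elements of $G$ live in one fewer variable than those of $E$ — produces no spurious extra polynomials and returns the members of $G$ unchanged. This is exactly where the fact that resultants with respect to $x_n$ have degree $0$ in $x_n$, together with the conventions for ${\rm coeff}$, ${\rm disc}$ and ${\rm res}$ on such polynomials and the standing simplifications (removal of constants and of repeated irreducible factors), does the work. The remaining manipulations are routine, the only other point of care being to match the $i \neq j$ cross-resultants of $E$ with ${\rm RES}^{\times}(\mathcal{E})$, which uses nothing more than symmetry of the resultant and the exclusion of the degenerate case $f = \hat f$ built into \eqref{eqn:RESX}.
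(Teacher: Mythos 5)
The paper omits the proof of Theorem \ref{thm:ResCAD} as ``straightforward'', and your argument is correct and is essentially the intended one: the splitting $P(E)=\bigcup_i P(E_i)\cup{\rm RES}^{\times}(\mathcal{E})$ is exactly the observation already made in the proof of Theorem \ref{thm:Main}, and combining it with $P_{E_i}(A_i)=P(E_i)\cup G_i$ (with $G_i$ as you define them) reduces everything to showing that adjoining $G=\bigcup_i G_i$ contributes only $G$ itself. One remark on how you carry out that second step: you treat the elements of $G$ as degree-zero polynomials in $x_n$ and argue via conventions for ${\rm coeff}$, ${\rm disc}$ and ${\rm res}$, but the claim that ${\rm res}_{x_n}(p,q)=p^{\deg_{x_n} q}$ is ``absorbed'' does not literally follow from the paper's stated simplifications, which only discard constants and repeats up to constant multiple ($p^{e}$ is neither when $e>1$). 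The cleaner route, and the one matching how such input is actually processed (cf.\ Corollary \ref{cor:ResCAD}), is to read $P$ in the form (\ref{eq:P}): since the elements of $G$ do not involve $x_n$ at all, they fall into $\cont(\mathcal{R}(\{\phi_i\}))$ and are passed down unchanged, while the irreducible basis of the primitive positive-degree part of $\mathcal{R}(\{\phi_i\})$ is just $E$; hence $P(\mathcal{R}(\{\phi_i\}))=G\cup P(E)$ with no cross terms between $G$ and $E$ ever being formed. With that reading, your assembly of the pieces is exactly the omitted proof.
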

\noindent The proof is straightforward and so omitted here.

\begin{cor}
\label{cor:ResCAD}
If no $f \in E$ is nullified by a point in $\R{}^{n-1}$ then inputting $\mathcal{R}(\{ \phi_i\})$ into any algorithm which produces a sign-invariant CAD using McCallum's projection operator $P$ will result in the TTICAD for $\{ \phi_i\}$ produced by Algorithm \ref{alg:TTICAD}.
\end{cor}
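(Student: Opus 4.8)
The plan is to deduce the corollary from Theorem \ref{thm:ResCAD} together with the correctness proof of Algorithm \ref{alg:TTICAD}. The key point is that, under the hypothesis that no $f \in E$ is nullified by a point in $\R^{n-1}$, the lifting phase of Algorithm \ref{alg:TTICAD} degenerates to the ordinary lifting phase of a sign-invariant CAD algorithm driven by McCallum's operator $P$. So the two constructions coincide cell-by-cell, provided we can match up the projection sets at every level of the recursion.

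First I would address the bottom of the recursion. Theorem \ref{thm:ResCAD} gives $P(\mathcal{R}(\{\phi_i\})) = P_{\mathcal{E}}(\mathcal{A})$ for the case of irreducible bases; combined with the content/primitive-part bookkeeping built into the general definition (\ref{eqn:TTIProj-G}), one checks that the first projection set computed by a sign-invariant algorithm on input $\mathcal{R}(\{\phi_i\})$ equals the set $\mathfrak{P} = C \cup P_{\mathcal{F}}(\mathcal{B})$ computed in step \ref{step:mfP} of Algorithm \ref{alg:TTICAD}, up to the trivial simplifications (removal of constants, duplicates up to constant multiple) that both algorithms apply. From that point on, both algorithms invoke exactly the same subalgorithm \texttt{CADW} (equivalently, the same repeated application of $P$) on the \emph{same} set $\mathfrak{P}$, so either both return \textbf{FAIL} — and by Definition \ref{def:WO} and the no-nullification hypothesis this happens precisely when $\mathfrak{P}$ is not well-oriented in the sense of Definition \ref{def:WO-original}, i.e. exactly when the sign-invariant algorithm fails on $\mathcal{R}(\{\phi_i\})$ — or both produce the identical order-invariant CAD $\mathcal{D}'$ of $\R^{n-1}$, with the same indices and sample points.

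It then remains to compare the final lift from $\R^{n-1}$ to $\R^n$. Here I would invoke the no-nullification hypothesis directly: in step \ref{step:CleverCheck} of Algorithm \ref{alg:TTICAD} the test ``any $f \in E_i$ is nullified on $c$'' fails for every cell $c$ of $\mathcal{D}'$, so the algorithm never takes the branch at step \ref{step:notWO2} or step \ref{step:addthebi}, and the lifting set is always $L_c = \bigcup_{i=1}^t F_i = F$, independent of $c$. Thus Algorithm \ref{alg:TTICAD} lifts over every cell using the single fixed set $F$ of irreducible factors — which is exactly what a sign-invariant CAD algorithm does when it lifts over $\mathcal{D}'$ using the top-level polynomials obtained from $\mathcal{R}(\{\phi_i\})$ (whose positive-degree primitive parts have $F$ as their finest squarefree basis). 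Hence the stacks constructed over each cell agree, and the two output CADs are identical as lists of indices and sample points.

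The main obstacle I anticipate is purely bookkeeping rather than conceptual: carefully reconciling the general ``content plus finest squarefree basis'' definition (\ref{eqn:TTIProj-G}) used by Algorithm \ref{alg:TTICAD} with the irreducible-basis form of Theorem \ref{thm:ResCAD}, and checking that the standard simplifications performed by a generic sign-invariant implementation on input $\mathcal{R}(\{\phi_i\})$ do not produce a set differing from $\mathfrak{P}$ in any way that matters (they only differ by constants and repeated factors, which affect neither the real root structure nor order-invariance). Once that identification is in place, the corollary follows immediately from the equality of projection sets at every level and the fact that both algorithms share the same lifting procedure.
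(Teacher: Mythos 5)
Your proposal is correct and is exactly the argument the corollary is implicitly relying on: the paper itself omits the proof (it omits even the proof of Theorem~\ref{thm:ResCAD}, declaring it ``straightforward''), so there is no written proof in the paper to compare against, and your reconstruction is the natural one. You correctly identify that two things have to be matched: (i) the projection set computed at the top level, which Theorem~\ref{thm:ResCAD} (with the content/primitive-part bookkeeping folded in, since the resultants in $\mathcal{R}(\{\phi_i\})$ have no $x_n$ and therefore all become contents) makes equal to $\mathfrak{P}$, after which both procedures are running the same recursion of $P$ on the same $(n-1)$-variate set; and (ii) the final lift, where the no-nullification hypothesis guarantees Algorithm~\ref{alg:TTICAD} never takes the \textbf{FAIL} or $L_c \cup B_i$ branch, so $L_c = F$ on every cell, which coincides with the finest squarefree basis of the positive-degree primitive parts of $\mathcal{R}(\{\phi_i\})$ that a generic $P$-based sign-invariant lifter would use. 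The one phrase to soften slightly is the claim that the two constructions yield ``the same indices and sample points'': what the argument actually delivers is that the cell decomposition is identical (same projection polynomials at every level, same lifting polynomials over every cell), while sample-point representations may vary across implementations; this is the sense in which the corollary should be read, and it is what matters. The bookkeeping concern you raise at the end — reconciling equation~(\ref{eqn:TTIProj-G}) with the irreducible-basis form of Theorem~\ref{thm:ResCAD} — is genuine but is exactly the kind of routine verification the authors had in mind when calling the result straightforward, and your sketch of how to discharge it is sound.
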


Corollary \ref{cor:ResCAD} gives a simple way to compute TTICADs using existing CAD implementations based on McCallum's approach, such as {\sc Qepcad}.

\section{Utilising projection theory for improvements to lifting}
\label{sec:ImprovedLifting}

Consider the case when the input to Algorithm \ref{alg:TTICAD} is a single QFF $\{\phi\}$ with a declared EC.  In this case the reduced projection operator $P_{\mathcal{E}}(\mathcal{A})$ produces the same polynomials as the operator $P_{E}(A)$ and so one may expect the TTICAD produced to be the same as the CAD 
produced by an implementation of \citep{McCallum1999} such as {\sc Qepcad}.
In practice this is not the case because Algorithm \ref{alg:TTICAD} makes use of the reduced projection theory in the lifting phase as well as the projection phase.

\cite{McCallum1999} discussed how the theory of a reduced projection operator would improve the projection phase of CAD, by creating fewer projection polynomials.  The only modification to the lifting phase of Collins' CAD algorithm described was the need to check the well-orientedness condition of Definition \ref{def:WO-original}.

In this section we note two subtleties in the lifting phase of Algorithm \ref{alg:TTICAD} which result in efficiencies that could be replicated for use with the original theory.  In fact, the \texttt{ProjectionCAD} package \citep{EWBD14} discussed in Section \ref{subsec:PCAD} has commands for building CADs invariant with respect to a single EC which does this.

\subsection{A finer check for well-orientedness}
\label{subsec:IL_finerWO}

Theorem 2.3 of \citep{McCallum1999} verified the use of $P_E(A)$.  The proof uses Theorem \ref{thm:McC1} to conclude sign-invariance for the polynomial defining the EC, and Theorem \ref{thm:McC2} to conclude sign-invariance for the other polynomials only when the EC was satisfied.

To apply Theorem \ref{thm:McC1} here we need the EC polynomial and the projection polynomials obtained by repeatedly applying $P$ to have a finite number of nullification points.  Meanwhile, the application of Theorem \ref{thm:McC2} requires that the resultants of the EC polynomial with the others polynomials have no nullification points.  Both these requirements are guaranteed by the input satisfying Definition \ref{def:WO-original}, the condition used in \citep{McCallum1999}.  However, this also requires that other projection polynomials, including the non-ECs in the input, to have no nullification points.  

In Algorithm \ref{alg:TTICAD}, step \ref{step:CleverCheck} only checks for nullification of the polynomials in $E_i$ (in this context meaning only the EC).
Hence this algorithm is checking the necessary conditions but not whether the non-ECs (in the main variable) are nullified.

\begin{exmp}
Assume the variable ordering $x \prec y \prec z \prec w$ and consider the polynomials
\[
f = x+y+z+w, \qquad g = zy - x^2w
\]
forming the formula $f=0 \wedge g<0$.
We could analyse this using a sign-invariant CAD
with 557 cells but it is more efficient to make use of the EC.  Our implementation of Algorithm \ref{alg:TTICAD} produces a CAD with 165 cells, while declaring the EC in QEPCAD results in 221 cells (the higher number is explained in subsection \ref{subsec:IL_smallerL}).  \textsc{Qepcad} also prints:
\begin{verbatim}
      Error! Delineating polynomial should be added over cell(2,2)!
\end{verbatim}
indicating the output may not be valid.  The error message was triggered by the nullification of $g$ when $x=y=0$ which does not actually invalidate the theory. \textsc{Qepcad} is checking for nullification of all projection polynomials leading to unnecessary errors.
\end{exmp}
In fact, we can take this idea further in the case where $E_i=A_i$ for some $i$: in such a case we do not need to check any elements of (that particular) $E_i$ for nullification (since we are using the theory of \cite{McCallum1998} and it is the final lift meaning only sign- (rather than order-) invariance is required.  

\subsection{Smaller lifting sets}
\label{subsec:IL_smallerL}

Traditionally in CAD algorithms the projection phase identifies a set of projection polynomials, which are then used in the lifting phase to create the stacks.  However when making use of ECs we can actually be more efficient by discarding some of the projection polynomials before lifting.
The non-ECs (in the main variable) are part of the set of projection polynomials, required in order to produce subsequent projection polynomials (when we take their resultant with the EC).  However, these polynomials are not (usually) required for the lifting since Theorem \ref{thm:McC2} can (usually) be used to conclude them sign-invariant in those sections produced when lifting with the EC.

Note that in Algorithm \ref{alg:TTICAD} the projection polynomials are formed from the input polynomials (in the main variable) and the set of polynomials $\mathfrak{P}$ constructed in step \ref{step:mfP} which are not in the main variable.  The  lower dimensional CAD $D$ constructed in step \ref{step:cadw} is guaranteed to be sign-invariant
for $\mathfrak{P}$.  In particular, $\mathfrak{P}$ contains the resultants of the EC with the other constraints and thus $D$ is already decomposing the domain into cells such that the presence of an intersection of $f$ and $g$ is invariant in each cell.  Hence for the final lift we need to build stacks with respect to $f$.

The following examples demonstrate these efficiencies.

\begin{exmp}
Consider from Section \ref{subsec:WE1} the circle $f_1$, hyperbola $g_1$ and sub-formula $\phi_1 := f_1=0 \land g_1<0$.  Building a sign-invariant CAD for these polynomials
uses 83 cells with the induced CAD of $\R$ identifying 7 points.
Declaring the EC in QEPCAD results in a CAD with 69 cells while using our implementation of Algorithm \ref{alg:TTICAD} produces a CAD with 53 cells.
Both implementations give the same induced CAD of $\R$ identifying 6 points but \textsc{Qepcad} uses more cells for the CAD of $\R^2$.  

In particular, \texttt{ProjectionCAD} has a cell where $x<-2$ and $y$ is free while \textsc{Qepcad} uses three cells, splitting where $g_1$ changes sign.  The splitting is not necessary for a CAD invariant with respect to the EC since $f_1$ is non-zero (and $\phi_1$ hence false) for all $x<-2$.
\end{exmp}

\begin{exmp}
Now consider all four polynomials from Section \ref{subsec:WE1} and the formula $\Phi$ from equation (\ref{eqn:ExPhi}).  In Section \ref{subsec:WE2} we reported that a TTICAD could be built with 105 cells compared to a CAD with 249 cells built invariant with respect to the implicit EC $f_1f_2=0$ using {\sc Qepcad}.  The improved projection resulted in the induced CAD of $\R$ identifying 12 points rather than 16.

We now observe that some of the cell savings was actually down to using smaller sets of lifting polynomials.  We may simulate the projection with respect to the implicit EC via Algorithm \ref{alg:TTICAD} by inputting a set consisting of the single formula
\[
\Phi' = f_1f_2=0 \land \Phi
\]
(note that logically $\Phi = \Phi'$).  The implementation in \texttt{ProjectionCAD} would then produce a CAD with 145 cells.  So we may conclude that improved lifting allowed for a saving of 
104 cells and improved projection a further saving of
40 cells.

\end{exmp}

In this example 72\% of the cell saving came from improved lifting and 28\% from improved projection, but we should not conclude that the former is more important.  The improvement is to the final lift (from a CAD of $\R^{n-1}$ to one of $\R^n$) and the first projection (from polynomials in $n$ variables to those with $n-1$).  Hence the savings from improved projection get magnified throughout the rest of the algorithm, and so as the number of variables in a problem increases so will the importance of this.

\begin{exmp}
\label{ex:3dPhi}
We consider a simple 3d generalisation of the previous example.  Let
\begin{align*}
\Phi^{3d} &= \big( x^2+y^2+z^2-1=0 \land xyz - \tfrac{1}{4}<0 \big) \\
&\qquad \lor \big( (x-4)^2+(y-1)^2+(z-2)^2-1=0 \land (x-4)(y-1)(z-2) - \tfrac{1}{4}<0 \big)
\end{align*}
and assume variable ordering $x \prec y \prec z$. Using Algorithm \ref{alg:TTICAD} on the two QFFs joined by disjunction gives a CAD with 109 cells while declaring the implicit EC in {\sc Qepcad} gives 739 cells.  Using Algorithm \ref{alg:TTICAD} on the single formula conjuncted with the implicit EC gave a CAD with 353 cells.  So in this case the improved lifting saves
386 cells and the improved projection a further
244 cells.
\end{exmp}

Moving from 2 to 3 variables has increased the proportion of the saving from improved projection from 28\% to 39\%.  The complexity analysis in the next section will further demonstrate the importance of improved projection, especially for the problem classes where no implicit EC exists (see also the experiments in Section \ref{subsec:IncreasedBenefit}).

\section{Complexity analyses of new contributions}
\label{sec:CA}

In this Section we closely follow the approach of our new analysis for the existing theory given in Section \ref{subsec:CA1}.  We will first study the special case of TTICAD when every QFF has an EC, before moving to the general case.  This is because such formulae may be studied using \cite{McCallum1999} and so our comparison must be with this as well as \cite{McCallum1998} in order to fully clarify the advantages of our new projection operator.

\subsection{When every QFF has an equational constraint}
\label{subsec:CA-TTI1}

We consider a sequence of $t$ QFFs which together contain $m$ constraints and are thus defined by at most $m$ polynomials.  We suppose further that each QFF has at least one EC, and that the maximum degree of any polynomial in any variable is $d$.
Let $\mathcal{A}$ be the sequence of sets of polynomials $A_i$ defining each formula, $\mathcal{E}$ the sequence of subsets $E_i \subset A_i$ defining the ECs, and denote the irreducible bases of these by $B_i$ and $F_i$.
 
\begin{lem}
\label{L:TTICAD1}
Under the assumptions above, $P_{\mathcal{E}}(\mathcal{A})$ has the $(M,2d^2)$-property with  
\begin{equation}
\label{eq:TTI-M}
M = \left\lfloor \tfrac{1}{2}(3m+1) \right\rfloor + \tfrac{1}{2}(t-1)t.
\end{equation}
\end{lem}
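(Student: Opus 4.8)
The plan is to imitate the proof of Lemma~\ref{L:EC}, dissecting $P_{\mathcal{E}}(\mathcal{A})$ into the same kinds of pieces, but to take the union of the \emph{low-degree} pieces across all $t$ formulae before applying Proposition~\ref{Prop:MD2}; this pooling is what allows the ``$\lceil m/2\rceil$'' contribution to be counted once rather than once per formula. Write $m_i=|A_i|$, so that $\sum_{i=1}^t m_i\le m$, and assume equality (the worst case for the count). Recall from Definition~\ref{def:TTIProj} that
\begin{equation*}
P_{\mathcal{E}}(\mathcal{A})=\bigcup_{i=1}^t P_{E_i}(A_i)\;\cup\;{\rm RES}^{\times}(\mathcal{E}),
\end{equation*}
where each $E_i=\{f_i\}$ has irreducible basis $F_i$ of combined degree at most $d$, and $A_i$ has irreducible basis $B_i$ (with the $(m_i,d)$-property, by Proposition~\ref{Prop:MD1}).

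First I would split each $P_{E_i}(A_i)$ exactly as in the proof of Lemma~\ref{L:EC}: a set $R_1^{(i)}$ consisting of $\cont(f_i)$ together with the leading coefficients, discriminants and internal pairwise resultants of the elements of $F_i$, which has the $(1,2d^2)$-property by the determinant/identity argument of step~1 of the proof of Lemma~\ref{L:SINew} (via \eqref{eqn:ResProp1}--\eqref{eqn:ResProp3}); a set $R_2^{(i)}$ of the non-leading coefficients of $F_i$, with the $(1,d^2)$-property; the set $R_3^{(i)}=\cont(A_i)\setminus\{\cont(f_i)\}$, with the $(m_i-1,d^2)$-property; and a set $R_4^{(i)}$ of the resultants of $f_i$ with the remaining members of $A_i$, which has the $(m_i-1,2d^2)$-property (group the relevant basis resultants by the polynomial of $A_i\setminus E_i$ they come from and apply \eqref{eqn:ResProp1}). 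For the cross-resultant set, for each pair $i<j$ the product of $\{{\rm res}_{x_n}(f,\hat f):f\in F_i,\ \hat f\in F_j\}$ equals ${\rm res}_{x_n}\bigl(\prod_{f\in F_i}f,\ \prod_{\hat f\in F_j}\hat f\bigr)$ by \eqref{eqn:ResProp1}, the resultant of two polynomials of combined degree at most $d$, hence of combined degree at most $2d^2$; so ${\rm RES}^{\times}(\mathcal{E})$ has the $(\tfrac12 t(t-1),2d^2)$-property, one set per pair.

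Then I would assemble. The sets $R_1^{(1)},\dots,R_1^{(t)}$ give $t$ sets of combined degree $2d^2$ that cannot be merged further; $\bigcup_i R_4^{(i)}$ has the $(m-t,2d^2)$-property; and ${\rm RES}^{\times}(\mathcal{E})$ contributes $\tfrac12 t(t-1)$ sets. The crucial step is the leftover: $\bigcup_i\bigl(R_2^{(i)}\cup R_3^{(i)}\bigr)$ has the $(m,d^2)$-property (since $\sum_i\bigl(1+(m_i-1)\bigr)=m$), and hence, by Proposition~\ref{Prop:MD2} with $\ell=2$, the $(\lceil m/2\rceil,2d^2)$-property. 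Adding up, $P_{\mathcal{E}}(\mathcal{A})$ is partitioned into
\begin{align*}
t+\left\lceil\tfrac{m}{2}\right\rceil+(m-t)+\tfrac12 t(t-1)
&=m+\left\lceil\tfrac{m}{2}\right\rceil+\tfrac12 t(t-1)\\
&=\left\lfloor\tfrac12(3m+1)\right\rfloor+\tfrac12(t-1)t
\end{align*}
sets, each of combined degree at most $2d^2$, the last equality being the elementary identity $m+\lceil m/2\rceil=\lfloor\tfrac12(3m+1)\rfloor$ (check the two parities of $m$). As a sanity check, $t=1$ makes ${\rm RES}^{\times}(\mathcal{E})$ empty and recovers Lemma~\ref{L:EC}.

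The main obstacle I anticipate is bookkeeping rather than anything conceptually deep: one must verify that passing to squarefree (irreducible) bases does not spoil the degree bounds (Proposition~\ref{Prop:MD1} and the multiplicativity \eqref{eqn:ResProp1} take care of this), that $\cont(f_i)$ genuinely fits into $R_1^{(i)}$ without its combined degree exceeding $2d^2$ (exactly as in Lemma~\ref{L:EC}), and, above all, that the $R_2$ and $R_3$ pieces are pooled over all formulae \emph{before} Proposition~\ref{Prop:MD2} is invoked: pairing within each formula first would give $\sum_i\lceil m_i/2\rceil$, which is in general strictly larger than $\lceil m/2\rceil$ and would overshoot the stated bound.
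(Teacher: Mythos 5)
Your proof is correct and follows essentially the same route as the paper's: the same decomposition of $P_{\mathcal{E}}(\mathcal{A})$ into per-formula pieces $R_1^{(i)},R_2^{(i)},R_3^{(i)},R_4^{(i)}$ (paralleling the proof of Lemma~\ref{L:EC}) plus the cross-resultant set, with the same crucial bookkeeping choice of pooling the $(\cdot,d^2)$-pieces across all $t$ formulae before invoking Proposition~\ref{Prop:MD2}, giving $\lceil m/2\rceil=\lfloor(m+1)/2\rfloor$ sets rather than $\sum_i\lceil m_i/2\rceil$. Your explicit warning about why pooling must precede pairing is a helpful clarification but the underlying argument and the arithmetic are the same as in the paper.
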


\begin{proof}    
From equations (\ref{eqn:TTIProj-G}) and (\ref{eqn:TTIProj-S}) we have
\begin{equation}
P_{\mathcal{E}}(\mathcal{A}) = \cont(\mathcal{A}) \cup \textstyle{\bigcup_{i=1}^t} P_{F_i}(B_i) \cup {\rm Res}^{\times} (\mathcal{F}).
\label{eq:TTIProj}
\end{equation}

\begin{enumerate}
\item Consider first the cross resultant set.  Let $T_1$ be the set of elements of $B_i$ which divide some element of $F_1$, and $T_i, i=2, \dots, t$ be those elements of $B_i$ which divide some element of $F_i$ and do not already occur in some $T_j: j<i$.   Then using the same argument as in the proof of Lemma \ref{L:SINew} step 2 we see that the cross-resultant set can be partitioned into $\tfrac{1}{2}(t-1)t$ sets of combined degrees at most $2d^2$.
\item We now consider the $P_{E_i}(A_i)$ since
\begin{equation}
\label{eq:ttiProof}
\cont(\mathcal{A}) \cup \textstyle{\bigcup_{i=1}^t} P_{F_i}(B_i) = \textstyle{\bigcup_{i=1}^t} P_{E_i}(A_i).
\end{equation}
\begin{enumerate}
\item Let $m_i$ be the polynomials defining $A_i$.  We follow Lemma \ref{L:EC} to say that for each $i$: the contents, leading coefficients and discriminants for $E_i$ form a set $R_{i,1}$ with combined degree $2d^2$; the other coefficients for $E_i$ form a set $R_{i,2}$ with combined degree $d^2$; the remaining contents of each $A_i$ form a set $R_{i,3} = \cont(A_i) \setminus \cont(R_{i,1})$ with the $(m_i-1,d^2)$-property; the final set of resultants in (\ref{eq:ECProj}) for each $i$ form a set $R_{i,4}$ with the $(m_i-1, 2d^2)$-property.
\item $R_1 = \textstyle{\bigcup_{i=1}^t} R_{i,1}$ has the $(t, 2d^2)$-property while 
$R_4 = \textstyle{\bigcup_{i=1}^t} R_{i,4}$ may be partitioned into 
$\textstyle{\sum_{i=1}^t} m_i -1 = m - t$
sets of combined degree $2d^2$.
\item The union $R_{23} = \textstyle{\bigcup_{i=1}^t} R_{i,2} \cup R_{i,3}$ may be partitioned into 
\[
\textstyle{\sum_{i=1}^t} m_i - 1 + 1 = m
\]
sets of combined degree $d^2$, and so has the $\big( \lfloor \tfrac{1}{2}(m+1) \rfloor, 2d^2\big)$-property.
\end{enumerate}
Hence (\ref{eq:ttiProof}), which equals $R_1 \cup R_{23} \cup R_4$, has the 
$\left( \left\lfloor \tfrac{1}{2}(3m+1) \right\rfloor, 2d^2 \right)$ property. 
\end{enumerate}
So together we see that (\ref{eq:TTIProj}) has the $(M,2d^2)$-property with $M$ as given in (\ref{eq:TTI-M}).
\end{proof}

To analyse Algorithm \ref{alg:TTICAD} we will apply Lemma \ref{L:TTICAD1} once and then Corollary \ref{cor:SI2} repeatedly.  The growth in factors is given by Table \ref{tab:GeneralProjection}, with $M$ this time representing (\ref{eq:TTI-M}).  Thus the dominant term in the bound is calculated from (\ref{bound:All}) (omitting the floor in $M$) as
\begin{align}
&\qquad 2^{2^{n}-1}d^{2^{n}-1}( \tfrac{1}{2}(3m+1) + \tfrac{1}{2}(t-1)t )^{2^{n-1}-1}m \nonumber \\
&= 2^{2^{n-1}}d^{2^{n}-1}(3m+t^2-t+1)^{2^{n-1}-1}m. \label{bound:TTIbasic}
\end{align}
Actually, this bound can be lowered by noting that for the final lift we use only the $t$ ECs rather than all $m$ of the input polynomials, reducing the bound to 
\begin{equation}
\label{bound:TTI1}
2^{2^{n-1}}d^{2^{n}-1}(3m+t^2-t+1)^{2^{n-1}-1}t.
\end{equation}

\begin{rem}
Observe that if $t=1$ then the value of $M$ for TTICAD in (\ref{eq:TTI-M}) becomes (\ref{eq:EC-M}), the value for a CAD invariant with respect to an EC.  Similarly, if $t=m$ then (\ref{eq:TTI-M}) becomes (\ref{eq:SI-M}), the value for sign-invariant CAD.  Actually, in these two situations the TTICAD projection operator reverts to the previous ones.  These are the extremal values of $t$ and provide the best and worse cases respectively.
\end{rem}

We can conclude from the remark that TTICAD is superior to sign-invariant CAD (strictly so unless $t=m$).  Comparing the bounds (\ref{bound:TTI1}) and (\ref{bound:SINew}) we see the effect is a reduction in the double exponent of the factor dependent on $m$ for $t \ll m$, which gradually reduces as $t$ gets closer to $m$.

It would be incorrect to conclude from the remark that the theory of \cite{McCallum1999} is superior to TTICAD.  In the case $t=1$ the algorithms and their analysis are equal up to the final lifting stage.  As discussed in Section \ref{sec:ImprovedLifting} this can be applied to the case $t=1$ also, with the effect of reducing the bound (\ref{bound:EC}) by a factor of $m$ to
\begin{equation}
\label{bound:ECImproved}
2^{2^{n-1}}d^{2^{n}-1}(3m+1)^{2^{n-1}-1}.
\end{equation}
If $t>1$ then \cite{McCallum1999} cannot be applied directly since it requires a single formula with an EC.  However, it can be applied indirectly by considering the parent formula formed by the disjunction of the individual QFFs which has the product of the individual ECs as an implicit EC.  A CAD for this parent formula produced using \cite{McCallum1999} would also be a TTICAD for the sequence of QFFs.  Thus we provide a complexity analysis for this case.

\subsubsection{With a parent formula and implicit EC-CAD}

By working with the extra implicit EC we are starting with one extra polynomial, whose degree is $td$.  However, we know the factorisation into $t$ polynomials so suppose we start from here (indeed, this is what our implementation does).  

\begin{lem}
\label{L:ECImplicit}
Consider a set $A$ of $m$ polynomials in $n$ variables with maximum degree $d$, and a subset $E=\{f_1, \dots, f_t\} \subseteq A$.  Then $P_E(A)$, has the $(M,2d^2)$-property with 
\begin{equation}
\label{eq:ECImplicit-M}
M = \tfrac{1}{2}(2m-t+1)t + \left\lfloor \tfrac{1}{2}(m+1) \right\rfloor
\end{equation}
\end{lem}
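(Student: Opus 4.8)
The plan is to follow the structure of the proof of Lemma~\ref{L:EC}, which handled the case $|E|=1$, and adapt each step to account for the fact that $E$ now contains $t$ polynomials rather than one. Recall the reduced projection is
\[
P_E(A) = \cont(A) \cup P(F) \cup \{ \res_{x_n}(f,g) \mid f \in F, g \in B \setminus F \},
\]
where $F$ is a squarefree basis for $E$ and $B$ a squarefree basis for $A$. The essential point is that $F$ is now a squarefree basis for a set of $t$ polynomials of degree at most $d$, so the product of the elements of $F$ has degree at most $td$; hence $F$ has the $(1,td)$-property, and more usefully it can be partitioned (following the proof of Lemma~\ref{L:SINew}, splitting by which $f_i$ a basis element divides) into $t$ groups each of combined degree at most $d$.

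First I would bound $P(F)$. By Lemma~\ref{L:SINew} applied to $F$ with its $(t,d)$-property, $P(F)$ has the $(\lfloor \tfrac12(t+1)^2\rfloor, 2d^2)$-property; but here it is cleaner to re-run the three-step argument of Lemma~\ref{L:SINew} directly, because we want to separate the contributions. The contents, leading coefficients and discriminants coming from each of the $t$ groups give a set with the $(t,2d^2)$-property (step 1 of Lemma~\ref{L:SINew}, $t$ times); the cross resultants between the $t$ groups give a set with the $(\tfrac12 t(t-1),2d^2)$-property (step 2); and the non-leading coefficients give a set with the $(t,d^2)$-property, hence $(\lceil t/2\rceil, 2d^2)$ after pairing (step 3). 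Next I would handle $\cont(A)\setminus\cont(E)$, which has the $(m-t,d)$ hence $(m-t,d^2)$-property, and combine it with the non-leading-coefficient set of combined degree $d^2$; together these $m-t$ plus $t$, i.e.\ $m$ sets of combined degree $d^2$, pair down to $\lfloor\tfrac12(m+1)\rfloor$ sets of combined degree $2d^2$ (Proposition~\ref{Prop:MD2}). Finally the resultants $\res_{x_n}(f,g)$ with $f\in F$, $g\in B\setminus F$: for each of the $t$ groups of $F$, taking the product over that group against each $g$ (as in step 2 of Lemma~\ref{L:SINew}) gives, for each of the $m-t$ polynomials $g$ not involved in an EC, a set of combined degree at most $2d^2$; so this contributes $t(m-t)$ sets of combined degree $2d^2$.

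Adding up the pieces of combined degree $2d^2$: the $t$ from leading data, the $\tfrac12 t(t-1)$ cross resultants within $F$, the $\lfloor\tfrac12(m+1)\rfloor$ from the merged coefficient/content sets, and the $t(m-t)$ mixed resultants, we get
\[
t + \tfrac12 t(t-1) + t(m-t) + \left\lfloor \tfrac12(m+1)\right\rfloor
= \tfrac12 t\bigl(2 + t - 1 + 2m - 2t\bigr) + \left\lfloor \tfrac12(m+1)\right\rfloor
= \tfrac12(2m-t+1)t + \left\lfloor \tfrac12(m+1)\right\rfloor,
\]
which is exactly the claimed value of $M$ in \eqref{eq:ECImplicit-M}, and every block has combined degree at most $2d^2$, giving the $(M,2d^2)$-property.

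The main obstacle I anticipate is purely bookkeeping rather than conceptual: making sure the $t$-fold partition of $F$ is genuinely a partition (a basis element dividing several $f_i$ must be assigned to exactly one group, exactly as the $T_i$ construction in Lemma~\ref{L:SINew} does), and being careful not to double-count the content of the ECs — note that $\cont(A)$ appears once, so the contents of the $f_i$ sit inside the "leading data" block and must be excluded from the $m-t$ remaining contents. One should also check the degenerate edge cases ($t=1$ recovers Lemma~\ref{L:EC}; $t=m$ should recover, up to the weaker bookkeeping, Lemma~\ref{L:SINew}) as a sanity check that the arithmetic is right. With those caveats the argument is a routine adaptation and no genuinely new idea is needed beyond the resultant identities \eqref{eqn:ResProp1}--\eqref{eqn:ResProp3} already used.
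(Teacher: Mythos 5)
Your proof is correct and follows essentially the same route as the paper's: both decompose $P_E(A)$ into the per-constraint projection data (giving $t$ sets), the cross-resultants within the basis of $E$ (giving $\tfrac12 t(t-1)$ sets), the merged non-leading coefficients and remaining contents (pairing to $\lfloor\tfrac12(m+1)\rfloor$ sets), and the mixed resultants against $B\setminus F$ (giving $t(m-t)$ sets), each of combined degree $2d^2$. The only cosmetic difference is that you partition the squarefree basis $F$ into $t$ groups via the $T_i$ construction of Lemma \ref{L:SINew}, whereas the paper partitions $E$ into singletons $S_i=\{f_i\}$ and cites the same arguments, so the bookkeeping and the final count coincide.
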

\begin{proof} 
Partition $E$ into subsets $S_i=\{f_i\}$ for $i=1, \dots, t$.  Then $P_E(A)$ from (\ref{eq:ECProj}) is
\begin{align}
&{\textstyle \cont(A \setminus E) + \bigcup_{i=1}^t} P(S_i) 
+ \{ {\rm res}_{x_n}(f,g) \mid f \in F, g \in F, g \neq f \} \nonumber \\
&\qquad  + \{ {\rm res}_{x_n}(f,g) \mid f \in F, g \in B \setminus F \}. \label{eq:ECImpProof}
\end{align}
\begin{enumerate}
\item We start by considering the first two terms in (\ref{eq:ECImpProof}).
\begin{enumerate}
\item For each $P(S_i)$: the contents, leading coefficients and discriminants form a set $R_{i,1}$ with combined degree $2d^2$, and the other coefficients a set $R_{i,2}$ with combined degree $d^2$.  
\item The remaining contents $R_3 = \cont(A) \setminus \cont(E)$ has the $(m-t,d^2)$-property.  
\item Together, the set $R_1 = {\textstyle \bigcup_{i=1}^t} R_{1,i}$ has the $(t, 2d^2)$-property.
\item Together, $R_{23} = R_3 \cup {\textstyle \bigcup_{i=1}^t} R_{2,i}$ has the $(m,d^2)$-property.  It can be further partitioned into $\lfloor \tfrac{1}{2}(m+1) \rfloor$ sets of combined degree $2d^2$.
\end{enumerate}
The first two terms of (\ref{eq:ECImpProof}) may be partitioned into $R_1 \cup R_{23}$ and thus further into $t + \lfloor \tfrac{1}{2}(m+1) \rfloor$ sets of combined degree $2d^2$.
\item The first set of resultants in (\ref{eq:ECImpProof}) has size $\tfrac{1}{2}(t-1)t$ and maximum degree $2d^2$.  
\item The second set of resultants in (\ref{eq:ECImpProof}) may be decomposed as 
\[
{\textstyle \bigcup_{i=1}^t} \{ {\rm res}_{x_n}(f,g) \mid f \in S_i, g \in B \setminus F \}.
\]
Since $|S_i|=1$ and $|B \setminus F|$ has the $(m-t,d)$-property, each of these subsets has $(m-t, 2d^2)$-property (following Lemma \ref{L:SINew} step 2).  Thus together the set of them has the $(t(m-t), 2d^2)$-property.
\end{enumerate}
Hence $P_E(A)$ as given in (\ref{eq:ECImpProof} may be partitioned into
\[
t + \lfloor \tfrac{1}{2}(m+1) \rfloor + \tfrac{1}{2}(t-1)t + t(m-t) 
= \tfrac{1}{2}(2m-t+1)t + \left\lfloor \tfrac{1}{2}(m+1) \right\rfloor
\]
sets of combined degree $2d^2$.
\end{proof}

Thus the growth of projection polynomials in this case is given by Table \ref{tab:GeneralProjection} with $M$ from (\ref{eq:ECImplicit-M}).  The dominant term in the cell count bound is calculated from (\ref{bound:All}) as
\begin{align*}
&\qquad 2^{2^{n}-1}d^{2^{n}-1}( \tfrac{1}{2}(t(2m-t+1)+m+1) )^{2^{n-1}-1}m \nonumber \\
&= 2^{2^{n-1}}d^{2^{n}-1}(t(2m-t+1)+m+1)^{2^{n-1}-1}m.
\end{align*}
If we follow Section \ref{sec:ImprovedLifting} to simplify the final lift this reduces to 
\begin{equation}
\label{bound:ImplicitEC2}
2^{2^{n-1}}d^{2^{n}-1}(t(2m-t+1)+m+1)^{2^{n-1}-1}t.
\end{equation}

\subsubsection{Comparison}

Observe that if $t=1$ then the value of $M$ in (\ref{eq:ECImplicit-M}) becomes (\ref{eq:EC-M}), while if $t=m$ it becomes (\ref{eq:SI-M}), just like TTICAD.  However, since the difference between (\ref{eq:ECImplicit-M}) and (\ref{eq:TTI-M}) is 
\[
mt - t^2 - m + t = (t-1) (m-t).
\]
we see that for all other possible values of $t$ the TTICAD projection operator has a superior $(m,d)$-property.  This means fewer polynomials and a lower cell count, as noted earlier in Section \ref{subsec:Implicit}.  Comparing the bounds (\ref{bound:TTI1}) and (\ref{bound:ImplicitEC2}) we see the effect is a reduction in the base of the doubly exponential factor dependent on $m$.

\subsection{A general sequence of QFFs}
\label{subsec:CA-TTI2}

We again consider $t$ QFFs formed by at a set of at most $m$ polynomials with maximum degree $d$, however, we no longer suppose that each QFF has an EC.  Instead we denote by $\mathfrak{e}$ the number of QFFs with one; by $A_{\mathfrak{e}}$ the set of polynomials required to define those $\mathfrak{e}$ QFFs; and by $m_\mathfrak{e}$ the size of the set $A_{\mathfrak{e}}$.  Then analogously we define $\mathfrak{n} = t-\mathfrak{e}$ as the number of QFFs without an EC; $A_{\mathfrak{n}} = A \setminus A_{\mathfrak{e}}$ as the additional polynomials required to define them; and $m_{\mathfrak{n}} = m - m_{\mathfrak{e}}$ as their number.

Let $\mathcal{A}$ be the sequence of sets of polynomials $A_i$ defining each formula.  If QFF $i$ is one of the $\mathfrak{e}$ with an EC then set $E_i$ to be the set containing just that EC, and otherwise set $E_i = A_i$.  As before, denote the irreducible bases of these by $B_i$ and $F_i$.

\begin{lem}
\label{L:TTICAD2}
Under the assumptions above $P_{\mathcal{E}}(\mathcal{A})$ has the $(M,2d^2)$-property with  
\begin{equation}
\label{eq:TTIGeneral-M1}
M = \left\lfloor \tfrac{1}{2} (m_{\mathfrak{n}}+1)^2 \right\rfloor 
+ \left \lfloor \tfrac{1}{2}(3m_{\mathfrak{e}}+1) \right\rfloor 
+ \tfrac{1}{2}\mathfrak{e}(\mathfrak{e}-1 + 2m_{\mathfrak{n}} ).
\end{equation}
\end{lem}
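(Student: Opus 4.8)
The plan is to decompose $P_{\mathcal{E}}(\mathcal{A})$ exactly as in the proofs of Lemmas \ref{L:SINew}, \ref{L:EC} and \ref{L:TTICAD1}, but now splitting the terms according to whether the QFF carrying them has an EC. Write $B_{\mathfrak{e}}$, $B_{\mathfrak{n}}$ for the finest squarefree bases of $\prim(A_{\mathfrak{e}})$, $\prim(A_{\mathfrak{n}})$. Using (\ref{eqn:TTIProj-S}) and (\ref{eqn:TTIProj-G}) I would write $P_{\mathcal{E}}(\mathcal{A}) = \cont(\mathcal{A}) \cup \bigcup_{i=1}^t P_{E_i}(A_i) \cup {\rm RES}^{\times}(\mathcal{E})$ and partition this into four groups: (i) the material from the $\mathfrak{n}$ EC-free QFFs, namely $\cont(A_{\mathfrak{n}})$, the sets $P(A_i)=P_{E_i}(A_i)$ for EC-free $i$, and the cross-resultants between two EC-free QFFs; (ii) the material from the $\mathfrak{e}$ EC-carrying QFFs, namely $\cont(A_{\mathfrak{e}})$ together with the $P_{E_i}(A_i)$ for EC-carrying $i$; (iii) the cross-resultants ${\rm res}_{x_n}(f_i,f_j)$ between the ECs of two distinct EC-carrying QFFs; and (iv) the cross-resultants ${\rm res}_{x_n}(f_i,g)$ with $f_i$ an EC and $g$ in the basis $F_j$ of an EC-free QFF.

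For group (i), since $A_i \subseteq A_{\mathfrak{n}}$ and hence $B_i \subseteq B_{\mathfrak{n}}$ for every EC-free $i$, this whole group is contained in the McCallum projection $P(A_{\mathfrak{n}})$ of (\ref{eq:P}); as $A_{\mathfrak{n}}$ has the $(m_{\mathfrak{n}},d)$-property, Lemma \ref{L:SINew} shows it can be partitioned into $\lfloor \tfrac{1}{2}(m_{\mathfrak{n}}+1)^2 \rfloor$ sets of combined degree $2d^2$. For group (ii), I would rerun the argument of Lemma \ref{L:TTICAD1} step 2 (equivalently an $\mathfrak{e}$-fold version of Lemma \ref{L:EC}) applied to the $\mathfrak{e}$ EC-carrying QFFs, which together are defined by $m_{\mathfrak{e}}$ polynomials and satisfy $\cont(A_{\mathfrak{e}}) \cup \bigcup_{i\ \mathrm{EC}} P_{F_i}(B_i) = \bigcup_{i\ \mathrm{EC}} P_{E_i}(A_i)$; this produces $m_{\mathfrak{e}} + \lfloor \tfrac{1}{2}(m_{\mathfrak{e}}+1) \rfloor = \lfloor \tfrac{1}{2}(3m_{\mathfrak{e}}+1) \rfloor$ sets of combined degree $2d^2$.

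Group (iii) consists of $\tbinom{\mathfrak{e}}{2} = \tfrac{1}{2}\mathfrak{e}(\mathfrak{e}-1)$ resultants, each of a pair of polynomials of degree at most $d$ and so of degree at most $2d^2$, giving $\tfrac{1}{2}\mathfrak{e}(\mathfrak{e}-1)$ singleton sets. For group (iv), I would note that $\bigcup_{j\ \mathrm{EC\text{-}free}} F_j \subseteq B_{\mathfrak{n}}$, which by Proposition \ref{Prop:MD1} inherits the $(m_{\mathfrak{n}},d)$-property; then for each fixed EC $f_i$ the grouping trick of Lemma \ref{L:SINew} step 2 / Lemma \ref{L:EC} step 3 (partition $B_{\mathfrak{n}}$, and use ${\rm res}_{x_n}(f_i,\prod_{g\in S}g) = \prod_{g\in S}{\rm res}_{x_n}(f_i,g)$, of degree at most $2d^2$) packs $\{{\rm res}_{x_n}(f_i,g) : g\in B_{\mathfrak{n}}\}$ into $m_{\mathfrak{n}}$ sets of combined degree $2d^2$, hence $\mathfrak{e} m_{\mathfrak{n}}$ sets over the $\mathfrak{e}$ choices of $i$. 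Adding the four contributions gives $M = \lfloor \tfrac{1}{2}(m_{\mathfrak{n}}+1)^2 \rfloor + \lfloor \tfrac{1}{2}(3m_{\mathfrak{e}}+1) \rfloor + \tfrac{1}{2}\mathfrak{e}(\mathfrak{e}-1) + \mathfrak{e} m_{\mathfrak{n}}$, which is (\ref{eq:TTIGeneral-M1}).

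The one point needing genuine care — and the main obstacle — is the squarefree-basis bookkeeping: one must justify that the basis of a sub-collection of polynomials is contained in the basis of the whole (so that the individual projections and the EC-free cross-resultants really lie inside $P(A_{\mathfrak{n}})$, and the mixed cross-resultants inside $\{{\rm res}_{x_n}(f_i,g):g\in B_{\mathfrak{n}}\}$), and that every element of $P_{\mathcal{E}}(\mathcal{A})$ has indeed been assigned to one of the four groups. Everything else is a routine reassembly of the earlier lemmas; as a sanity check, setting $\mathfrak{n}=0$ recovers the value (\ref{eq:TTI-M}) and setting $\mathfrak{e}=0$ recovers (\ref{eq:SI-M}).
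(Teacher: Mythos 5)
Your decomposition of $P_{\mathcal{E}}(\mathcal{A})$ into four groups (EC-free per-QFF material plus EC-free cross-resultants, EC-carrying per-QFF material, EC--EC cross-resultants, and mixed cross-resultants) with bounds $\lfloor\tfrac12(m_{\mathfrak{n}}+1)^2\rfloor$, $\lfloor\tfrac12(3m_{\mathfrak{e}}+1)\rfloor$, $\tfrac12\mathfrak{e}(\mathfrak{e}-1)$ and $\mathfrak{e}m_{\mathfrak{n}}$ respectively is exactly the paper's decomposition and accounting, so the proof is correct and takes the same route. The bookkeeping concern you flag is handled the same way in the paper (the basis of a sub-collection is contained in the basis of the whole because both are into irreducibles), and your observation that group~(i) is contained in $P(A_{\mathfrak{n}})$ matches the paper's identification of it as (essentially) all of $P(A_{\mathfrak{n}})$.
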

\begin{proof}
Without loss of generality suppose the QFFs are labelled so the $\mathfrak{e}$ QFFs with an EC come first.  We will decompose the cross resultant set (\ref{eqn:RESX}) as $R^{\times}_1 \cup R^{\times}_2 \cup R^{\times}3$ 
where
\begin{align*}
R^{\times}_1 &= 
\{ {\rm res}_{x_n}(f,\hat{f}) \mid \exists i,j : \, f \in F_i, \hat{f} \in F_j, i<j\leq \mathfrak{e}, f \neq \hat{f}  \}, \\
R^{\times}_2 &= 
\{ {\rm res}_{x_n}(f,\hat{f}) \mid \exists i,j : \, f \in F_i, \hat{f} \in F_j, i\leq \mathfrak{e}<j, f \neq \hat{f}  \}, \\
R^{\times}_3 &= 
\{ {\rm res}_{x_n}(f,\hat{f}) \mid \exists i,j : \, f \in F_i, \hat{f} \in F_j, \mathfrak{e}<i<j, f \neq \hat{f}  \}.
\end{align*}
Then the projection set (\ref{eq:TTIProj}) may be decomposed as 
\begin{align}
P_{\mathcal{E}}(\mathcal{A}) &= \cont(\mathcal{A}) \cup \textstyle{\bigcup_{i=1}^t} P_{F_i}(B_i) \cup {\rm Res}^{\times} (\mathcal{F})  \nonumber \\
&= \left( \textstyle{ \bigcup_{i=1}^{\mathfrak{e}} } \cont(A_i) \cup P_{F_i}(B_i) \right) \nonumber \\
&\qquad 
\cup \left( R^{\times}_3 \cup \textstyle{ \bigcup_{i=\mathfrak{e}+1}^{t} } \cont(A_i) \cup P_{F_i}(B_i) \right)
\cup R^{\times}_1 \cup R^{\times}_2. \label{eq:TTIGenProof}
\end{align}
\begin{enumerate}
\item The first collection of sets in (\ref{eq:TTIGenProof}) has the 
$\left( \left\lfloor \tfrac{1}{2}(3m_{\mathfrak{e}}+1) \right\rfloor, 2d^2\right)$-property.  The argument is identical to the proof of Lemma \ref{L:TTICAD1}, except that here $\mathfrak{e}$ plays the role of $t$, and $m_{\mathfrak{e}}$ the role of $m$.
\item The second collection of sets in (\ref{eq:TTIGenProof}) refer to those with $E_i=A_i$.  Since $P_{B_i}(B_i) = P(B_i)$ we see that the union of $\cont(A_i) \cup P(B_i)$ for $i=\mathfrak{e}+1, \dots, t$ contains all the polynomials in $P(A_{\mathfrak{n}})$ except for the cross-resultants of polynomials from different $B_i$.  These are exactly given by $R^{\times}_3$, and thus we can follow the proof of Lemma \ref{L:SINew} to partition the second collection into 
$\left\lfloor \tfrac{1}{2}(m_{\mathfrak{n}}+1)^2 \right\rfloor$ sets of combined degree $2d^2$.
\item Next let us consider $R^{\times}_1$.  This concerns those subsets $E_i$ with only one polynomial, and hence their square free bases $F_i$ each have the $(1,d)$-property. Following the proof of Lemma \ref{L:SINew} step 2 this set of resultants may be partitioned into $\tfrac{1}{2}\mathfrak{e}(\mathfrak{e}-1)$ sets of combined degree at most $2d^2$.  
\item Finally we consider $R^{\times}_3$.  This concerns resultants of the $\mathfrak{e}$ polynomials forming the $\mathfrak{e}$ single polynomial subsets $E_i$, taken with polynomials from the other subsets (together giving the set $A_{\mathfrak{n}}$ of $m_{\mathfrak{n}}$ polynomials).  There are at most $\mathfrak{e}m_{\mathfrak{n}}$ of these.  Of course, as before, we are actually dealing with square free bases (moving from polynomials of degree $d$ to sets with the $(1,d)$-property) and then consider the coprime subsets (as in Lemma \ref{L:SINew}), to conclude $R^{\times}_3$ has the $(\mathfrak{e}m_{\mathfrak{n}}, 2d^2)$-property.
\end{enumerate}
Summing up then gives the desired result.
\end{proof}

\begin{cor}
\label{cor:TTIGen}
The bound in (\ref{eq:TTIGeneral-M1}) may be improved to
\begin{equation}
\label{eq:TTIGeneral-M}
M = \left\lfloor \tfrac{1}{2} \left( (m_{\mathfrak{n}}+1)^2 + 3m_{\mathfrak{e}} \right) \right\rfloor 
+ \tfrac{1}{2}  \left( \mathfrak{e}(\mathfrak{e}-1 + 2m_{\mathfrak{n}})\right).
\end{equation}
\end{cor}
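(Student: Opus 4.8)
The plan is to revisit the proof of Lemma~\ref{L:TTICAD2} and sharpen it at the two points where Proposition~\ref{Prop:MD2} is used to pair sets of combined degree $d^2$ into sets of combined degree $2d^2$. First I would record the elementary arithmetic fact underlying \eqref{eq:TTIGeneral-M}: the only change from \eqref{eq:TTIGeneral-M1} is that $\lfloor\tfrac12(m_{\mathfrak{n}}+1)^2\rfloor + \lfloor\tfrac12(3m_{\mathfrak{e}}+1)\rfloor$ is replaced by $\lfloor\tfrac12((m_{\mathfrak{n}}+1)^2+3m_{\mathfrak{e}})\rfloor$, and checking the four parities of $(m_{\mathfrak{n}},m_{\mathfrak{e}})$ shows these two quantities are equal except when $m_{\mathfrak{n}}$ and $m_{\mathfrak{e}}$ are both odd, in which case the second is smaller by exactly one. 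Consequently, whenever at most one of $m_{\mathfrak{n}},m_{\mathfrak{e}}$ is odd the value of $M$ in \eqref{eq:TTIGeneral-M} coincides with the bound \eqref{eq:TTIGeneral-M1}, and the corollary is just a restatement of Lemma~\ref{L:TTICAD2}.

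It remains to treat the case where $m_{\mathfrak{n}}$ and $m_{\mathfrak{e}}$ are both odd. Here I would look again at the partition of $P_{\mathcal{E}}(\mathcal{A})$ built in the proof of Lemma~\ref{L:TTICAD2} via the decomposition \eqref{eq:TTIGenProof}. In step~1 (the $\mathfrak{e}$ QFFs with an EC) the count $\lfloor\tfrac12(3m_{\mathfrak{e}}+1)\rfloor$ is obtained, following Lemma~\ref{L:TTICAD1}, by pairing a set $R_{23}$ with the $(m_{\mathfrak{e}},d^2)$-property into $\lceil m_{\mathfrak{e}}/2\rceil$ sets; as $m_{\mathfrak{e}}$ is odd this leaves one unpaired residual set, call it $U_{\mathfrak{e}}$, of combined degree only $d^2$. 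In step~2 (the QFFs with $E_i=A_i$) the count $\lfloor\tfrac12(m_{\mathfrak{n}}+1)^2\rfloor$ is obtained, following the proof of Lemma~\ref{L:SINew}, by pairing a set of non-leading coefficients with the $(m_{\mathfrak{n}},d^2)$-property; as $m_{\mathfrak{n}}$ is odd this likewise leaves one residual set $U_{\mathfrak{n}}$ of combined degree $d^2$. Both residual sets consist of (non-leading) coefficients and contents, so they are disjoint from one another and from all the resultant pieces (in particular from the cross-resultant contributions, which together amount to $\tfrac12\mathfrak{e}(\mathfrak{e}-1+2m_{\mathfrak{n}})$ and are unaffected).

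The key step is then to replace $U_{\mathfrak{e}}$ and $U_{\mathfrak{n}}$ by the single set $U_{\mathfrak{e}}\cup U_{\mathfrak{n}}$; by Definition~\ref{def:cd} its combined degree is at most $d^2+d^2=2d^2$, so the resulting partition of $P_{\mathcal{E}}(\mathcal{A})$ still has every part of combined degree at most $2d^2$, but now has one fewer part, namely $M$ as in \eqref{eq:TTIGeneral-M}. I expect the only delicate part of writing this up to be the bookkeeping: confirming that these residual $d^2$-degree sets arise exactly when the relevant count is odd and at no other place in the respective partitions, so that the saving is precisely one and lands exactly on the floor term of \eqref{eq:TTIGeneral-M}; everything else is routine.
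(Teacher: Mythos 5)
Your proposal is correct and follows essentially the same route as the paper: the paper's own proof likewise observes that the pairings of $d^2$-degree sets in steps 1 and 2 of Lemma \ref{L:TTICAD2} each leave one residual set exactly when $m_{\mathfrak{e}}$ (resp.\ $m_{\mathfrak{n}}$) is odd, and that in the both-odd case these two leftovers can be paired, yielding the single floor in \eqref{eq:TTIGeneral-M}. Your explicit parity case analysis just spells out the bookkeeping the paper leaves implicit.
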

\begin{proof}
We have asserted that the sum of the two floors is equal to the floor of the sum minus a half.  
In both steps 1 and 2 of the proof of Lemma \ref{L:TTICAD2} we pair up sets of maximum combined degree $d^2$ to get half as many with maximum combined degree $2d^2$.  We introduce the floor of the polynomial one greater to cover the case with an odd number of sets to begin with.  However, in the case that both step 1 and step 2 had an odd number of starting sets the left over couple could themselves be paired.  Instead, if we considering combining these sets and then pairing we have the floor as stated in (\ref{eq:TTIGeneral-M}).
\end{proof}

We analyse Algorithm \ref{alg:TTICAD} by applying Lemma \ref{L:TTICAD2} once and then Lemma \ref{L:SINew} repeatedly.  As usual, the growth is given by Table \ref{tab:GeneralProjection}, this time with $M$ as in (\ref{eq:TTIGeneral-M}). The dominant term in the bound on cell count is then calculated from (\ref{bound:All}) as
\begin{align*}
&\qquad 2^{2^{n}-1}d^{2^{n}-1}( \tfrac{1}{2} \left( (m_{\mathfrak{n}}+1)^2 + (3m_{\mathfrak{e}}+1) + \mathfrak{e}(\mathfrak{e}-1) + 2\mathfrak{e}m_{\mathfrak{n}} - 1 \right) )^{2^{n-1}-1}m \nonumber \\
&= 2^{2^{n-1}}d^{2^{n}-1}((m_{\mathfrak{n}}+1)^2 + (3m_{\mathfrak{e}}+1) + \mathfrak{e}(\mathfrak{e}-1) + 2\mathfrak{e}m_{\mathfrak{n}} -1 )^{2^{n-1}-1}m.
\end{align*}
Once again, we can improve this by noting the reduction at the final lift, which will involve $m_{\mathfrak{n}} + \mathfrak{e} \leq m$ polynomials instead of $m$.  Thus the bound becomes
\begin{equation}
\label{bound:TTI2}
2^{2^{n-1}}d^{2^{n}-1}((m_{\mathfrak{n}}+1)^2 + (3m_{\mathfrak{e}}+1) + \mathfrak{e}(\mathfrak{e}-1) + 2\mathfrak{e}m_{\mathfrak{n}} - 1)^{2^{n-1}-1}(m_{\mathfrak{n}} + \mathfrak{e}).
\end{equation}

\subsubsection*{Comparison}

First we consider three extreme cases for the TTICAD algorithm:
\begin{enumerate}
\item If no QFF has an EC then $\mathfrak{e}=0, m_{\mathfrak{e}}=0, m_{\mathfrak{n}}=m$ and (\ref{eq:TTIGeneral-M}) becomes $\lfloor \tfrac{1}{2}(m+1)^2 \rfloor$.  The latter is (\ref{eq:SI-M}) for sign-invariant CAD.

\item The other unfortunate case is when $\mathfrak{e} = m_{\mathfrak{e}}$, i.e. all those QFFs with an EC contain no other constraints.  In this case (\ref{eq:TTIGeneral-M}) becomes 
$\left\lfloor \tfrac{1}{2}(\mathfrak{e}+m_{\mathfrak{n}}+1)^2 \right\rfloor$, which will be equal to (\ref{eq:SI-M}) for sign-invariant CAD.  

\item The third extreme case is where all QFFs have an EC.  Then $\mathfrak{e}=t, m_{\mathfrak{e}}=m, m_{\mathfrak{n}}=0$ and (\ref{eq:TTIGeneral-M}) becomes 
$\left\lfloor\tfrac{1}{2}(3m+1)\right\rfloor + \tfrac{1}{2}(t-1)t$.  This is the same as (\ref{eq:TTI-M}) for the restricted case of TTICAD studied in Section \ref{subsec:CA-TTI1}.
\end{enumerate}
In all three cases the general TTICAD algorithm behaves identically to those previous approaches.  In the first two extreme cases the general TTICAD algorithm performs the same as \cite{McCallum1998} which produces a sign-invariant CAD.  Let us demonstrate that it is superior otherwise.  Assume $0 < \mathfrak{e} < m_{\mathfrak{e}}$ (meaning at least one QFF has an EC and at least one such QFF has additional constraints).  Then comparing the values of $M$ in (\ref{eq:SI-M}) and (\ref{eq:TTIGeneral-M}) we have:
\begin{align*}
M_{SI} - M_{TTI} &= 
\left\lfloor \tfrac{1}{2}
(m_{\mathfrak{e}}-\mathfrak{e})( \mathfrak{e}+m_{\mathfrak{e}} + 2m_{\mathfrak{n}}-1) 
\right\rfloor.
\end{align*}
The first factor is positive by assumption, and the second is $\geq 2$.  Thus the bound on the cell count for TTICAD is better than for sign-invariant CAD by at least a doubly exponential factor: $2^{2^{n-1}-1}$.

There is no need to compare the complexity for TTICAD in this general case to any use of \cite{McCallum1999}.  The latter can only be applied to a parent formula with an overall (possibly implicit) EC and the construction from the previous subsection would only be possible when $\mathfrak{e}=t$: the case of the previous subsection for which we have already concluded the superiority of TTICAD.

It is now clear that the extension to general QFFs provided by this paper is a more important contribution than the restricted case of \cite{BDEMW13}, even though the former has a lower complexity bound:
\begin{itemize}
\item In the restricted case TTICAD was an improvement on the best available alternative projection operator, $P_E(A)$ from \cite{McCallum1999}, but its improvements were to the base of a double exponential factor.
\item Outside of this restricted case (and the two other extreme cases) TTICAD offers a complexity improvement to a double exponent when compared with the best available alternative projection operator, $P(A)$ from \cite{McCallum1998}.
\end{itemize}

\section{Our implementation in Maple}
\label{sec:Implementation}

There are various implementations of CAD already available including:
{\sc Mathematica} \citep{Strzebonski2006, Strzebonski2010};
{\sc Qepcad} \citep{Brown2003a};
the \texttt{Redlog} package for \textsc{Reduce} \citep{SS03};
the \texttt{RegularChains} Library \citep{CMXY09} for {\sc Maple},
and \texttt{SyNRAC} \citep{YA06} (another package for {\sc Maple}).

None of these can (currently) be used to build CADs which guarantee order-invariance, a property required for proving the correctness of our TTICAD algorithm.
Hence we have built our own CAD implementation in order to obtain experimental results for our ideas.

\subsection{ProjectionCAD}
\label{subsec:PCAD}

Our implementation is a third party {\sc Maple} package which we call \texttt{ProjectionCAD}.  It gathers together algorithms for producing CADs via projection and lifting to complement the CAD commands which ship with \textsc{Maple} and use the alternative approach based on the theory of regular chains and triangular decomposition.

All the projection operators discussed in Sections \ref{sec:ExistingProj} and \ref{sec:TTIProj} have been implemented and so \texttt{ProjectionCAD} can produce CADs which are sign-invariant, order-invariant, invariant with respect to a declared EC, and truth table invariant.  Stack generation (step \ref{step:lifting2} in Algorithm \ref{alg:TTICAD}) is achieved using an existing command from the \texttt{RegularChains} package, described fully in Section 5.2 of \citep{CMXY09}.  To use this we must first process the input to satisfy the assumptions of that algorithm: that polynomials are co-prime and square-free when evaluated on the cell (\textit{separate above the cell} in the language of regular chains).  This is achieved using other commands from the \texttt{RegularChains} library.

Utilising the \texttt{RegularChains} code like this means that \texttt{ProjectionCAD} can represent and present CADs in the same way.  In particular this allows for easy comparison of CADs from the different implementations; the use of existing tools for studying the CADs; and the ability to display CADs to the user in the easy to understand \texttt{piecewise} representation \citep{CDMMXXX09}.
Figure \ref{fig:Maple} shows an example of the package in use.  

\begin{figure}[ht]
\caption{An example of using \texttt{ProjectionCAD} to build a sign-invariant CAD for the unit circle.  The output is as displayed in {\sc Maple}, but with sample points replaced by $SP$ for brevity.}
\label{fig:Maple}
\begin{framed}
\begin{verbatim}
> f := x^2+y^2-1:
\end{verbatim}
\begin{verbatim}
> cad := CADFull([f], vars, method=McCallum, output=piecewise);
\end{verbatim}
\[
\begin{cases}
SP & \quad x<-1 \\
\begin{cases} SP & \quad y<0 \\ SP & \quad y=0 \\ SP & \quad 0<y \end{cases}  & \quad x=-1 \\
\begin{cases}
SP & \qquad\qquad y<-\sqrt{-{x}^{2}+1} \\
SP & \qquad\qquad y=-\sqrt {-{x}^{2}+1} \\
SP & {\it And} \left( -\sqrt {-{x}^{2}+1}<y,y<\sqrt {-{x}^{2}+1} \right) \\
SP & \qquad\qquad y=+\sqrt {-{x}^{2}+1} \\
SP & \qquad\qquad \sqrt {-{x}^{2}+1}<y
\end{cases}
&{\it And} \left( -1<x,x<1 \right) \\
\begin{cases} SP & \quad y<0 \\ SP & \quad y=0 \\ SP & \quad 0<y \end{cases}  & \quad x=1  \\
SP & \quad 1<x
\end{cases}
\]
\begin{verbatim}
> CADNumCellsInPiecewise(cad);
\end{verbatim}
\[
13
\]
\end{framed}
\end{figure}

Unlike {\sc Qepcad}, \texttt{ProjectionCAD} has an implementation of delineating polynomials (actually the minimal delineating polynomials of \cite{Brown2005a}) and so it can solve certain problems without unnecessary warnings.  It is also the only  
CAD implementation that can reproduce the theoretical algorithm {\tt CADW}.

Other notable features of \texttt{ProjectionCAD} include commands to present the different formulations of problems for the algorithms and heuristics to help choose between these.
For more details on \texttt{ProjectionCAD} and the algorithms implemented within see  \citep{EWBD14}, while the package itself is freely available from the authors along with documentation and examples demonstrating the functionality.  To run the code users need a version of {\sc Maple} and the \texttt{RegularChains} Library.

\subsection{Minimising failure of TTICAD}
\label{subsec:Excl}

Algorithm \ref{alg:TTICAD} was kept simple to aid readability and understanding.  Our implementation does make some extra refinements.  Most of these are trivial, such as removing constants from the set of projection polynomials or when taking coefficients in order of degree, stopping if the ones already included can be shown not to vanish simultaneously.

The well-orientedness conditions can often be overly cautious.  \cite{Brown2005a} discussed cases where non-well oriented input can still lead to an order-invariant CAD. Similarly here, we can sometimes allow the nullification of an EC on a positive dimensional cell.
Define the {\em excluded projection polynomials} for each $i$ as: 
\begin{align}
{\rm ExclP}_{E_i}(A_i) &:= P(A_i) \setminus P_{E_i}(A_i)
\label{eq:Excl} \\
&= \{ {\rm coeffs}(g), {\rm disc}_{x_n}(g), {\rm res}_{x_n}(g,\hat{g}) \mid g, \hat{g} \in A_i \setminus E_i,  g \neq \hat{g} \}.\nonumber
\end{align}
Note that the total set of excluded polynomials from $P(A)$ will include all the entries of the ${\rm ExclP}_{E_i}(A_i)$ as well as missing cross resultants of polynomials in $A_i \setminus E_i$ with polynomials from $A_j \neq A_i$.  

\begin{lem}
\label{lem:ConstPolys}
Let $f_i$ be an EC which vanishes identically on a cell $c \in \mathcal{D}'$ constructed during Algorithm \ref{alg:TTICAD}. If all polynomials in ${\rm ExclP}_{E_i}(A_i)$ are constant on $c$ then any $g \in A_i \setminus E_i$ will be delineable over $c$.
\end{lem}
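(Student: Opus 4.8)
The plan is to reduce the situation on the cell $c$ to an application of Theorem \ref{thm:McC1} (McCallum's delineability theorem) on a lower-dimensional submanifold, using the hypothesis that all of ${\rm ExclP}_{E_i}(A_i)$ is constant on $c$ to supply the missing order-invariance hypotheses. Recall that $P_{E_i}(A_i) = P(F_i) \cup \{{\rm res}_{x_n}(f_i, g) \mid g \in F_i' \setminus F_i\}$ where $F_i$ is the finest squarefree basis of ${\rm prim}(E_i)$ and $B_i$ that of ${\rm prim}(A_i)$; and by construction ${\rm ExclP}_{E_i}(A_i) = P(A_i) \setminus P_{E_i}(A_i)$. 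The point is that $P(B_i) \subseteq P_{E_i}(A_i) \cup {\rm ExclP}_{E_i}(A_i)$, so if the excluded polynomials are all constant on $c$ and the polynomials of $P_{E_i}(A_i) \subseteq \mathfrak{P}$ are order-invariant on $c$ (which they are, since $\mathcal{D}'$ is order-invariant for $\mathfrak{P}$ by correctness of \texttt{CADW}), then every element of $P(B_i)$ is order-invariant on $c$. Constants are trivially order-invariant, so this is the key bookkeeping step.

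First I would make precise which polynomials need checking: a given $g \in A_i \setminus E_i$ factors as $g = {\rm cont}(g)\, h_1^{p_1}\cdots h_k^{p_k}$ with $h_j \in B_i$. The content ${\rm cont}(g) \in C \subseteq \mathfrak{P}$ is order-invariant on $c$, and delineability of $g$ reduces to delineability of the relevant irreducible factors $h_j \in B_i$ (a product of delineable polynomials whose sections are pairwise disjoint or equal is delineable). So it suffices to show each element of $B_i$ is delineable on $c$, which by Theorem \ref{thm:McC1} applied with $A = B_i$ and $S = c$ follows once every element of $P(B_i)$ is order-invariant on $c$. Then I would verify the inclusion $P(B_i) \subseteq P_{E_i}(A_i) \cup {\rm ExclP}_{E_i}(A_i)$: the coefficients, discriminants of elements of $B_i$ and resultants among them split according to whether the polynomials involved lie in $F_i$ (captured by $P(F_i) \subseteq P_{E_i}(A_i)$), one lies in $F_i$ and one in $B_i \setminus F_i$ (captured by the resultant part of $P_{E_i}(A_i)$, using that $B_i \setminus F_i = F_i' \setminus F_i$ in the relevant squarefree-basis sense), or both lie in $B_i \setminus F_i$ (captured by ${\rm ExclP}_{E_i}(A_i)$ as written in \eqref{eq:Excl}). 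Combining: each such polynomial is either order-invariant on $c$ (being in $\mathfrak{P}$, order-invariant by \texttt{CADW}) or constant on $c$ (being in ${\rm ExclP}_{E_i}(A_i)$), hence order-invariant on $c$ in all cases.

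Having established that every element of $P(B_i)$ is order-invariant on $c$, Theorem \ref{thm:McC1} gives that each element of $B_i$ either vanishes identically on $c$ or is analytic delineable on $c$, with sections pairwise disjoint. I would then observe that the factors $h_j$ of $g$ that vanish identically on $c$ contribute nothing obstructing delineability of $g$ (or, more carefully, one notes that $g$ itself does not vanish identically on $c$ unless every $h_j$ does, and in the degenerate case the conclusion is vacuous or handled separately); the factors that do not vanish identically are delineable with compatible sections, so their product, and hence $g = {\rm cont}(g)\prod h_j^{p_j}$ with ${\rm cont}(g)$ a nonzero constant times an order-invariant factor, is delineable on $c$.

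The main obstacle I anticipate is the squarefree-basis bookkeeping in the middle step: being careful that $B_i \setminus F_i$ really is the set of basis elements whose resultants with $F_i$ appear in $P_{E_i}(A_i)$, and that no resultant of an element of $B_i \setminus F_i$ with an element of $F_i$ has been silently dropped — i.e. that \eqref{eq:Excl} genuinely accounts for exactly the complement $P(A_i) \setminus P_{E_i}(A_i)$ at the level of the irreducible basis, not merely at the level of $A_i$. A secondary subtlety is that Theorem \ref{thm:McC1} is stated for $S$ a connected submanifold, so one should note that cells of a CAD are connected submanifolds, and that it yields \emph{order}-invariance of the $h_j$ in their sections, which is stronger than the claimed delineability and costs nothing extra here.
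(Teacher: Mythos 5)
Your proposal is correct and takes essentially the same route as the paper's own proof: the hypotheses make every element of $P(B_i)$ order-invariant on $c$ (it decomposes into a part contained in $\mathfrak{P}$, order-invariant by the correctness of \texttt{CADW}, and the excluded part, constant by hypothesis), and Theorem~\ref{thm:McC1} applied to the irreducible basis $B_i$ on the connected submanifold $c$ then gives the conclusion. You are in fact somewhat more explicit than the paper's terse treatment of the general (non-irreducible-basis) case, particularly in passing from delineability of the factors in $B_i$ back to delineability of $g\in A_i\setminus E_i$ and in flagging the same implicit ``vanishes identically or is delineable'' caveat that the paper's proof also carries.
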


\begin{proof}
Suppose first that $A_i$ and $E_i$ satisfy the simplifying conditions from Section \ref{subsec:TTIProjOp}. Rearranging \eqref{eq:Excl} we see
$P(A_i) = P_{E_i}(A_i) \cup {\rm ExclP}_{E_i}(A_i)$.
However, given the conditions of the lemma, this is equivalent (after the removal of constants which do not affect CAD construction) to $P_{E_i}(A_i)$ on $c$.  So here $P(A_i)$ is a subset of $P_{\mathcal{E}}(\mathcal{A})$ and we can conclude by Theorem \ref{thm:McC1} that all elements of $A_i$ vanish identically on $c$ or are delineable over $c$.

We can draw the same conclusion in the more general case of $A_i$ and $E_i$ because $P(A_i) = C_i \cup P_{F_i}(B_i) \cup {\rm ExclP}_{F_i}(B_i) \subseteq \mathfrak{P}$.
\end{proof}

Hence Lemma \ref{lem:ConstPolys} allows us to extend Algorithm \ref{alg:TTICAD} to deal safely with such cases.  Although we cannot conclude sign-invariance we can conclude delineability and so instead of returning failure we can proceed by extending the lifting set $L_c$ to the full set of polynomials (similar to the case of nullification on a cell of dimension zero dealt with in step \ref{step:addthebi} of Algorithm \ref{alg:TTICAD}).
In particular, this allows for ECs $f_i$ which do not have main variable $x_n$.  Our implementation makes use of this.

Note that the widening of the lifting step here (and also in the case of the zero dimensional cell) is for the generation of the stack over a single cell.  The extension is only performed for the necessary cells thus minimising the cell count while maximising the success of the algorithm, as shown in Example \ref{ex:ExtendingLiftingSet}.  Since a polynomial cannot be nullified everywhere such case distinction will certainly decrease the amount of lifting.

\begin{exmp}
\label{ex:ExtendingLiftingSet}
Consider the polynomials
\[
f = z+yw, \quad
g = yx+1, \quad
h = w(z+1)+1,
\]
the single formula $f=0 \wedge g<0 \wedge h<0$ and assume the variable ordering
$x \prec y \prec z \prec w$.
Using the \texttt{ProjectionCAD} package we can build
a TTICAD with 467 cells for this formula.
The induced CAD of $\R^3$, $D$,  has 169 cells and on five of these cells the polynomial $f$ is nullified.  On these five cells both $y$ and $z$ are zero, with $x$ being either fixed to $0,4$ or belonging to the three intervals splitting at these points.

In this example ExclP$_E(A) = \{z+1\}$ arising from the coefficient of $h$.  This is a constant value of 1 on all five of those cells.  Thus the algorithm is allowed to proceed without error, lifting with respect to all the projection polynomials on these cells.

The lifting set varies from cell to cell in $D$.  For example, the stack over the cell $c_1 \in D$ where $x=y=z=0$ uses three cells, splitting when $w=-1$.  This is required for a CAD invariant with respect to $f$ since $f=0$ on $c$ but $h$ changes sign when $w=-1$.  Compare this with, for example, the cell $c_2 \in D$ where $x=y=0$ and $z<-1$.  The stack over $c_2$ has only one cell, with $w$ free.  The polynomial $h$ will change sign over this cell, but this is not relevant since $f$ will never be zero.  This occurs because $h$ is included in the lifting set only for the five cells of $D$ where $f$ was nullified.
\end{exmp}

In  theory, we could go further and allow this extension to apply when the polynomials in ${\rm ExclP}_{E_i}(A_i)$ are not necessarily all constant, but have no real roots within the cell $c$. However, identifying such cases would, in general, require answering a separate quantifier elimination question, which may not be trivial, and so this has yet to be implemented.

\subsection{Formulating problems for TTICAD}
\label{sec:Formulation}

When using Algorithm \ref{alg:TTICAD} various choices may be required which can have significant effects on the output.  We briefly discuss some of these possibilities here.

\subsubsection{Variable ordering}
\label{subsec:VarOrd}

Algorithm \ref{alg:TTICAD} runs with an ordering on the variables.  As with all CAD algorithms this ordering can have a large effect, even determining whether a computation is feasible.  \cite{BD07} presented problem classes where one ordering gives a constant cell count, and another a cell count doubly exponential in the number of variables.  

Some of the ordering may already be determined.  For example, when using a CAD for quantifier elimination the quantified variables must be eliminated first.  However, even then we are free to change the ordering of the free variables, or those in quantifier blocks.  Various heuristics have been developed to help with this choice:
\begin{description}
\item[\cite{Brown2004}:] 
Choose the next variable to eliminate according to the following criteria on the input, starting with the first and breaking ties with successive ones: 
\begin{enumerate}[(1)]
\item lowest overall degree in the input with respect to the variable;
\item lowest (maximum) total degree of those terms in the input in which it occurs;
\item smallest number of terms in the input which contain the variable.
\end{enumerate}
\item[sotd \citep{DSS04}:] 
Construct the full set of projection polynomials for each ordering and select the ordering whose set has the lowest \emph{sum of total degree} for each of the monomials in each of the polynomials. 
\item[ndrr \citep{BDEW13}:] 
Construct the full projection set and select the one with the lowest \emph{number of distinct real roots} of the univariate polynomials.
\item[fdc \citep{WEBD14}:] Construct all full-dimensional cells for different orderings (requires no algebraic number computations) and select the smallest.
\end{description}   
The Brown heuristic perform well despite being low cost.  A machine learning experiment by \cite{HEWDPB14} showed that each heuristic had classes of examples where it was superior, and that a machine learned choice of heuristic can perform better than any one.

\begin{exmp}
\label{ex:Kahan}
\cite{Kahan87} gives a classic example for algebraic simplification in the presence of branch cuts.  He considers a fluid mechanics problem leading to the relation
\begin{equation}
\label{eq:Kahan}
2\rm{arccosh}\left(\frac{3+2z}3\right)-\rm{arccosh}\left(\frac{5z+12}{3(z+4)}\right)= 2\rm{arccosh}\left(2(z+3)\sqrt{\frac{z+3}{27(z+4)}}\right).
\end{equation}
This is true over all $\mathbb{C}$ except for the small teardrop region shown on the left of Figure \ref{fig:Kahan}: a plot of the imaginary part of the difference between the two sides of (\ref{eq:Kahan}).

Recent work described in \citep{EBDW13} allows for the systematic identification of semi-algebraic formula to describe branch cuts.  This, along with visualisation techniques, now forms part of \textsc{Maple}'s \texttt{FunctionAdvisor} \citep{EC-TBDW14}.  For this example the technology produces the plot on the right of Figure \ref{fig:Kahan} and describes the branch cuts using 7 pairs of equations and inequalities.
With \texttt{ProjectionCAD}, a  sign-invariant CAD for these polynomials has 409 cells using $x \prec y$ and 1143 with $y \prec x$, while a TTICAD has 55 cells using $x \prec y$ and 39 with $y \prec x$.  

So the best choice of variable ordering differs depending on the CAD algorithm used.  For the sign-invariant CAD, all three heuristics described above identify the correct ordering, so it would have been best to use the cheapest, \texttt{Brown}.  However, for the TTICAD only the more expensive \texttt{ndrr} heuristic selects the correct ordering.
\end{exmp}

\begin{figure}[ht]
\begin{center}
\includegraphics[width=4.5cm]{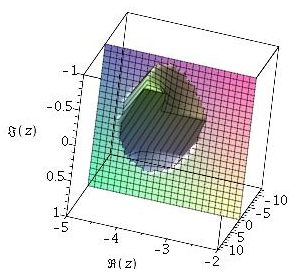}
\hspace*{0.3cm}
\includegraphics[width=4.5cm]{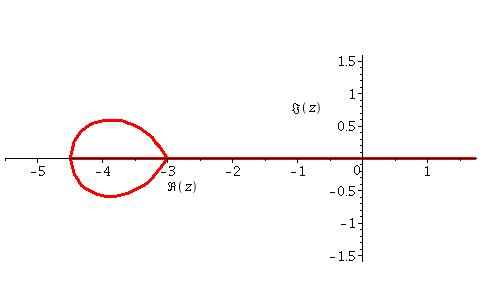}
\end{center}
\caption{Plots relating to equation (\ref{eq:Kahan}) from Example \ref{ex:Kahan}.}
\label{fig:Kahan}
\end{figure}

\subsubsection{Equational constraint designation and logical formulation}
\label{subsec:LogicalForm}

If any QFF has more than one EC present then we must choose which to designate for speical use in Algorithm \ref{alg:TTICAD}.  As with the variable ordering choice, this leads to two different projection sets which could be compared using the $\texttt{sotd}$ and $\texttt{ndrr}$ measures. 

However, note that this situation actually offers more choice than just the designation.  If $\phi_i$ had two ECs then it would be admissible to split it into two QFFs $\phi_{i,1}, \phi_{i,2}$ with one EC assigned to each and the other constraints partitioned between them in any manner.  Admissible because any TTICAD for $\phi_{i,1}, \phi_{i,2}$ is also a TTICAD for $\phi_i$.

This is a generalisation of the following observation: given a formula $\phi$ with two ECs a CAD could be constructed using either the original theory of \cite{McCallum1999} or the  TTICAD algorithm applied to two QFFs.  The latter option would certainly lead to more projection polynomials.  However, a specific EC may have a comparatively large number of intersections with another constraint, in which case, separating them into different QFFs could still offer benefits (with the increase in projection polynomials offset by them having less real roots).  The following is an example of such a situation.

\begin{exmp}
\label{ex:TTIorNot}
Assume $x \prec y$ and consider again
$\Phi := (f_1 = 0 \wedge g_1>0) \vee (f_2 = 0 \wedge g_2<0)$
but this time with polynomials below.  These are plotted in Figure \ref{fig:FormEx} where the solid curve is $f_1$, the solid line $g_1$, the dashed curve $f_2$ and the dashed line $g_2$.
\begin{align*}
f_1 &:= (y-1) - x^3+x^2+x, \qquad \quad
g_1 := y - \tfrac{x}{4}+\tfrac{1}{2}, \\
f_2 &:= (-y-1) - x^3+x^2+x, \qquad \,
g_2 := -y - \tfrac{x}{4}+\tfrac{1}{2},
\end{align*}
If we use the algorithm by \cite{McCallum1999} with the implicit EC $f_1f_2=0$ designated then a CAD is constructed which identifies all the intersections except for $g_1$ with $g_2$
This is visualised by the plot on the left while the plot on the right relates to a TTICAD with two QFFs.  In this case only three 0-cells are identified, with the intersections of $g_2$ with $f_1$ and $g_1$ with $f_2$ ignored.
The TTICAD has 31 cells, compared to 39 cells for the other two.
Both \texttt{sotd} and \texttt{ndrr} identify the smaller CAD, while \texttt{Brown} would not discriminate.
\end{exmp}

\begin{figure}[ht]
\begin{center}
\includegraphics[width=6.0cm]{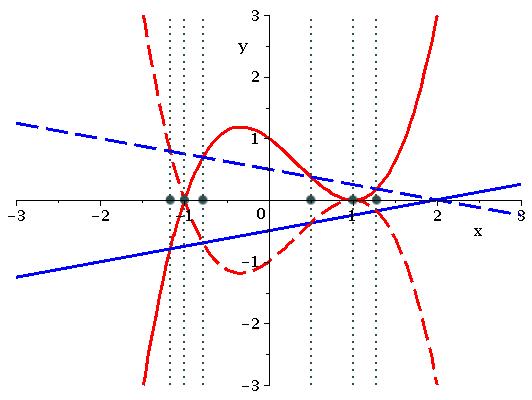}
\hspace*{0.3cm}
\includegraphics[width=6.0cm]{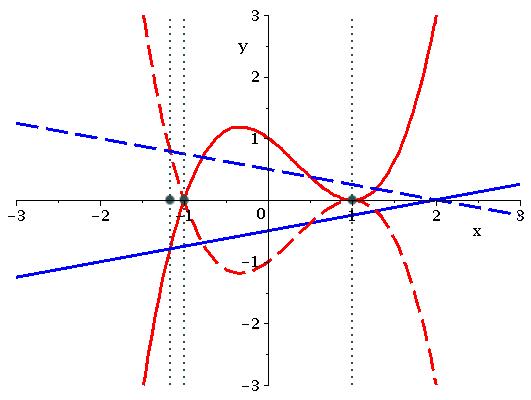}
\end{center}
\caption{Plots visualising the CADs described for Example \ref{ex:TTIorNot}.}
\label{fig:FormEx}
\end{figure}


More details on the issues around the logical formulation of problems for TTICAD is given by \cite{BDEW13}.

\subsubsection{Preconditioning input QFFs}
\label{subsec:Grobner}

Another option available before using Algorithm \ref{alg:TTICAD} is to precondition the input.  \cite{BH91} conducted experiments to see if Gr\"obner basis techniques could help CAD.  They considered replacing any input polynomials which came from equations by a purely lexicographical Gr\"obner basis for them.
In \citep{WBD12_GB} this idea was investigated further with a larger base of problems tested and the idea extended to include Gr\"obner reduction on the other polynomials.
The preconditioning was shown to be highly beneficial in some cases, but detrimental in others.  A simple metric was posited and shown to be a good indicator of when preconditioning was useful

\cite{BDEW13} consider using Gr\"obner preconditioning for TTICAD by constructing bases for each QFF. 
This can produce significant reductions in the TTICAD cell counts and timings.  The benefits are not universal, but measuring the \texttt{sotd} and \texttt{ndrr} of the projection polynomials gives suitable heuristics.

\subsubsection{Summary}
\label{subsec:FormulationSummary}

We have highlighted choices we may need to make before using Algorithm \ref{alg:TTICAD} and its implementation in \texttt{ProjectionCAD}.  The heuristics discussed are also available in that package.  An issue of problem formulation not described in the mathematical derivation of the problem itself.  We note that this can have a great effect on the tractability of using CAD (see \cite{WDEB13} for example).

For the experimental results in Section \ref{sec:Experiment} we use the specified variable ordering for a problem if it has one and otherwise test all possible orderings.  If there are questions of logical formulation or EC designation we use the heuristics discussed here.  No Gr\"obner preconditioning was used as the aim is to analyse the TTICAD theory itself.

It is important to note that the heuristics are just that, and as such can be misled by certain examples. Also, while we have considered these issues individually they of course intersect.  For example, the TTICAD formulation with two QFFs was the best choice in Example \ref{ex:TTIorNot} but if we had assumed the other variable ordering then a single QFF is superior.
Taken together, all these choices of formulation can become combinatorially overwhelming and so methods to reduce this, such as the greedy algorithm in \citep{DSS04} or the suggestion in Section 4 of \cite{BDEW13} are important.

\section{Experimental Results}
\label{sec:Experiment}

\subsection{Description of experiments}
\label{subsec:ERDescription}

Our timings were obtained on a Linux desktop (3.1GHz Intel processor, 8.0Gb total memory) with {\sc Maple} 16 (command line interface), {\sc Mathematica} 9 (graphical interface) and {\sc Qepcad-B} 1.69.
For each experiment we produce a CAD and give the time taken and cell count. The first is an obvious metric while the second is crucial for applications performing operations on each cell.

For {\sc Qepcad} the options {\tt +N500000000} and {\tt +L200000} were provided, the initialization included in the timings and ECs declared when possible (when they are explicit or formed by the product of ECs for the individual QFFs).
In {\sc Mathematica} the output is not a CAD but a formula constructed from one \citep{Strzebonski2010}, with the actual CAD not available to the user.  Cell counts for the algorithms were provided by the author of the {\sc Mathematica} code.

TTICADs are calculated using our \texttt{ProjectionCAD} implementation described in Section \ref{sec:Implementation}.  The results in this section are not presented to claim that our implementation is state of the art, but to demonstrate the power of the TTICAD theory over the conventional theory, and how it can allow even a simple implementation to compete.  Hence the cell counts are of most interest.

The time is measured to the nearest tenth of a second, with a time out ({\bf T}) set at $5000$ seconds.   When {\bf F} occurs it indicates failure due to a theoretical reason such as not well-oriented (in either sense).  The occurrence  of Err indicates an error in 
an internal subroutine of {\sc Maple}'s \texttt{RegularChains} package, used by \texttt{ProjectionCAD}.  This error is not theoretical but a bug, which will be fixed shortly.

We started by considering examples originating from \citep{BH91}.  However these problems (and most others in the literature) involve conjunctions of conditions, chosen as such to make them amenable to existing technologies.
These problems can be tackled using TTICAD, but they do not demonstrate its full strength. Hence we introduce some new examples.  The first set, those denoted with a $\dagger$, are adapted from \citep{BH91} by turning certain conjunctions into disjunctions.  The second set were generated randomly as examples with two QFFs, only one of which has an EC (using random polynomials in 3 variables of degree at most 2).

Two further examples came from the application of branch cut analysis for simplification.
We included Example \ref{ex:Kahan} along with the problem induced by considering the validity of the double angle formulae for arcsin.  Finally we considered the worked examples from Section \ref{subsec:WE1} and the generalisation to three dimensions presented in Example \ref{ex:3dPhi}.
Note that A and B following the problem name indicate different variable orderings.  Full details for all examples can be found in the CAD repository \citep{WBD12_EX}
available freely at \texttt{http://dx.doi.org/10.15125/BATH-00069}.

\subsection{Results}
\label{subsec:ERResults}

We present our results in Table \ref{table:Results}.  For each problem we give the name used in the repository, $n$ the number of variables, $d$ the maximum degree of polynomials involved and $t$ the number of QFFs used for TTICAD.  We then give the time taken (T) and number of cells of $\mathbb{R}^n$ produced (C) by each algorithm.

\begin{table}[b]
\caption{Comparing TTICAD to other CAD types and other CAD implementations.}
\label{table:Results}
\begin{center}
\begin{footnotesize}
\begin{tabular}{lc|rrrrrrrrrr}
  \multicolumn{2}{c|}{Problem}
& \multicolumn{2}{c}{Full-CAD} & \multicolumn{2}{c}{TTICAD}
& \multicolumn{2}{c}{{\sc Qepcad}} & \multicolumn{2}{c}{{\sc Maple}}
& \multicolumn{2}{c}{{\sc Mathematica}} \\
Name & n\,d\,t
                        &  T \,    &  C \,     &  T \,     &  C \,
                        &  T \,    &  C \,     &  T \,     &  C \,
                        &  T \,    &  C \,  \\
\hline
IntA                    & 3\,2\,1
						& 360     &  3707       &  1.7       &  269
						& 4.5       &  825        &  ---       &  Err
						& 0.0 & 3   \\
IntB                    & 3\,2\,1
						& 332     &  2985       &  1.5       &  303
						& 4.5       &  803        &  50.2      &  2795
						& 0.0 & 3   \\
RanA                   & 3\,3\,1
						& 269     &  2093       &  4.5       &  435
						& 4.6       &  1667       &  23.0      &  1267
						& 0.1 & 657 \\
RanB                   & 3\,3\,1
						& 443     &  4097       &  8.1       &  711
						& 5.4       &  2857       &  48.1      &  1517
						& 0.0 & 191 \\
Int$\dagger$A           & 3\,2\,2
						&  360     &  3707       &  68.7      &  575
						&  4.8       &  3723       &  ---       &  Err
						& 0.1 & 601 \\
Int$\dagger$A           & 3\,2\,2
						&  332     &  2985       &  70.0      &  601
						&  4.7       &  3001       &  50.2      &  2795
						& 0.1 & 549 \\
Ran$\dagger$A          & 3\,3\,2
						&  269     &  2093       &  223     &  663
						&  4.6       &  2101       &  23.0      &  1267
						& 0.2 & 808 \\
Ran$\dagger$B          & 3\,3\,2
						&  443     &  4097       &  268     &  1075
						&  142     &  4105       &  48.1      &  1517
						& 0.2 & 1156 \\
Ell$\dagger$A         & 5\,4\,2
						&  ---       &  {\bf F}    &  ---       &  {\bf F}
						&  292     &  500609
						     &  1940    &  81193
						& 11.2 & 80111 \\
Ell$\dagger$B         & 5\,4\,2
						&  \textbf{T}       &  ---        &  \textbf{T}       &  ---
						&  \textbf{T}       &  ---        &  \textbf{T}       &  ---
						& 2911
						&  $\genfrac{}{}{0pt}{1}{16,603,}{\quad 131}$ 
						\\ 
Solo$\dagger$A          & 4\,3\,2
						&  678     &  54037      &  46.1      &  {\bf F}
						&  4.9       &  20307      &  1014    &  54037
						& 0.1 & 260 \\
Solo$\dagger$B          & 4\,3\,2
						&  2009    &  154527 
						     &  123     &  {\bf F}
						&  6.3       &  87469      &  2952    
& 154527
						& 0.1 & 762 \\
Coll$\dagger$A          & 4\,4\,2
						&  265     &  8387       &  267     &  8387
						&  5.0       &  7813       &  376     &  7895
						& 3.6 & 7171 \\
Coll$\dagger$B          & 4\,4\,2
						&  ---       &  Err        &  ---         &  Err
						& \textbf{T}        &  ---        &  \textbf{T}       &  ---
						& 592 
						& $\genfrac{}{}{0pt}{1}{1,234,}{\quad 601}$ 
						\\
Ex\ref{ex:Kahan}A                    & 2\,4\,7
						&  10.7      &  409        &  0.3       &  55
						&  4.8       &  261        &  15.2      &  409
						& 0.0 & 72 \\
Ex\ref{ex:Kahan}B                    & 2\,4\,7
						&  87.9      &  1143       &  0.3       &  39
						&  4.8       &  1143       &  154     &  1143
						& 0.1 & 278 \\
AsinA                   & 2\,4\,4
						&  2.5       &  225        &  0.3       &  57
						&  4.6       &  225        &  3.3       &  225
						& 0.0 & 175 \\
AsinB                   & 2\,4\,4
						&  6.5       &  393        &  0.2       &  25
						&  4.5       &  393        &  7.8       &  393
						& 0.0 & 79 \\
Ex$\Phi$A           & 2\,2\,2
                        &  5.7       &  317        &  1.2       &  105
                        &  4.7       &  249        &  6.3       &  317
                        & 0.0 & 24 \\
Ex$\Phi$B           & 2\,2\,2
                        &  6.1       &  377        &  1.5       &  153
                        &  4.5       &  329        &  7.2       &  377
                        & 0.0 & 175 \\
Ex$\Psi$A           & 2\,2\,2
                        &  5.7       &  317        &  1.6       &  183
                        &  4.9       &  317        &  6.3       &  317
                        &  0.1 &  372  \\
Ex$\Psi$B           & 2\,2\,2
                        &  6.1       &  377        &  1.9       &  233
                        &  4.8       &  377        &  7.2       &  377
                        &  0.1   &  596   \\
Ex\ref{ex:3dPhi}A       & 3\,3\,2
						&  3796    &  5453       &  5.0       &  109
						&  5.3       &  739        &  ---       &  Err
						& 0.1 & 44 \\
Ex\ref{ex:3dPhi}B       & 3\,3\,2
						&  3405    &  6413       &  5.8       &  153
						&  5.7       &  1009       &  ---       &  Err
			 			& 0.1 & 135 \\
Rand1                   & 3\,2\,2
                        &  16.4      &  1533       &  76.8      &  1533
                        &  4.9       &  1535       &  25.7      &  1535
                        &  0.2      &  579        \\
Rand2                   & 3\,2\,2
                        &  838     &  7991       &  132     &  2911
                        &  5.2       &  8023       &  173     &  8023
                        &  0.8      &  2551       \\
Rand3                   & 3\,2\,2
                        &  259     &  8889       &  98.1      &  4005
                        &  5.3       &  8913       &  77.9      &  5061
                        &  0.7      &  3815       \\
Rand4                   & 3\,2\,2
                        &  1442    &  11979      &  167     &  4035
                        &  5.4       &  12031      &  258     &  12031
                        &  1.3       &  4339       \\
Rand5                   & 3\,2\,2
                        &  310     &  11869      &  110     &  4905
                        &  5.5       &  11893      &  104     &  6241
                        &  0.9       &  5041       \\
\end{tabular}
\end{footnotesize}
\end{center}
\end{table}

We first compare our TTICAD implementation with the sign-invariant CAD generated using \texttt{ProjectionCAD} with McCallum's projection operator.  Since these use the same architecture the comparison makes clear the benefits of the TTICAD theory.  The experiments confirm the fact that, since each cell of a TTICAD is a superset of cells from a sign-invariant CAD, the cell count for TTICAD will always be less than or equal to that of a sign-invariant CAD produced using the same implementation.  Ellipse$\dagger$ A is not well-oriented in the sense of \citep{McCallum1998}, and so both methods return {\bf FAIL}.  Solotareff$\dagger$ A and B are well-oriented in this sense but not in the stronger sense of Definition \ref{def:WO} and hence TTICAD fails while the sign-invariant CADs can be produced.  The only example with equal cell counts is Collision$\dagger$ A in which the non-ECs were so simple that the projection polynomials were unchanged.  Examining the results for the worked examples and the 3d generalisation we start to see the true power of TTICAD. In 3D Example A we see a 759-fold reduction in time and a 50-fold reduction in cell count.

We next compare our implementation of TTICAD with the state of the art in CAD: {\sc Qepcad} \citep{Brown2003a}, {\sc Maple} \citep{CMXY09} and {\sc Mathematica} \citep{Strzebonski2006, Strzebonski2010}.  {\sc Mathematica} is the quickest, however TTICAD often produces fewer cells.  We note that {\sc Mathematica}'s algorithm uses powerful heuristics and so actually used Gr\"obner bases on the first two problems, causing the cell counts to be so low.  When all implementations succeed TTICAD usually produces far fewer cells than {\sc Qepcad} or {\sc Maple}, especially impressive given {\sc Qepcad} is producing partial CADs for the quantified problems, while TTICAD is only working with the polynomials involved.

Reasons for the TTICAD implementation struggling to compete on speed may be that the {\sc Mathematica} and {\sc Qepcad} algorithms are implemented directly in {\tt C}, have had more optimization, and in the case of {\sc Mathematica} use validated numerics for lifting \citep{Strzebonski2006}.  However, the strong performance in cell counts is very encouraging, both due its importance for applications where CAD is part of a wider algorithm (such as branch cut analysis) and for the potential if TTICAD theory were implemented elsewhere.

\subsection{The increased benefit of TTICAD}
\label{subsec:IncreasedBenefit}

We finish by demonstrating that the benefit of TTICAD over the existing theory should increase with the number of QFFs and that this benefit is much more pronounced if at least one of these does not have an EC.

\begin{exmp}
\label{ex:IncreasedBenefit}
We consider a family of examples (to which our worked examples belong).
Assume $x \prec y$ and for $j$ a non-negative integer define
\begin{align*}
f_{j+1} &:= (x-4j)^2 + (y-j)^2 - 1, \qquad
g_{j+1} := (x-4j)(y-j) - \tfrac{1}{4}, \\
F_{j+1} &:= \{f_k, g_k\}_{k=1 \dots j+1}, \hspace*{0.7in}  
\textstyle \Phi_{j+1} := \bigvee_{k=1}^{j+1} ( f_{k}=0 \land g_{k}<0 ), \\
\Psi_{j+1} &:= \textstyle \left( \bigvee_{k=1}^{j} ( f_{k}=0 \land g_{k}<0 ) \right) \lor (f_{j+1}<0 \land g_{j+1}<0).
\end{align*}

Then $\Phi_2$ is $\Phi$ from equation (\ref{eqn:ExPhi}) and $\Psi_2$ is $\Psi$ from equation (\ref{eqn:ExPsi}).  Table \ref{tab:IncreasedBenefit} shows the cell counts for various CADs produced for studying the truth of the formulae, and Figure \ref{fig:IB} plots these values.

Both $\Phi_i$ and $\Psi_i$ may be studied by a sign-invariant CAD for the polynomials $F_i$, shown in the column marked \texttt{CADFull}.  The remaining CADs are specific to one formula.
For each formula a TTICAD has been constructed using Algorithm \ref{alg:TTICAD} on the natural sub-formulae created by the disjunctions, while the $\Phi_i$ have also had a CAD constructed using the theory of ECs alone.  This was simulated by running Algorithm \ref{alg:TTICAD} on the single formula declaring the product of the $f_i$s as an EC (column marked \texttt{ECCAD}).  All the proceeding CADs were constructed with \texttt{ProjectionCAD}.  For each formula a CAD has also been created with {\sc Qepcad}, with the product of $f_i$ declared as an EC for $\Phi_i$.

We see that the size of a sign-invariant CAD is grows much faster than the size of a TTICAD.  For a problem with fixed variable ordering the TTICAD theory seems to allow for linear growth in the number of formulae. Considering the \texttt{ECCAD} and \textsc{Qepcad} results shows that when all QFFs have an EC (the $\Phi_i$) using the implicit EC also makes significant savings.  However, it is only when using the improved lifting discussed in Section \ref{sec:ImprovedLifting} that these savings restrict the output to linear growth.
In the case where at least one QFF does not have an EC (the $\Psi_i$) the existing theory of ECs cannot be used.  So while the comparative benefit of TTICAD over sign-invariant CAD is slightly less, the benefit when comparing with the best available previous theory is far greater.
\end{exmp}

\begin{table}[b]
\caption{Table detailing the number of cells in CADs constructed to analyse the truth of the formulae from Example \ref{ex:IncreasedBenefit}.}
\label{tab:IncreasedBenefit}
\centering
\begin{tabular}{c|ccc|c|cc}
\textbf{j} & \multicolumn{3}{c|}{$\Phi_j$} & \textbf{$F_j$} & \multicolumn{2}{c}{$\Psi_j$} \\
           & \texttt{ECCAD} & \texttt{TTICAD} & \textsc{Qepcad} & \texttt{CADFull}
           & \texttt{TTICAD}  & \textsc{Qepcad} \\
\hline
2 & 145 & 105 & 249  & 317  & 183 & 317   \\
3 & 237 & 157 & 509  & 695  & 259 & 695   \\
4 & 329 & 209 & 849  & 1241 & 335 & 1241  \\
5 & 421 & 261 & 1269 & 1979 & 411 & 1979  \\
6 & 513 & 313 & 1769 & 2933 & 487 & 2933  \\
\end{tabular}
\end{table}

\begin{figure}[b]
\caption{Plots of the results from Table \ref{tab:IncreasedBenefit}.   The $x$-axis measures $j$ and the $y$-axis the number of cells.  On the left are the algorithms relating to $\Phi_j$ which from top to bottom are: \texttt{CADFull}, \textsc{Qepcad}, \texttt{ECCAD}, \texttt{TTICAD}.  On the right are the algorithms relating to $\Psi_j$ which from top to bottom are: \texttt{CADFull} and \texttt{TTICAD}. }
\label{fig:IB}
\begin{center}
\includegraphics[width=2.5in]{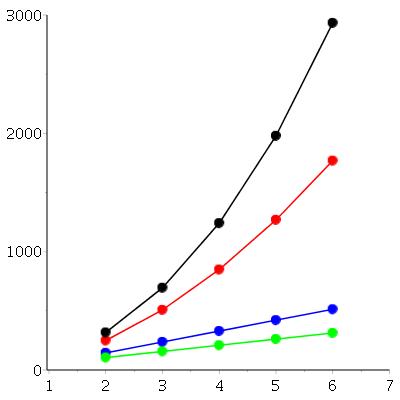}
\includegraphics[width=2.5in]{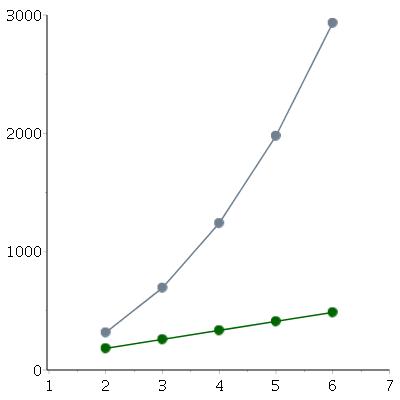}
\end{center}
\end{figure}

\section{Conclusions}
\label{sec:Conclusion}

We have defined truth table invariant CADs and by building on the theory of equational constrains have provided an algorithm to construct these efficiently.  We have extended the our initial work in ISSAC 2013 so that it applies to a general sequence of formulae.  The new complexity analyses show that the benefit over previously applicable CAD projection operators is even greater for the new problems now covered.

The algorithm has been implemented in {\sc Maple} giving promising experimental results.  TTICADs in general have much fewer cells than sign-invariant CADs using the same implementation and we showed that this allows even a simple implementation of TTICAD to compete with the state of the art CAD implementations.  For many problems the TTICAD theory offers the smallest truth-invariant CAD for a parent formula, and there are also classes of problems for which TTICAD is exactly the desired structure.  The benefits of TTICAD increase with the number of QFFs in a problem and is magnified if there is a QFF with no EC (as then the previous theory is not applicable).

\subsection{Future Work}
\label{subsec:FutureWork}

There is scope for optimizing the algorithm and extending it to allow less restrictive input.  Lemma \ref{lem:ConstPolys} gives one extension that is included in our implementation while other possibilities include removing some of the caution implied by well-orientedness, analogous to \citep{Brown2005a}.
Of course, the implementation of TTICAD used here could be optimised in many ways, but more desirable would be for TTICAD to be incorporated into existing state of the art CAD implementations.
In fact, since the ISSAC 2013 publication \cite{BCDEMW14} have presented an algorithm to build TTICADs using the \texttt{RegularChains} technology in {\sc Maple} and work continues in dealing with issues of problem formulation for this approach \citep{EBCDMW14, EBDW14}.

We see several possibilities for the theoretical development of TTICAD:
\begin{itemize}
\item Can we apply the theory recursively instead of only at the top level to make use of bi-equational constraints?  For example by widening the projection operator to allow enough information to conclude order-invariance, as in \citep{McCallum2001}.  

When doing this we may also consider further improvements to the lifting phase as recently discussed in \cite{EBD15}.
\item Can we make use of the ideas behind partial CAD to avoid unnecessary lifting once the truth value of a QFF on a cell is determined?
\item Can we implement the lifting algorithm in parallel?
\item Can we modify the lifting algorithm to only return those cells required for the application?  Approaches which restrict the output to cells of a certain dimension, or cells on a certain variety, are given by \cite{WBDE14}.  
\item Can anything be done when the input is not well oriented?
\end{itemize}


\begin{ack}
We are grateful to A.~Strzebo\'nski for assistance in performing the Mathematica tests and to the anonymous referees of both this and our ISSAC 2013 paper for their useful comments.  
We also thank the rest of the Triangular Sets seminar at Bath (A.~Locatelli, G.~Sankaran and N.~Vorobjov) for their input, and the team at Western University (C.~Chen, M.~Moreno Maza, R.~Xiao and Y.~Xie) for access to their {\sc Maple} code and helpful discussions.
\end{ack}

\bibliographystyle{elsarticle-harv}
\bibliography{CAD}




\end{document}